\documentclass[12pt]{article}
\usepackage{formatting}
\usepackage{shortcuts}

\begin{document}

\begin{titlepage}
\begin{center}
\phantom{ }
\vspace{0.7cm}

{\bf \Large{Quantum microstate counting from Brownian motion:\vskip .3cm
from many-body systems to black holes}}
\vskip 1.4cm
{\large   Enzo Bavaro${}^{\dagger\ddagger}$, Javier M. Mag\'{a}n${}^{\dagger\ddagger}$, and Leandro Martinek${}^{\ddagger}$}
\vskip 1cm

\small{${}^{\dagger}$ \textit{Departament de F\'isica Qu\`antica i Astrof\'isica, Institut de Ci\`encies del Cosmos}}
\vskip -.075cm
{\textit{ Universitat de Barcelona, Mart\'i i Franqu\`es 1, E-08028 Barcelona, Spain}}
\vskip 0.3cm
\small{${}^{\ddagger}$ \textit{Instituto Balseiro, Centro At\'omico Bariloche}}
\vskip -.075cm
{\textit{ 8400-S.C. de Bariloche, R\'io Negro, Argentina}}
\\[1cm]

{
\small{\noindent\texttt{E-mail}:
 \href{mailto:enzo.bavaro@ib.edu.ar}{enzo.bavaro@ib.edu.ar}, \href{mailto:magan@fqa.ub.edu}{magan@fqa.ub.edu}, \href{mailto:leandro.martinek@ib.edu.ar}{leandro.martinek@ib.edu.ar}}}

\end{center}

\vspace{.6cm}

\begin{abstract}
We introduce a new way to produce infinite families of bases of a quantum system's Hilbert space, as well as methods to find its dimension. These families are constructed via Brownian motions in the Hilbert space, defined using disordered, time-dependent couplings. The dimension spanned by them has zero variance over the ensemble of disordered couplings, and it is determined by certain replica partition functions. We apply these methods to finite dimensional $q$-local  systems (spin clusters and SYK) and black holes. For $q$-local systems we find exact expressions at finite $q$, $N$ and $t$, for the replica partition functions, together with universal behavior at large times, and a semiclassical analysis at large-$N$ in appropriate master field variables. The right Hilbert space dimension is obtained for any time $t>0$, $q>1$ and $N>0$. For black holes, the Brownian motions prepare quantitatively generic ensembles of black hole microstates, and the Bekenstein–Hawking entropy is universally reproduced by counting those. There the $n$-replica partition functions are constructed by gluing $n$-traversable wormholes at their future and past. This method demonstrates the robustness of black hole microstate counting methods, avoiding several limitations of previous constructions, including the non-genericity of the microstates and associated interiors, implicit statistics of heavy operators, limited microscopic control over overlaps, and the need for specific limits in the calculation.
\end{abstract}
\vspace{-.1cm}

\end{titlepage}

\cleardoublepage

\setcounter{tocdepth}{2}
\tableofcontents
\noindent\makebox[\linewidth]{\rule{\textwidth}{0.4 pt}}

\cleardoublepage

\section{Introduction}
\label{Sec:intro}

The discovery that black holes obey the entropy law \cite{Bekenstein:1973ur,Hawking:1975vcx}
\be\label{eq:bhe}
S = \frac{A}{4G_{\text{N}}}\,,
\ee
has long raised the question of its quantum statistical origin. It was widely thought that the answer would require understanding the microscopic structure of black hole microstates, an approach most concretely realized in certain classes of supersymmetric black holes in string theory, as pioneered in \cite{Strominger:1996sh}. More recent developments in holography in relation to the black hole information paradox \cite{Penington:2019npb,Almheiri:2019psf,Penington:2019kki,Almheiri:2019hni,Almheiri:2019qdq} have highlighted that non-perturbative effects within the semiclassical theory of gravity might be enough to account for the universality of \eqref{eq:bhe}.

A state-counting derivation of \eqref{eq:bhe} along these lines was proposed in \cite{Balasubramanian:2022gmo,Balasubramanian:2022lnw}. The proposal builds on the West Coast model \cite{Penington:2019kki}, on further developments in the gravitational path integral \cite{Marolf:2020xie,Hsin:2020mfa,Stanford:2020wkf,Sasieta:2022ksu}, and on the construction of black hole interior microstates in two-dimensional gravity \cite{Kourkoulou:2017zaj,Goel:2018ubv,Lin:2022rzw,Lin:2022zxd} and in higher dimensions \cite{Chandra:2022fwi}. Rather than relying on a full non-perturbative formulation of quantum gravity, the key observation is that, once such a theory is assumed to exist, large families of black hole microstates can already be described semiclassically as interior configurations within the low-energy effective field theory of gravity and matter. From this perspective, the problem is not the absence of microstates but rather their overabundance. In \cite{Balasubramanian:2022gmo,Balasubramanian:2022lnw}, this redundancy was made concrete by inserting thin shells of dust particles into the interior of a Schwarzschild black hole, formulating a sharp version of the “bag of gold” paradox. The proposed resolution is that, although the states appear naively orthogonal, they in fact overlap at the non-perturbative level, as revealed by Euclidean spacetime wormhole contributions to the gravitational path integral. These overlaps accumulate and reduce the number of linearly independent states in the family of microstates. The true dimension of the Hilbert space, computed using the gravitational input of these overlaps, reproduces the exponential of the entropy \eqref{eq:bhe}.

Some of these ideas have been further developed in subsequent work, including: \cite{Boruch:2023trc}, which performs a gravitational path integral state-counting derivation of the BPS degeneracy for black holes admitting $\mathcal{N}=2$ super-JT effective descriptions; \cite{Climent:2024trz}, which refines the general construction and emphasizes its universality for charged (and higher dimensional rotating) black holes;  \cite{Barbon:2025bbh}, which discusses possible negative modes of the Euclidean wormholes for the overlaps. See also \cite{Chandra:2023rhx,Maxfield:2023mdj,Boruch:2024kvv,Balasubramanian:2024rek,Balasubramanian:2024yxk,Balasubramanian:2025jeu,Balasubramanian:2025zey,Balasubramanian:2025hns,Geng:2024jmm,Balasubramanian:2025akx,He:2025neu} for related developments. Analogous state-counting methods have also been applied to other gravitational systems, notably to compute the Hilbert space dimension of closed universes; see \cite{Antonini:2023hdh,Antonini:2025ioh}, as well as \cite{Abdalla:2025gzn,Harlow:2025pvj} for further work in this direction. See also \cite{Espindola:2025wjf,deBoer:2025tmh} for related approaches to the state counting derivation of the de Sitter entropy.

Although remarkably successful in reproducing the black hole entropy \eqref{eq:bhe}, all these state-counting derivations come with several potential caveats. We now outline them:

\begin{enumerate}
    {\item {\it Atypical families of black hole microstates}. The proposed bases of microstates are highly atypical. This is clear from the bulk perspective, where the thin shell states are defined by very specific and spherically symmetric initial data in the black hole interior. At the same time, this is precisely what allows calculational control. Does the state-counting derivation of \eqref{eq:bhe} with the gravitational path integral hold for more generic families of states?}

    {\item {\it Microscopics of shell operators and overlaps}. In AdS/CFT, the microscopic description of the thin shell operator is complex and arguably not under full control. Consequently, the microscopic overlaps between thin-shell states are also subtle. The operator is roughly defined as follows \cite{Anous:2016kss}. Given a conformal primary $\phi_\Delta$ with $1 \ll \Delta \ll c$, where $c$ is the central charge of the CFT, one considers an approximately uniform arrangement of $O(c)$ local insertions of this primary around the spatial sphere. For convenience, one may moreover smear the resulting operator in the $s$-wave sector. This yields a surface operator that admits an effective semiclassical description in terms of a homogeneous thin shell of dust fluid with a given total rest mass $m$, which, in AdS units, is proportional to $\Delta$ and to the number of primary insertions on the sphere.
    The dynamics of such operators is not accessible from first principles in the CFT but is, in some sense, inferred from its effective gravitational dual. The issue is that this description is itself approximate and, moreover, corresponds not to a single microscopic operator but to an ensemble, as we discuss in the next item. However, important steps in reproducing the dynamics and statistics of these operators in CFT$_2$ have been given by solving for the relevant conformal block in the large-$c$ expansion \cite{Anous:2016kss, Chen:2024hqu,Liu:2025ztq}, or by relating these operators to Liouville line defects \cite{Chandra:2024vhm}.}

 {\item {\it Implicit statistics}. The gravitational path integral computation of the moments of the Gram matrix of overlaps between different microstates requires the inclusion of Euclidean wormholes. These wormholes, following the seminal works \cite{COLEMAN1988867,GIDDINGS1988890,Saad:2019lba}, impart an implicit statistical character to the calculation. Here, the statistics is interpreted to be hidden in an ensemble of microscopic operators modeling the behavior of the thin shell and compatible with semiclassics \cite{Sasieta:2022ksu,deBoer:2023vsm}. It is then challenging to provide a matching with the microscopic system by independently doing the boundary computations, at least in a way that does not implicitly rely on the gravitational effective field theory to define the CFT ensembles.}

{\item {\it Variance of the Hilbert space dimension}. Due to the previous item, even if small, the Hilbert space dimension might exhibit a variance across the specific microstate family. The gravitational computation of the variance may involve off-shell wormhole contributions, which might only be reliably evaluated in two dimensions and BPS scenarios \cite{Boruch:2023trc}.}

 {\item {\it Large mass limit}. The calculation is performed in the limit $m\ell_{\rm AdS} \rightarrow \infty$ ($\ell_{\rm AdS}$ is the AdS radius), where all shell masses and mass differences are taken to infinity. This limit simplifies the disk and wormhole contributions and ensures that interactions and crossings between different shells are suppressed. However, this limit is quite unphysical, and one would like to understand how the computation works when all shells have finite rest masses.}

 {\item {\it Semiclassical limits}. Recent micro state countings have used the semiclassical approximation, in which the system is described by an effective theory of gravity. This makes it difficult to understand how such a limit affects a putative exact microscopic computation of the dimension. It is then important to have exact methods, and later analyze how the semiclassical approximation interferes in the computation of the dimension.}
 
\end{enumerate}

These caveats are our main motivations. They suggest it would be most desirable to find a universal method to build and count microstates, namely one valid across all quantum systems, and that can be applied to chaotic systems in practice.  The standard approach, based on the counting of energy eigenstates, is not good enough since building particular energy eigenstates is out of reach in systems of interest. Finding such a method would clarify the recent gravitational path integral approaches, as well as demonstrate their robustness.
\vspace{0.5cm}

{\subsection*{This work:}}

The main objective of this work is to propose and develop an approach that avoids the previous issues. This approach originates in the new families of black hole microstates put forward recently in \cite{Magan:2024aet,Magan:2025hce}. More concretely, such microstates are constructed by introducing disordered, time-dependent matter sources in the Euclidean preparation of the states. Such operator insertions can be understood as effecting a certain Brownian motion in the black hole Hilbert space. The corresponding spatial wormholes are long and supported by numerous matter inhomogeneities. These wormholes were referred to as {\it Einstein-Rosen (ER) caterpillars} in \cite{Magan:2024aet,Magan:2025hce}, using a term coined in \cite{LennyTalk}. In this paper, we will refer to them simply as caterpillars, for short. From the bulk perspective, the matter distribution renders the initial data of a caterpillar generic compared to the thin shell states. At the level of the microstates, long caterpillars are quantitatively generic states, insofar as their microscopic wavefunctions exhibit a controlled and arbitrarily high degree of randomness \cite{Magan:2024aet,Magan:2025hce}.

Our first observation is that such a construction of ``pure thermal microstates'', with controlled degrees of randomness and statistics, actually works more broadly, and in principle can be used to build and count microstates in any quantum system. The only modification is the particular set of driving operators (the sources) generating the Brownian motion through the Hilbert space. Given this observation, our simple driving motto is as follows:
\vspace{0.5cm}

\noindent \emph{Present work's motto:} Start from a particular state, and let it follow a Brownian motion through the Hilbert space. At each time, the infinite number of different realizations of the Brownian couplings give rise to an infinite number of different states. Compute the dimension of the Hilbert space spanned by a finite number of those at each time.
\vspace{0.5cm}
\begin{figure}[h]
    \centering
    \includegraphics[width=0.55\linewidth]{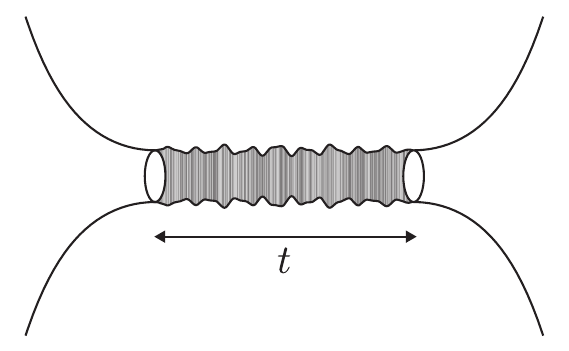}
   \caption{We will count families of caterpillars, corresponding to states of two black holes prepared for a fixed Euclidean time~$t$, with semiclassical ER bridge of length of order~$t$ supported by large numbers of erratic matter inhomogeneities. Using the gravitational path integral, we will show that, for any time (or wormhole length) larger than the scrambling time $t\gg t_\ast$, families of caterpillars span a finite-dimensional Hilbert space of dimension $e^{2S_{\rm BH}}$.}
    \label{fig:caterpi}
\end{figure}

When applied to black holes, this motto will explain their entropy by counting caterpillars, see fig. \ref{fig:caterpi}. But the approach has universal applicability, as described extensively below.

\vspace{0.5cm}

\noindent \textbf{Outline:}
\vspace{0.5cm}

Given this context and objective, the paper is organized as follows. Section~\ref{Sec:framework} introduces the framework: the Brownian motions, the families of states induced by them, the general formulas for evaluating their mutual overlaps, and the state-counting methods. Section~\ref{Sec:applications} considers analytically tractable quantum many-body systems, describing how the counting works both microscopically and through path integral methods in large-$N$ models. In particular, exact formulas at finite $N$ and finite times $t$ for the replica partition functions computing the statistics of overlaps are presented. Section~\ref{Sec:BHs} uses the gravitational path integral to count states of near-extremal black holes, where the analysis can be carried out in detail. There we will compute overlap moments from new wormhole saddles, and use those moments to show that black hole entropy is universally reproduced. We sketch the same computation in higher dimensions for general classes of black holes, where the problem maps to the construction of eternal traversable wormholes via double-trace couplings in higher dimensions. Section~\ref{SecVI} ends with a general discussion, highlighting how the present approach addresses the caveats described above, and commenting on open questions and directions. Finally, several appendices follow with further details.

\vspace{5mm}

\section{General framework: state ensembles and counting methods}
\label{Sec:framework}

This section describes the general framework. In particular we introduce the Brownian motions, the state ensembles induced by them, the particular quantities that need to be analyzed to derive Hilbert space dimensions, and the microstate counting methods.

\subsection{State ensembles}
\label{Sec:ensembles}

We start by reviewing the ensembles of quantum states introduced in \cite{Magan:2025hce,Magan:2024aet}. These ensembles will be used later in the state-counting applications. For convenience, we focus on states on two copies of a quantum system, although the single-copy generalization is straightforward.

\subsubsection{Infinite temperature}
\label{Sec:ensemblesinftemp}

Let $\mathcal{H}$ be a finite-dimensional Hilbert space of dimension $d$. Given a unitary $U(t)$ we consider its Choi state
\be\label{eq:inftempstate} 
\ket{U(t)} = \dfrac{1}{\sqrt{d}} \sum_{i,j=1}^d U(t)_{ij}\ket{i}\ket{j^*}\,.
\ee 
The state \eqref{eq:inftempstate} is a purification of the infinite-temperature density matrix on either subsystem. For us  $t$ parametrizes time as $U(t)$ is the time evolution operator generated via a time-dependent Hamiltonian
\be \label{eq:timeorderedinfinite}
U(t) = \mathsf{T}\left\lbrace \exp\left(-\iw \int_0^t \text{d}s \,H(s)\right)\right\rbrace \,.
\ee 
The ensemble of states at fixed time $t$ is specified by the choice of a disordered Brownian Hamiltonian, defined as follows. Let $\{\mathcal{O}_\alpha\}_{\alpha=1}^K$ be a set of Hermitian operators normalized such that $\mathrm{Tr}(\mathcal{O}_\alpha^2) = d$, ensuring that their eigenvalues are typically of $O(1)$ magnitude.  We consider time-dependent Hamiltonians of the form
\be\label{eq:BrownianH}
H(t) =  \sum_{\alpha =1}^K g_\alpha(t) \mathcal{O}_\alpha\,.
\ee
We further normalize the operators so that the couplings $g_\alpha(t)$ carry dimensions of energy. The couplings are drawn independently from a white-noise correlated Gaussian distribution,
\be\label{eq:statisticscouplings}
\overline{g_\alpha(t)}= 0\,,\quad \overline{g_\alpha(t)g_{\alpha’}(t’)} = J\delta_{\alpha\alpha’} \delta(t-t’)\,,
\ee
where $J$ is an energy scale setting the coupling variance. These couplings define an independent random Hamiltonian at each time. Upon exponentiation, the infinitesimal time-evolution implements a ``random unitary gate'' $\exp(-\iw\delta t\,  H(t))$. The dynamics is thus the continuous-time analog of a random quantum circuit \cite{Lashkari:2011yi}. Via \eqref{eq:inftempstate}, the disorder over couplings generates an ensemble of states at any fixed time $t$.

For rather general choices of drive operators in \eqref{eq:BrownianH}, the associated unitaries $U(t)$ will rapidly become generic. Therefore, the associated ensemble of states \eqref{eq:inftempstate} will also become generic for sufficiently large $t$. For example, if we consider a system of $N$ qubits where $d=2^N$ and restrict the drive operators to be few-body, $U(t)$ will be a scrambling unitary after some $O(\log N)$ time in units of $J^{-1}$ \cite{Lashkari:2011yi}. On the other hand, we do not expect $U(t)$ to become close to a totally generic Haar random unitary for much longer $O(2^N)$ timescales \cite{Jian:2022pvj,Guo:2024zmr}.\footnote{The degree of genericity, or randomness, of an ensemble of unitaries can be naturally characterized in terms of the statistical moments of the ensemble. This leads to the notion of a {\it unitary $k$-design}: an ensemble of unitaries with the same first $k$ moments as the Haar ensemble. Random quantum circuits are known to efficiently generate approximate unitary $k$-designs \cite{Harrow_2009,Brandao:2012zoj}, and $k$ is expected to grow linearly with circuit depth \cite{Emerson_2005,4262758,10.1063/1.2716992,Harrow_2009,Low:2010hyw,Brandao:2012zoj,Hunter-Jones:2019lps,Schuster:2024ajb}, although this linear scaling has not been rigorously proven. Brownian systems provide an advantage because the growth of $k$ can be analyzed explicitly: Jian, Bentsen, and Swingle \cite{Jian:2022pvj} showed that $k$ increases linearly with $t$. In \cite{Guo:2024zmr} this linear growth was shown to be robust with respect to the degree of locality of the Brownian Hamiltonian.} 

Given this state ensemble, we now want to derive concrete expressions for the overlaps. Consider two independent draws of the ensemble of states: $\ket{U_1(t)}$ and $\ket{U_2(t)}$. We will be interested in the overlap
\be\label{eq:overlapdef} 
\bra{U_1(t)}\ket{U_2(t)} = \dfrac{1}{d}\text{Tr}\left(U_1(t)^\dagger U_2(t)\right)\,.
\ee
Instead of facing the intractable problem of computing individual overlaps, we will compute their average value and higher moments over the ensemble of states by explicitly considering the disorder average over the random couplings. The advantage of using Brownian couplings is that the average value of the overlap exactly corresponds to an effective thermal partition function at inverse temperature $2t$,
\be\label{eq:id1} 
\overline{\bra{U_1(t)}\ket{U_2(t)}} = \dfrac{1}{d}\text{Tr}\left(e^{-2t H_{\text{eff}}}\right)\,,
\ee
for the single-replica effective Hamiltonian
\be\label{eq:effHgen1replica}
H_{\text{eff}} = \dfrac{J}{2} \sum_{\alpha=1}^K \Op_\alpha^2\,.
\ee
In Appendix \ref{app:Heff} we provide explicit details on how to derive this kind of relation. The essence is that the Brownian disorder of the real time evolution generates an Euclidean evolution driven by a time-independent effective Hamiltonian (also called the Liouvillian in the context of quantum Markov processes). This will also be true for higher moments, as we will see now, where the effective Hamiltonian in that case is defined in multiple replicas of the original system.

For \eqref{eq:effHgen1replica}, very generally, the drive operators will not all have zero eigenvalues, and $H_{\text{eff}}$ will be positive-definite with an $O(JK)$ minimum eigenvalue, as this follows from Weyl's inequality. Therefore, the average overlap between two different draws decays to zero exponentially fast with an $O(JK)$ exponent. To get an idea of the parametric timescales involved, for the system of $N$ qubits, it is natural to take $JK = O(N)$ so that the energy variance of the Brownian Hamiltonian is extensive in $N$. In that case, the average overlap is $O(\exp(-N))$ at timescales $Jt \gtrsim O(1)$ in the large-$N$ expansion.

Let us now focus on the second moment of the overlap. As represented in Fig. \ref{fig:heff1inftemp}, the square overlap, averaged over couplings, is a two-replica effective thermal partition function \cite{Jian:2022pvj}
\be\label{eq:squareoverlapavginftemp} 
\overline{|\bra{U_1(t)}\ket{U_2(t)}|^2} = \dfrac{1}{d^2}\text{Tr}\left(e^{-2t H_{\text{eff}}^{1\bar{1}}}\right)\,,
\ee
for the two-replica effective Hamiltonian
\be\label{eq:effHgen}
H_{\text{eff}}^{a\bar{b}} = \dfrac{J}{2} \sum_{\alpha=1}^K \left(\Op^{a}_\alpha-\Op_\alpha^{\bar{b}}\,^*\right)^2\,.
\ee
The trace in \eqref{eq:squareoverlapavginftemp} is taken over two replicas of the original Hilbert space, corresponding to the replicas labeled by $a=1$ and $\overline{b}=\overline{1}$, where the different replicas of the operators act in the effective Hamiltonian \eqref{eq:effHgen}. The bar  in the replica label represents the reverse time-orientation of the replica, with the corresponding operator in \eqref{eq:effHgen} appearing transposed, $(\Op_\alpha^{\bar{b}})^T = \Op_\alpha^{\bar{b}}\,^*$.

There are two features of the effective Hamiltonian \eqref{eq:effHgen} worth highlighting. First, it is an interacting Hamiltonian between replicas. The interactions arise solely from the classical disorder average, since the original (pre-averaged) quantity factorizes. Second, these effective interactions are bi-local, of the form $-J\sum_{\alpha=1}^K\Op_\alpha^{a}\Op_\alpha^{\bar{b}}\,^*$, as a direct consequence of the Gaussian nature of the random couplings in our definition of the Brownian Hamiltonian.

It is always true that $H_{\text{eff}}^{a\bar{b}}$ has as a ground state the infinite-temperature thermofield double (TFD) state between the replicas
\be\label{eq:inftempTFD} 
\ket{\mathbf{1}} = \dfrac{1}{\sqrt{d}} \sum_{i=1}^d \ket{i}_a\ket{i^*}_{\,\bar{b}}\,,
\ee
of zero energy, satisfying $H_{\text{eff}}^{a\bar{b}}\ket{\mathbf{1}} = 0$. The star in $|i^*\rangle$ denotes the action of an antiunitary operator. The TFD annihilates each term of the Hamiltonian, since $(\Op^{a}_\alpha-\Op_\alpha^{\bar{b}}\,^*)\ket{\mathbf{1}}=0$. Under very general conditions on the drive operators, namely that there is more than one ($K>1$), that no symmetry is preserved by all of them, and that no subset of degrees of freedom remains non-interacting, $\ket{\mathbf{1}}$ is the unique ground state of \eqref{eq:effHgen}.

As a consequence, at infinite times the thermal partition function in \eqref{eq:squareoverlapavginftemp} receives contributions only from the unique ground state, and the second moment of the overlap approaches
\be
\lim_{t\rightarrow \infty }\overline{|\bra{U_1(t)}\ket{U_2(t)}|^2} = \dfrac{1}{d^2}\,.
\ee
This coincides with the second moment of the overlap $\overline{|\braket{U_1}{U_2}|^2}$ between states generated by independent Haar-random unitaries $U_1$ and $U_2$. In this sense, the states have become generic, at least as diagnosed by the square overlap.\footnote{In the quantum information literature, the second moment of the overlap is known as the \textit{first frame potential} of the ensemble of unitaries (see \cite{Mele2024introductiontohaar} for a review). More generally, the \textit{$k$-th frame potential} is defined as 
$F_k(t) = d^{2k} \, \overline{|\langle U_1(t) | U_2(t) \rangle|^{2k}}$. 
For the Haar ensemble one has $F_k^{\text{Haar}} = k!$ for $k<d$.
}

\begin{figure}[h]
    \centering
    \includegraphics[width=0.55\linewidth]{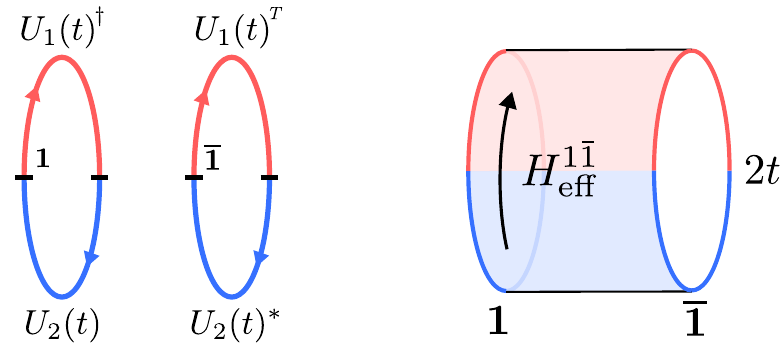}
    \caption{On the left, the quantity $|\mathrm{Tr}(U_1(t)^\dagger U_2(t))|^2$ is proportional to the square of the overlap between the two states. On the right, averaging over the Brownian couplings generates local-in-time effective interactions between the two replicas, resulting in Euclidean evolution governed by a time-independent Hamiltonian $H_{\mathrm{eff}}^{1\bar{1}}$. The average overlap is the thermal partition function of this effective Hamiltonian at inverse temperature $2t$.}
    \label{fig:heff1inftemp}
\end{figure}
 
In the next section, and through a different method in Appendix \ref{app:resol}, we will show that to extract the Hilbert space dimension from a state-counting calculation using the present ensembles, it suffices to consider the cyclic $n$-th moment of the overlap,
\be\label{eq:Zndefinftemp}
Z_n(t) = \overline{\bra{U_1(t)}\ket{U_2(t)}\bra{U_2(t)}\ket{U_3(t)}\cdots\bra{U_n(t)}\ket{U_1(t)}} \,.
\ee
For $n = 1$, we have $Z_1(t) = 1$, while for $n = 2$ we recover the case in \eqref{eq:squareoverlapavginftemp}. For general $n$, the moment \eqref{eq:Zndefinftemp} can be expressed as a matrix element in a $2n$-replica Hilbert space:
\be\label{eq:Znetaoverlapinftemp}
Z_n(t) = \bra{\eta} e^{-t H_{\text{eff}}^{1\bar{1}}} \otimes \cdots \otimes e^{-t H_{\text{eff}}^{n\bar{n}}} \ket{\eta}\,,
\ee
where, for any permutation $\sigma \in \text{Sym}(n)$, the state
\be\label{eq:defpermstate}
\ket{\sigma} \equiv \mathop{\otimes}\limits_{a=1}^n \ket{\mathbf{1}}_{{a\overline{\sigma(a)}}}
\ee
is built from pairwise infinite-temperature TFDs between the $a$-th and $\overline{\sigma(a)}$-th replicas, and $\eta = (12\cdots n)$ is the cyclic permutation. We explicitly write replica labels in $\ket{\mathbf{1}}_{a\overline{b}}$ to denote that the TFD is defined between the $a$-th and $\overline{b}$-th replicas. We illustrate \eqref{eq:Znetaoverlapinftemp} for $n=3$ in Fig. \ref{fig:Z3inftemp}.

\begin{figure}[h]
    \centering
    \includegraphics[width=.92\linewidth]{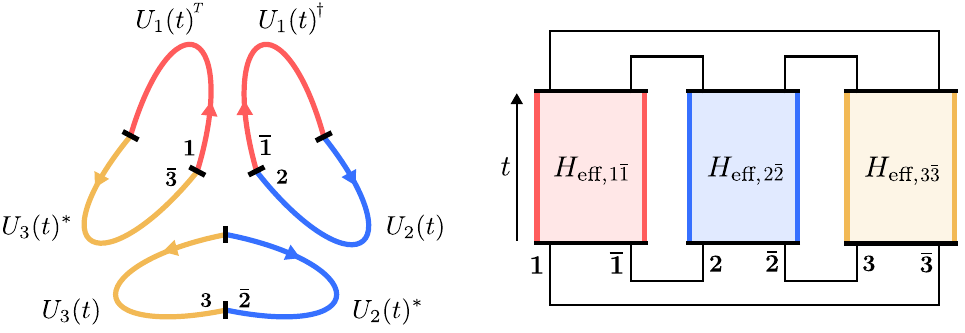}
    \caption{On the left, the quantity $\mathrm{Tr}(U_1(t)^\dagger U_2(t))\,\mathrm{Tr}(U_2(t)^\dagger U_3(t))\,\mathrm{Tr}(U_3(t)^\dagger U_1(t))$ is represented so that the time-ordering is aligned across replicas. On the right, averaging over the Brownian couplings generates local-in-time effective interactions between each pair of replicas, resulting in Euclidean evolution governed by time-independent Hamiltonians $H_{\mathrm{eff}}^{a\bar{a}}$ for $a = 1,2,3$. Following the replica structure of the original quantity, the resulting object $Z_3(t)$ corresponds to the survival amplitude under Euclidean evolution by time $t$ in the cyclic state $\ket{\eta}$ in the six-replica Hilbert space.}
    \label{fig:Z3inftemp}
\end{figure}

At infinite times, under the same assumptions as above, $Z_n(t)$ will generally approach the squared overlap between the cyclic state $\ket{\eta}$ and the identity permutation state $\ket{e}$,
\be
\lim\limits_{t\rightarrow\infty} Z_n(t) = \left|\bra{\eta}\ket{e} \right|^2 = \dfrac{1}{d^{2(n-1)}}\,.
\ee
This value again coincides with the value of the cyclic moment $\overline{\bra{U_1}\ket{U_2}\bra{U_2}\ket{U_3}\cdots\bra{U_n}\ket{U_1}}$ for unitaries $U_i$ drawn independently from the Haar distribution.

\subsubsection{Finite temperature}
\label{sec:finitetempstates}

We now generalize the ensemble of states introduced above to describe a finite-temperature equilibrium situation, where the notion of equilibrium is defined with respect to a reference Hamiltonian $H_0$. Although this generalization introduces many additional complications, it is necessary to describe finite-entropy black holes.

We introduce a finite-temperature operator $W(t,\beta) \equiv e^{-\frac{\beta}{4}H_0}W(t)e^{-\frac{\beta}{4}H_0}$ and consider its corresponding Choi state
\be \label{eq:state1}
\ket{W(t,\beta)} = \dfrac{1}{\sqrt{\mathcal{N}}} \sum_{i,j} e^{-\frac{\beta}{4}(E_i+E_j)}\,W(t)_{ij} \ket{E_i} |E_j^*\rangle\,,
\ee
where $\ket{E_i}$ are eigenstates of $H_0$ with eigenvalues $E_i$. For now, we treat $\beta$ as a free parameter controlling the energy of the state. The normalization factor of the state is
\be\label{eq:norm}
\mathcal{N} = \text{Tr}\left(e^{-\frac{\beta}{2}H_0} W(t)^\dagger e^{-\frac{\beta}{2}H_0} W(t)\right)\,.
\ee

The operator $W(t)$ is chosen to implement a finite-temperature version of the Brownian time evolution. One way to construct it, following \cite{Magan:2025hce,Magan:2024aet}, is to interchange a sequence of infinitesimal Lorentzian random evolutions, alternating with Euclidean cooling steps generated by $H_0$,
\be\label{eq:coolrcircuit}
W(t) = e^{-\delta \beta H_0}  U_I(n_t,\delta t)  e^{-\delta \beta H_0} \cdots e^{-\delta \beta H_0}  U_I(1,\delta t)\,.
\ee
The subscript $I$ in $U_I(j,\delta t)$ indicates that the Lorentzian evolution is driven by a purely ``interaction'' Brownian Hamiltonian of the form \eqref{eq:BrownianH}. The number of cooling steps is $n_t = t / \delta t$, and we consider the continuum limit $\delta t = \delta \beta \to 0$ (so that $n_t \to \infty$), where the relative scaling between $\delta t$ and $\delta \beta$ is absorbed into the definition of the coupling variance $J$ in \eqref{eq:statisticscouplings}.

Formally, in the continuum limit, we can equivalently define $W(t)$ in \eqref{eq:coolrcircuit} as a time-ordered exponential
\be\label{eq:W(t)torder} 
W(t) = \mathsf{T}\left\lbrace \exp\left(-\iw \int_0^t \text{d}s \,H(s)\right)\right\rbrace \,,
\ee  
driven by a time-dependent (non-Hermitian) Hamiltonian\footnote{In \cite{Magan:2025hce,Magan:2024aet} a prescription was given to implement this Hamiltonian in terms of the bare and total Hamiltonians using infinitesimal time-folds.}
\be\label{eq:coolrhamilt}
H(t) = -\iw H_0 + \sum_{\alpha=1}^K g_\alpha(t)\Op_\alpha \,.
\ee
The ensemble of states at fixed time, out of which $\ket{W(t,\beta)}$ is drawn, is induced by the white-noise correlated Gaussian random couplings \eqref{eq:BrownianH}.

Given this new state ensemble, we now want to derive expressions for the overlaps. Consider two independent draws from the ensemble of states: $\ket{W_1(t,\beta)}$ and $\ket{W_2(t,\beta)}$. We are interested in computing the overlap,
\be\label{eq:overlapfintemp} 
\bra{W_1(t,\beta) }\ket{W_2(t,\beta)} = \dfrac{1}{\sqrt{\mathcal{N}_1 \mathcal{N}_2}}\, \mathrm{Tr}\left(e^{-\frac{\beta}{2}H_0} W_1(t)^\dagger e^{-\frac{\beta}{2}H_0} W_2(t)\right)\,.
\ee
As before, we focus on the moments of the overlap \eqref{eq:overlapfintemp} averaged over the ensemble of states, corresponding to different realizations of the random couplings. Compared to the infinite-temperature case, the complication here is that the normalization depends on each realization, which makes the computation of the moments of \eqref{eq:overlapfintemp} more subtle. In what follows, we compute the ``annealed averages’’ of \eqref{eq:overlapfintemp}, where we consider states normalized on average by the average norm over the ensemble $\overline{\mathcal{N}}$. In practice, in \eqref{eq:overlapfintemp} this implies taking the average of the numerator, and replacing the denominator by $\overline{\mathcal{N}}$. As we will later emphasize in Section \ref{Sec:countingframework}, this distinction is irrelevant for the purposes of this paper. Thus, from now on, in the finite-temperature setting an overline will implicitly denote an annealed average.

First of all, note that we can write the relevant quantities in terms of survival amplitudes
\begin{gather}
\mathrm{Tr}\left(e^{-\frac{\beta}{2}H_0} W_1(t)^\dagger e^{-\frac{\beta}{2}H_0} W_2(t)\right) = Z(\beta) \bra{\text{TFD}}  W_2(t) \otimes W_1(t)^*\ket{\text{TFD}}\,,\label{eq:numfintemppreavg}\\[.2cm]
\mathcal{N} = Z(\beta) \bra{\text{TFD}} W(t)\otimes W(t)^*\ket{\text{TFD}}\,.\label{eq:normfintemppreavg}
\end{gather}
Here we have introduced the finite-temperature TFD state
\be
\ket{\text{TFD}} = \dfrac{1}{\sqrt{Z(\beta)}} \sum_{i} e^{-\frac{\beta}{2} E_i}\, \ket{E_i} \ket{E_i^*}\,,
\ee
and $Z(\beta)=\mathrm{Tr}(e^{-\beta H_0})$ is the canonical partition function.

The average value of the overlap is
\be\label{eq:avoverlapfintemp} 
\overline{\bra{W_1(t,\beta) }\ket{W_2(t,\beta)}} = \dfrac{\bra{\text{TFD}}  e^{-tH_{\eff}} \otimes e^{-tH_{\eff}^*}\ket{\text{TFD}}}{\bra{\text{TFD}}  e^{-tH_{\eff}^{1\bar{1}}} \ket{\text{TFD}}}\,,
\ee
with the single- and two-replica finite-temperature effective Hamiltonians given by
\begin{gather}
H_{\mathrm{eff}} = H_0 + \dfrac{J}{2} \sum_{\alpha=1}^K \Op_\alpha^{2}\,,\\
H_{\mathrm{eff}}^{a\bar{b}} = H_0^{a} + H_0^{\bar{b}}\,^* + \dfrac{J}{2} \sum_{\alpha=1}^K \left(\Op_\alpha^{a} - \Op_\alpha^{\bar{b}*} \right)^2\,.\label{eq:tworeplicaeffHfintemp}
\end{gather}
An explicit derivation of the relevant moments of $W(t)$, arriving at these relations is provided in Appendix~\ref{app:Heff}. Below we will use the same notation for the effective Hamiltonians as in the infinite-temperature case; the intended meaning should be clear from the context.

Again, as discussed in the next section, we will be interested in computing the cyclic $n$-th moment of the overlap,
\be\label{eq:znfintemp}
Z_n(t,\beta) = \overline{\bra{W_1(t,\beta) }\ket{W_2(t,\beta)}\bra{W_2(t,\beta) }\ket{W_3(t,\beta)}...\bra{W_n(t,\beta) }\ket{W_1(t,\beta)}}\,.
\ee
This moment can be written in terms of the two-replica effective Hamiltonian as
\be\label{eq:Znetaoverlapfintemp}
Z_n(t,\beta) = \dfrac{\bra{\eta,\beta} e^{-tH_{\eff}^{1\bar{1}}}\otimes ...\otimes e^{-tH_{\eff}^{n\bar{n}}}\ket{\eta,\beta}}{(\bra{\text{TFD}}  e^{-tH_{\eff}^{1\bar{1}}} \ket{\text{TFD}})^n}\,.
\ee
where, for any permutation $\sigma \in \text{Sym}(n)$, we define
\be\label{eq:defpermstatefintemp}
\ket{\sigma,\beta} = \mathop{\otimes}\limits_{a=1}^n \ket{\text{TFD}}_{{a\overline{\sigma(a)}}}\,,
\ee
and again $\eta = (12...n)$ is the cyclic permutation.

\subsection{Microstate counting methods}
\label{Sec:countingframework}

We now outline the methods used to determine the Hilbert space dimension spanned by families of states from their mutual overlaps, and then describe the expectations that follow when applying them to the ensembles introduced in the previous section.

\subsubsection{Gram matrices and Hilbert space dimensions}
\label{Sec:countingframeworkdim}

Starting from a finite-dimensional Hilbert space $\mathcal{H}$, we consider a family of $\Omega$ states in it
\be
\mathsf{F}_\Omega = \lbrace \ket{\Psi_{i}}\in \mathcal{H}: i=1, \dots, \Omega\rbrace\,.
\ee
Our goal is to extract the dimension of the Hilbert subspace that these states span, 
\be
d_\Omega = \text{dim}(\text{span}\lbrace\mathsf{F}_\Omega\rbrace)\,.
\ee
To do this, we consider
the $\Omega \times \Omega$ Gram matrix $G$ of overlaps between states in the family,
\be\label{eq:Gram}
G_{ij} = \bra{\Psi_{i}} \ket{\Psi_{j}}\,, \qquad i,j = 1,\dots ,\Omega\,.
\ee
The dimension of interest is the rank of the Gram matrix
\be\label{eq:Gramdimensionrankker}
d_\Omega = \text{rank}(G) = \Omega - \text{Ker}(G)\,.
\ee 
In turn, the $\text{rank}(G)$ can be evaluated in ways other than direct diagonalization. One particularly convenient method is to analytically continue the moments of the Gram matrix and evaluate its rank via the limit
\be\label{eq:n0limitZn}
d_\Omega = \lim_{n\rightarrow 0}\, \text{Tr}(G^n)\,.
\ee
The approach of extracting the dimension from the overlap moments is especially natural in statistical settings where $G$ itself is a random matrix drawn from an ensemble. In our case, the randomness of $G$ originates from the ensemble of microstates prepared by the explicitly disordered couplings used to define the Brownian motion.

More concretely, we consider families of $\Omega$ states $\mathsf{F}_\Omega = \lbrace \ket{\Psi_{i}}\in \mathcal{H}: i=1, \dots, \Omega\rbrace$ drawn randomly and independently from one of the ensembles described above. We seek to compute $\text{Tr}(G^n)$, and then average the moment over different realization of random couplings. First, consider the limit in which we take an infinite number of draws, $\Omega \rightarrow \infty$, with all other parameters held fixed. In this case, the dominant contribution to $\mathrm{Tr}(G^n)$ comes from terms in the sum with all indices distinct, as terms with repeated indices are suppressed by powers of $1/\Omega$ in their multiplicity. After averaging:
\be\label{zncyc}
\overline{\mathrm{Tr}(G^n)} = \sum_{i_1 \neq \cdots \neq i_n = 1}^\Omega \overline{\bra{\Psi_{i_1}}\ket{\Psi_{i_2}} \bra{\Psi_{i_2} }\ket{\Psi_{i_3}} \cdots \bra{\Psi_{i_n} }\ket{\Psi_{i_1}}} = {\Omega \choose n} Z_n\,,
\ee
where $Z_n$ is the cyclic moment,
\be
Z_n \equiv \overline{\bra{\Psi_1} \ket{\Psi_2} \bra{\Psi_2 }\ket{ \Psi_3} \cdots \bra{\Psi_n }\ket{\Psi_1}}\,.
\ee
Recall that we introduced these cyclic moments for the ensembles of states in Section~\ref{Sec:ensembles}. Using  \eqref{eq:n0limitZn}, we find that the dimension is given by
\be\label{eq:dimensioninfty}
d_\infty = \lim_{\Omega \rightarrow \infty} \lim_{n \rightarrow 0} \dfrac{\Gamma(\Omega+1)}{\Gamma(n+1)\Gamma(\Omega - n + 1)} Z_n = \lim_{n \rightarrow 0} Z_n\,.
\ee
In this context, the moment $Z_n$ can be understood as the $n$-th R\'enyi entropy of the average state $\rho \equiv \overline{\ket{\Psi}\bra{\Psi}}$, given by $Z_n = \text{Tr}(\rho^n)$, as we discuss more properly below. The analytic continuation and corresponding limit $\lim\limits_{n\rightarrow 0}Z_n$ computes the rank of $\rho$.

We note that we do not write an average on the left-hand side intentionally, since it is straightforward to show that the variance of $d_\infty$ vanishes in this limit. This follows from the factorization
\be
\overline{\mathrm{Tr}(G^n)\,\mathrm{Tr}(G^m)} = \overline{\mathrm{Tr}(G^n)}\,\overline{\mathrm{Tr}(G^m)} \qquad \forall\, n, m \geq 1\,,
\ee
in the limit of large $\Omega$, as connected index contractions contribute fewer powers of $\Omega$.

But, in turn, the result \eqref{eq:dimensioninfty} already shows that the dimension of the linear span of any $\mathsf{F}_\Omega$ with $\Omega$ finite, is, with unit probability, bounded above by $d_\infty$, which must be finite if the states live in finite-dimensional Hilbert spaces or in finite-entropy microcanonical windows.

To understand the dimension at finite $\Omega$ generally, one may invoke the following observation. Let $\mathcal{H}$ be a finite-dimensional Hilbert space of dimension $d$, and let 
$\ket{\Psi_1},\ldots,\ket{\Psi_\Omega}\in \mathcal{H}$ be random vectors with joint distribution $\mu$ on $\mathcal{H}^{\otimes \Omega}$. Suppose $\mu$ is absolutely continuous with respect to the product Haar measure on $\mathcal{H}^{\otimes \Omega}$. Then, with unit probability, the rank $d_{\Omega}$ of the Gram matrix satisfies
\be\label{eq:thmdim-dep}
       d_{\Omega} = \min\lbrace \Omega, d \rbrace\,.
\ee
Assume first $\Omega<d$. Then $d_\Omega=\Omega$ unless $\det(G)=0$, where $\det(G)$ is a real-analytic polynomial of total degree $2\Omega$ in the wavefunctions. Since this polynomial is not identically zero, its vanishing set has Haar measure zero, and thus probability zero for any absolutely continuous distribution. If $\Omega\ge d$, then $d_\Omega\le d$ by construction, while the alternative $d_\Omega<d$ would require all $d\times d$ minors of $G$ to vanish. These minors are nontrivial real-analytic polynomials (of degree $2d$ in the wavefunctions), whose common zero set has Haar measure zero.

More intuitively, this is stating something self-evident. Consider say a Hilbert space of $d=2$ and $\Omega=2$. For absolutely continuous distributions, the probability that in a particular realization both vectors are exactly parallel is zero, in the same way that the probability $p(x=y)=0$ for any $y$, for an  absolutely continuous distribution $p(x)$ in the real line. Equivalently, the probability distributions  $\mu_2$ of two vectors on $\mathcal{H}^{\otimes 2}$ that fall outside the scope of this observation must be of the form $\mu_2\sim \delta (\vert\psi_1\rangle-\vert\psi_2\rangle) \mu_1 (\vert\psi_1\rangle)$, and are thus singular as functions.

From this observation it follows that the saturation value \eqref{eq:dimensioninfty} computed from the overlaps indeed corresponds to the Hilbert space dimension where the states live. That is we have
\be\label{eq:transitiondimension}
d_{\Omega} = \min\lbrace \Omega, \lim_{n \rightarrow 0} Z_n\rbrace \,,
\ee
where we omit the average in the left hand side again, as the result holds with unit probability under the stated assumption that the $\Omega$ states are drawn from a smooth probability distribution. Equivalently, although the Gram matrix is random, the dimension spanned by the $\Omega$ states is not random, i.e. it displays zero variance. The ensembles defined above using Brownian motions are expected to be described by absolutely continuous probability distributions in the Hilbert space since they emerge from Gaussian averages over real couplings appearing in unitary operators. We verify the variance of $d_{\Omega}$ vanishes for Brownian circuits numerically in Appendix \ref{app:rank}.

We note that, strictly speaking, the states constructed in Section \ref{Sec:ensemblesinftemp} are maximally entangled and thus belong to a submanifold $\mathcal{M}\subset \mathcal{H}$. But the same observation applies for the induced (Riemannian) measure on $\mathcal M$ because the linear span of $\mathcal{M}$ is the full Hilbert space. Then, another simple consequence is that if we project the states $\mathsf{F}_\Omega = \lbrace \ket{\Psi_{i}}\in \mathcal{H}: i=1, \dots, \Omega\rbrace$ to a proper subspace $\tilde{\mathcal{H}}\subset \mathcal{H}$, where $d_{\tilde{\mathcal{H}}}<d_{\mathcal{H}}$, then the new family
\be
\mathsf{F}_\Omega^{P_{\tilde{\mathcal{H}}}} = \lbrace P_{\tilde{\mathcal{H}}}\ket{\Psi_{i}}\in \mathcal{N}: i=1, \dots, \Omega\rbrace \,,
\ee
where $P_{\tilde{\mathcal{H}}}$ is the projector into the subspace, spans a Hilbert space with dimension
\be\label{eq:transitiondimensionsub}
d_{\Omega} = \min\lbrace \Omega, \lim_{n \rightarrow 0} Z_n^{P_{\mathcal{N}}}\rbrace \,,
\ee
where $Z_n^{P_{\mathcal{N}}}$ is the cyclic moment of the projected states. Again, the reason is that there is measure zero for any state of the ensemble to have a zero projection into any particular direction of the Hilbert subspace if the distribution is absolutely continuous. This is useful to understand the variance of the state-countings in finite temperature scenarios, where a projection into a given microcanonical window is required. We will comment on this in due time below.

We emphasize that none of the considerations in this section depend on whether the states are normalized or not. Indeed, a generic rescaling $\ket{\Psi_i}\rightarrow \lambda_i \ket{\Psi_i}$ leaves the resulting dimension unchanged. While this is also obvious, it makes clear that using annealed averages for the finite-temperature ensembles introduced in Section \ref{sec:finitetempstates}, which amounts to normalizing the states only on average, does not affect the dimension of the Hilbert space spanned.

We end by noting that, since Ref. \cite{Penington:2019kki}, state-countings methods have proceed by computing the resolvent of the Gram matrix of overlaps, allowing a derivation of the density of states and associated rank of such matrix. It is worth emphasizing that such a method is not needed in the present approach. At every time, we can choose as many realizations of states as we want, and the moments of the Gram matrix simplify enormously as in \ref{zncyc}. Equivalently, we do not need to take a limit in which $\Omega$ and $d^2$ grow large with the ratio fixed. Such limit forces us to sum all planar diagrams. Here, since we have microscopic control over the state preparation and associated statistics at finite $t$, and we know the probability is continuous, we can just take the limit $\Omega\gg d^2$, in which the previous replica partition function \ref{zncyc} becomes isolated. For finite $\Omega$ we just use the previous observation. Still, in appendix \ref{app:resol} we describe the resolvent-type computations for the present state ensembles. This lead to an improved Schwinger-Dyson equation that takes into account non-vanishing one-point functions of the inner products.

\subsubsection{Expected behavior for the ensembles}
\label{Sec:countingframeworkexpectation}

In the next sections we will apply the framework of \eqref{eq:transitiondimension} to the ensembles described in Section~\ref{Sec:ensembles}, with the aim of extracting the dimension spanned by their mutual overlaps. Let us build first the most basic intuition for what to expect. Consider the infinite-temperature case of Section \ref{Sec:countingframeworkdim}, where we look at a family of $\Omega$ draws taken at a fixed time $t$,
\be\label{eq:familyinftemp} 
\mathsf{F}_\Omega^t = \lbrace \ket{U_{i}(t)}\in \mathcal{H}: i=1, \dots, \Omega\rbrace\,.
\ee 
The Brownian couplings generate, for any time $t > 0$, an ensemble of states defined by a smooth probability distribution in the Hilbert space $\mathcal{H}$. Then, the probability of linear dependence for an undercomplete set of independent states drawn from this distribution is zero. Therefore, using \eqref{eq:transitiondimension}, we expect that, with unit probability, the family $\mathsf{F}_\Omega^t$ spans a subspace of dimension
\be\label{eq:saturation}
d_\Omega(t) = \min\lbrace \Omega, d^2 \rbrace \qquad \forall\, t > 0\,.
\ee
At $t = 0$ all draws coincide with the infinite-temperature TFD. The distribution at this time is a delta-function peaked around the initial state. We trivially have $d_\Omega(0) = 1$. The interesting development here is that we get a basis of the Hilbert space for any $t > 0$, no matter how small $t$ is. Nevertheless, we note the Choi states $\ket{U_i(t)}$ become mutually distinguishable and attain a very low fidelity only after a finite time.

At finite temperature, the situation is more subtle. We consider the family of states 
\be\label{eq:familyfintemp} 
\mathsf{F}_\Omega^{t,\beta} = \lbrace \ket{W_i(t,\beta)}\in \mathcal{H}: i=1, \dots, \Omega\rbrace\,.
\ee 
It was shown in \cite{Magan:2024aet} that the average state $\overline{\vert \psi\rangle\langle \psi\vert}$ approaches $\vert \rho_{\text{eq}}\rangle\langle \rho_{\text{eq}}\vert$ at large times, with a stationary density matrix $\rho_{\text{eq}}$, whose specific form will be reviewed later below and derived in the examples. Then, in these scenarios we expect that the dimension $d$ appearing in \eqref{eq:saturation} will be replaced by $\text{rank}(\rho_{\text{eq}})$  following the same argument. The variance again vanishes.

We will verify these expectations analytically in several models below and in gravity. Numerical verifications can be found in Appendix~\ref{app:rank}.

\subsubsection{Comparison to random quantum circuits}

Before proceeding, let us compare the above setting to the discrete-time analogue of the Brownian time evolution: a random quantum circuit. Consider a circuit model on $N$ qubits defined by a finite universal gate set $\mathcal{G}$ of two-qubit gates (e.g. the two-qubit Clifford group together with the $\tfrac{\pi}{4}$ phase shift $T$), which may act on different pairs of qubits. Each layer applies $K$ such gates, and time is measured in units of circuit time. At each time step, the gates are chosen independently and uniformly at random from $\mathcal{G}$, each with probability $1/|\mathcal{G}|$.

\begin{figure}[h]
\centering
\includegraphics[width=.7\linewidth]{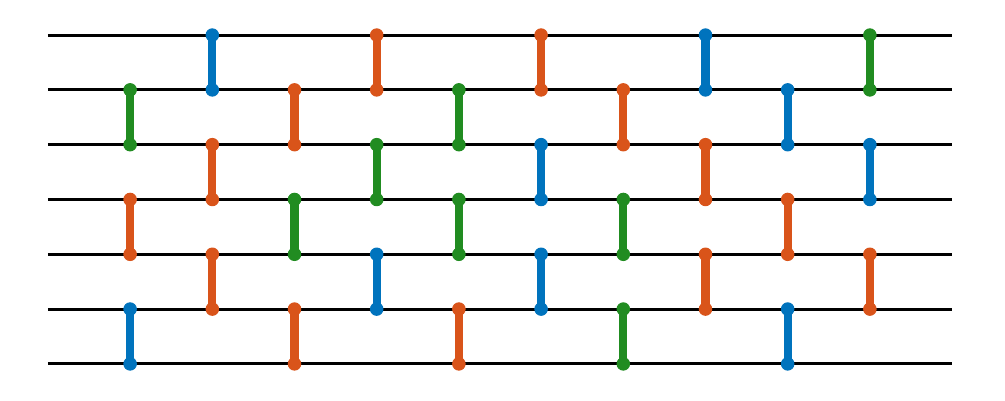}
\caption{A random quantum circuit preparing a maximally entangled state $\ket{U(t)}$.}
\label{fig:randcirc}
\end{figure}

Since $\mathcal{G}$ consists of two-body gates, we take $K \gtrsim O(N)$ so that $O(N)$ qubits interact in every layer. An example of a random quantum circuit with gates restricted to act locally (on nearest neighbors) is shown in Fig.~\ref{fig:randcirc}. Each realization of the random circuit yields a unitary $U(t)$ at circuit depth $t$, which defines a maximally entangled two-sided state $\ket{U(t)}$. Since the random circuit serves as the discrete-time counterpart of Brownian time evolution, this setup provides the discrete-time analog of the infinite-temperature two-sided states introduced in Section~\ref{Sec:ensemblesinftemp}.

Given the discreteness of the quantum circuit, the total number of distinct states that the random circuit can generate at finite time $t$ is bounded above by
\be
\Omega_{\max}(t) = |\mathcal{G}|^{tK}\,.
\ee
Under this restriction, the argument above implies that a family of states obtained by drawing $\Omega$ samples at fixed depth $t$ will, with unit probability, span a Hilbert subspace of dimension
\be\label{eq:saturationdiscreterc}
d_\Omega(t) = \min\left\lbrace \Omega,\ \Omega_{\max}(t),\ 4^N \right\rbrace \qquad \forall\, t \geq 0\,.
\ee
Hence, generating the full Hilbert space requires a circuit depth of at least
\be
t > \dfrac{2N}{K\log_2 |\mathcal{G}|}\,.
\ee
For the example in Fig.~\ref{fig:randcirc} with $N = 7$ and $|\mathcal{G}| = K = 3$, this yields $t \gtrsim 2.9$.

Heuristically, the Brownian motion corresponds to the limit $|\mathcal{G}| \to \infty$, since the continuous random couplings effectively generate an infinite gate set. This allows the Hilbert space to be spanned instantaneously. This provides an intuitive explanation of \ref{eq:saturation}.

\section{Counting states of many-body quantum systems}
\label{Sec:applications}
 
Having established the general framework, in this section we illustrate its use on a sequence of quantum many-body systems. We begin by 
analytically tractable cases where we obtain exact expressions for the overlap moments $Z_n(t)$, and use them to determine the corresponding dimension through \eqref{eq:transitiondimension}.  We the study the problem in the SYK scenario using techniques that are conceptually nearer the gravitational one, i.e. we analyze $Z_n(t)$ using large-$N$ saddle-point approximations to collective field path integrals. This connects directly to the computation of black hole entropy from spacetime wormhole contributions to the overlap moments, the focus of the next section. We end by describing a system-independent derivation of the dimension that applies more generally to families drawn from any of the ensembles introduced in Section~\ref{Sec:ensemblesinftemp}.

\subsection{Microscopic overlap moments and dimension}

In each of the cases below, we study the family
\be
\mathsf{F}_\Omega^t = \{\ket{U_i(t)} \in \mathcal{H} : i = 1, \dots, \Omega\}\,,
\ee
whose members are independent draws from the ensemble defined in Section~\ref{Sec:ensemblesinftemp}, determined by a specific choice of Brownian Hamiltonian. For these ensembles, we compute the overlap moments $Z_n(t)$ through explicit diagonalization of the effective Hamiltonian, and obtain the corresponding dimension $d_\Omega(t)= \text{dim}(\text{span}(\mathsf{F}_\Omega^t))$ via analytic continuation of the moments \eqref{eq:transitiondimension}.

\subsubsection{Brownian GUE}
\label{sec:BrownianGUE}

We start by considering a general Hilbert space $\mathcal{H}$ of dimension $d$, and a time-evolution unitary $U(t)$ generated by time-dependent Hamiltonians drawn from the Brownian Gaussian Unitary Ensemble (Brownian GUE) \cite{Guo:2024zmr}; see also \cite{Tang:2024kpv}. Concretely, at each time step the Hamiltonian is drawn independently from the GUE. Equivalently, the operators $\Op_\alpha$ in \eqref{eq:BrownianH} correspond to the $d^2$ matrix elements in an orthonormal basis $\lbrace \ket{i}\rbrace_{i=1}^d$, with $\alpha = (ij)$ labeling each matrix element. Expanding the Hamiltonian explicitly, we have
\be\label{Hgue}
H(t) = \sum_{i = 1}^d x_{ii}(t) \Pi_{ii} + \sum_{i<j}^d \left[ x_{ij}(t) \left( \frac{\Pi_{ij} + \Pi_{ji}}{\sqrt{2}} \right) + \iw \, y_{ij}(t) \left( \frac{\Pi_{ij} - \Pi_{ji}}{\sqrt{2}} \right) \right]\,,
\ee
where $\Pi_{ij}=\ket{i}\bra{j}$. The real ``couplings'' $x_{ii}(t)$, $x_{ij}(t)$, and $y_{ij}(t)$ are taken to be independent white-noise correlated Gaussian random variables with zero mean and variance
\be\label{eq:couplingsBGUE}
\overline{x_{ij}(t)x_{kl}(t’)} = \overline{y_{ij}(t)y_{kl}(t’)} = \frac{J}{d}\,\delta(t-t’)\,\delta_{ik}\delta_{jl}\,.
\ee
We choose this normalization of $J$ to avoid carrying extra factors of $d$ below.

In this case, the two-replica effective Hamiltonian is proportional to the orthogonal projector into perpendicular direction to the infinite-temperature TFD state \cite{Guo:2024zmr}
\be \label{eq:HeffGUE}
H_{\text{eff}}^{a\bar{b}} = J \left(\mathbf{1} - \ket{\mathbf{1}}\bra{\mathbf{1}}\right)\,.
\ee
Then, as anticipated from general considerations, the infinite-temperature TFD state $\ket{\mathbf{1}}$ is the unique ground state of the effective Hamiltonian \eqref{eq:HeffGUE}. All remaining eigenstates have energy $\Egap = J$, so the first excited level is highly degenerate, with $N_\ast = d^2 - 1$. Consequently, the Euclidean evolution operator is
\be\label{eq:efftevolBGUE}
e^{-tH_{\text{eff}}^{a\bar{b}}} = \left(1 - e^{- J t}\right)\ket{\mathbf{1}}\bra{\mathbf{1}} + e^{- J t}\,\mathbf{1} \,.
\ee
Plugging this into \eqref{eq:squareoverlapavginftemp} and taking the trace, we obtain the second moment of the overlap
\be
Z_2(t) = \dfrac{1}{d^2}\left(1 + (d^2 - 1)e^{-2J t}\right)\,.
\ee
For $n>2$, we can also obtain closed-form expressions for $Z_n(t)$. Starting from \eqref{eq:Znetaoverlapinftemp}, we replace \eqref{eq:efftevolBGUE} for the Euclidean time-evolution operators. In total we have $2^n$ terms, depending on whether we choose the identity or the projector \eqref{eq:efftevolBGUE} in each replica. Grouping terms by the number $p$ of orthogonal projectors, the contribution for a fixed $p<n$ is
\be\label{eq:znbcontrib}
\frac{1}{d^{n+p}} \left(1 - e^{-J t}\right)^p e^{-(n-p)J t}\, d^{n-p} = \left(\frac{1 - e^{-J t}}{d^2}\right)^{p} e^{-(n-p)J t}\,.
\ee
On the left-hand side, the factor $d^{-(n+p)}$ comes from the normalization of $\ket{\eta}$ together with the $p$ orthogonal projectors, whereas the factor $d^{\,n-p}$ counts the number of closed index loops when computing the corresponding matrix element. A separate treatment is required for $p=n$, where the contribution is
\be
\frac{1}{d^{2n}} (1 - e^{-J t})^n\, d^2 = \frac{1}{d^{2(n-1)}}\left(1 - e^{-J t}\right)^n\,.
\ee
The extra factor of $d^2$ relative to \eqref{eq:znbcontrib} comes from having two closed index loops rather than none; here the overlap reduces to $|\bra{\eta}\ket{e}|^2$.

Summing these two types of contributions yields
\begin{align}
Z_n(t) &= \sum_{p=0}^{n-1} \binom{n}{p} \left(\frac{1 - e^{-J t}}{d^2}\right)^{p} e^{-(n-p)J t} + \frac{1}{d^{2(n-1)}}\left(1 - e^{-J t}\right)^n \nonumber \\
&=  \dfrac{1}{d^{2n}} \left[(1 + e^{-J t}(d^2 - 1))^{n} + (d^2 - 1)(1 - e^{-J t})^{n}\right]\,, \label{eq:ZnBrownianGUE}
\end{align}
where in the last step we used the binomial expansion.

To evaluate the dimension $d_\Omega$ spanned by the family $\mathsf{F}_\Omega^t$, we use \eqref{eq:transitiondimension} and substitute the analytic continuation of \eqref{eq:ZnBrownianGUE} as $n\to 0$. There are two cases. At $t=0$, \eqref{eq:ZnBrownianGUE} gives $Z_n(0)=1$, whose analytic continuation is $1$. For $t>0$, the analytic continuation yields $d^{2}$. Altogether,
\be
d_\Omega(t)=\min\!\left\{\Omega,\lim_{n\to 0} Z_n(t)\right\}
=\begin{cases}
1\, & t=0,\\[2pt]
\min\{\Omega,d^2\}\, & t>0.
\end{cases}
\ee
This agrees with the general expectation discussed in Section \ref{Sec:countingframeworkexpectation}.

\subsubsection{Brownian spin cluster}
\label{S:brownianspincluster}

We now consider a system of $N$ spins and restrict the drive operators $\Op_\alpha$ in the time-dependent Hamiltonian to be $q$-body. With this choice, we expand the Hamiltonian in the $q$-body sector of the Pauli basis. Let $\{\sigma_0,\sigma_1,\sigma_2,\sigma_3\}$ denote the single-spin identity and Pauli matrices, and for a multi-index $\alpha=(\alpha_1,\ldots,\alpha_N)\in\{0,1,2,3\}^N$ define the Pauli string $P_\alpha = \sigma_{\alpha_1}\otimes ... \otimes \sigma_{\alpha_N}$. The Hamiltonian takes the form
\be
    H(t) = \sum_{|\alpha|=q} g_\alpha (t) P_\alpha\,,
\label{eq:hamiltonianspincluster}
\ee
where we have defined the weight $|\alpha|$ of the Pauli string as the number of non-identity elements $|\alpha| = \sum_{i=1}^N(1-\delta_{\alpha_i0})$. The couplings $g_\alpha(t)$ are independent Brownian Gaussian couplings with zero mean and variance given by \eqref{eq:statisticscouplings}. In this case, the two-replica effective Hamiltonian is \cite{Jian:2022pvj}
\be\label{eq:Heffspins}
    H_{\text{eff}}^{a\bar{b}} = \frac{J}{2} \sum_{|\alpha|=q} \left (P_\alpha^a - P_\alpha^{\bar{b}\,*} \right )^2\,.
\ee
For $q>0$, it is straightforward to see that $\ket{\mathbf{1}}$ is the unique ground state of the effective Hamiltonian, as expected. To see this, consider the operator $A$ associated with the Choi state $\ket{A}$. Because the Hamiltonian is frustration-free, the ground state condition for $\ket{A}$ is essentially that $[A,P_\alpha]=0$ for all Pauli strings of length $|\alpha| =q$. However, the $q$-body Pauli strings generate the full operator algebra. It follows by Schur’s lemma that the only operator commuting with all of them is the identity up to a scalar.

The diagonalization of $H_{\text{eff}}^{a\bar{b}}$ was performed in \cite{Jian:2022pvj} for $q=4$; we summarize the result here for completeness and generalize to $q$-body. First, the Choi states of the Pauli strings $\ket{P_\alpha}$ are the eigenstates of $H_{\text{eff}}^{a\bar b}$. To see this, note the commutation relation of the Pauli strings
\be
P_\alpha P_\gamma = (-1)^{k(\alpha, \gamma)} P_\gamma P_\alpha\,,
\label{commutation}
\ee
where $0 \leq k(\alpha, \gamma)\leq N$ is the number of sites where $P_\alpha$ and $P_\gamma$ have different non-identity Pauli matrices. For example, for $\alpha = (1,2,3,0,0,…,0)$ and $\gamma = (1,0,1,0,0,…,0)$, we have $k(\alpha, \gamma) = 1$ since there are distinct non-identity Pauli matrices only at the third site. Using this relation, it is straightforward to see that $H_{\eff}^{a\bar{b}}\ket{P_\alpha} = E_\alpha \ket{P_\alpha}$ with
\be\label{eq:specBspinHeff}
E_\alpha = J \sum_{|\gamma| = q } \left (1- (-1)^{k(\gamma, \alpha)} \right)\,.
\ee
Since the states $\ket{P_\alpha}$ form an orthonormal basis of the two-sided Hilbert space, these are all possible eigenstates. 

Now, in \eqref{eq:specBspinHeff} the resulting energy $E_\alpha$ can only depend on permutation-invariant data, i.e., on the size of the Pauli string $|\alpha|$. The spectrum thus consists of values $E(s)$ for $s=0,1,...,N$, where $s$ represents the size of the corresponding strings. We find that:
\begin{lemma}\label{lemma:bspin1}
The Hamiltonian \eqref{eq:Heffspins} is diagonal in the Pauli string basis ${\ket{P_\alpha}}$, with eigenvalues
\be\label{eq:specBspinHefffinal}
E(s) = JK\!\left[1 - {}_2F_1\!\left(-q,\,-s;\,-N;\,\tfrac{4}{3}\right)\right]\,,
\ee
where $s = |\alpha|$ and $K = 3^q {N \choose q}$. Each level is degenerate, with multiplicity $3^s{N \choose s} $.
\end{lemma}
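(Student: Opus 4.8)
The statement says the eigenvalue depends only on the string weight $s=|\alpha|$, so starting from the per‑string formula \eqref{eq:specBspinHeff} the whole problem reduces to evaluating a single signed sum. Writing $K=3^q\binom{N}{q}$ for the total number of weight‑$q$ Pauli strings, I would rewrite \eqref{eq:specBspinHeff} as $E_\alpha = J\,(K - S_\alpha)$ with $S_\alpha \equiv \sum_{|\gamma|=q}(-1)^{k(\gamma,\alpha)}$, where $k(\gamma,\alpha)$ counts the sites at which $\gamma$ and $\alpha$ carry distinct non‑identity Pauli matrices, as in \eqref{commutation}. The entire content of the lemma is then a closed form for $S_\alpha$, plus the count of strings of each weight.

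The plan for $S_\alpha$ is to exploit the fact that the sign factorizes over sites, so I would attach a bookkeeping variable $x$ tracking the weight of $\gamma$ and compute a product of single‑site generating functions. If $\alpha_i=0$ (an identity site, of which there are $N-s$), then $\gamma_i$ never contributes to $k$, and summing over $\gamma_i\in\{0,1,2,3\}$ gives the factor $1+3x$. If $\alpha_i\neq0$ (a support site, of which there are $s$), then $\gamma_i=0$ gives $+1$, $\gamma_i=\alpha_i$ gives $+x$, and the two remaining non‑identity choices each give $-x$, for a total factor $1+x-2x=1-x$. Hence
\[
S_\alpha = [x^q]\,(1+3x)^{N-s}(1-x)^s ,
\]
which manifestly depends only on $s$ and thereby reproduces the permutation‑invariance already noted below \eqref{eq:specBspinHeff}.

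To land precisely on the hypergeometric form, the key algebraic step is the rewriting $(1+3x)^{N-s}(1-x)^s = (1+3x)^{N}\bigl(1-\tfrac{4x}{1+3x}\bigr)^{s}$, using $\tfrac{1-x}{1+3x}=1-\tfrac{4x}{1+3x}$. Expanding the $s$‑th power binomially and extracting the coefficient of $x^q$ via $[x^{q-k}](1+3x)^{N-k}=\binom{N-k}{q-k}3^{q-k}$ yields
\[
S_\alpha=\sum_{k\ge 0}(-1)^k 4^k\,3^{q-k}\binom{s}{k}\binom{N-k}{q-k}.
\]
Converting the binomials to Pochhammer symbols through $(-q)_k=(-1)^k q!/(q-k)!$ and the analogous relations for $(-s)_k,(-N)_k$ identifies this sum with $K\cdot{}_2F_1(-q,-s;-N;\tfrac{4}{3})$, where the factor $K=3^q\binom{N}{q}$ emerges exactly from $\binom{N}{q}\binom{N-k}{q-k}^{-1}$‑type cancellations. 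Then $E(s)=J(K-S_\alpha)=JK\bigl[1-{}_2F_1(-q,-s;-N;\tfrac{4}{3})\bigr]$, as claimed in \eqref{eq:specBspinHefffinal}. Finally, since the $\ket{P_\alpha}$ are an orthonormal eigenbasis, the multiplicity of $E(s)$ is simply the number of weight‑$s$ strings, namely $\binom{N}{s}$ choices of support times $3^s$ non‑identity Paulis, i.e.\ $3^s\binom{N}{s}$.

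I expect the main obstacle to be purely bookkeeping: recognizing the $1-\tfrac{4x}{1+3x}$ manipulation that turns the elementary coefficient extraction into the specific ${}_2F_1$ with argument $4/3$, and carrying out the Pochhammer conversion so that the prefactor collapses neatly to $K$. The combinatorial setup of $S_\alpha$ and the per‑site factorization are routine once the meaning of $k(\gamma,\alpha)$ is unpacked; the only place where a sign or a range error would propagate is in matching the two equivalent forms of $S_\alpha$, so I would sanity‑check the final formula against a small case such as $N=2,\,q=s=1$, where both the direct sum and $K\,{}_2F_1(-1,-1;-2;\tfrac43)$ give $2$.
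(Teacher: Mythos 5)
Your proposal is correct, and it reaches \eqref{eq:specBspinHefffinal} by a genuinely different route than the paper. Both arguments start from the per-string eigenvalue formula \eqref{eq:specBspinHeff} (with diagonality inherited from the commutation relation \eqref{commutation}), and both reduce the problem to the signed sum $S_\alpha=\sum_{|\gamma|=q}(-1)^{k(\gamma,\alpha)}$; the difference is in how that sum is evaluated. The paper enumerates directly: it fixes the overlap size $m$ between the supports of $\gamma$ and $\alpha$, counts the strings with an odd number $l$ of differing Paulis using $\sum_{l\ \mathrm{odd}}\binom{m}{l}2^l=\tfrac12\left(3^m-(-1)^m\right)$, resums the constant piece by Vandermonde, and then identifies the remaining alternating sum $\sum_m(-1)^m\binom{s}{m}\binom{N-s}{q-m}3^{q-m}$ as a Kravchuk polynomial $K_q(s;x,N)$ at $x=\tfrac34$, quoting its known ${}_2F_1$ representation. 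You instead package the whole enumeration into the per-site generating function $S_\alpha=[x^q]\,(1+3x)^{N-s}(1-x)^s$ (which I checked reproduces exactly the paper's intermediate sum), and then obtain the hypergeometric form by hand via the substitution $(1+3x)^{N-s}(1-x)^s=(1+3x)^N\bigl(1-\tfrac{4x}{1+3x}\bigr)^s$, coefficient extraction, and Pochhammer conversion — all of which I verified, including the collapse of the prefactor to $K=3^q\binom{N}{q}$ and your sanity check $N=2$, $q=s=1$. What your route buys is self-containedness: the ${}_2F_1$ with argument $\tfrac43$ is derived rather than recognized, and the generating function automates the odd/even bookkeeping that the paper does explicitly. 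What the paper's route buys is contact with a known family of orthogonal polynomials: the same Kravchuk identification (at $x=\tfrac12$) is reused verbatim for the Brownian SYK case in Lemma \ref{lemma:bSYK1}, so that framing unifies the two models and imports known structure for free; your per-site factors would also adapt to SYK, but the identification would have to be redone.
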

We leave the proof to Appendix \ref{app:exactBrownian}, where we also generalize the analysis for scenarios where there is at most $q$-local interactions.

Note that the spectrum \eqref{eq:specBspinHefffinal} predicts the following gap for the effective Hamiltonian
\be\label{eq:gapBrownianspin} 
\Egap = \dfrac{4q}{3} \,\J \,,
\ee 
where we have defined the rescaled coupling 
\be\label{eq:normJ}
\J  = \dfrac{JK}{N}\,.
\ee
We remark that the above expressions hold at finite $N$ and finite $q$. However, when scaling the system as $N\rightarrow \infty$, keeping $\J $ fixed rather than $J$ is more natural, since this choice makes the Brownian Hamiltonian have an energy variance extensive in the system size $N$. Doing this, the effective Hamiltonian has an $O(1)$ gap in the large-$N$ limit.
\be\label{eq:linearspecapprox} 
E(s) = s \Egap\,,
\ee
for $s\ll \frac{\J N}{\Egap} = \frac{3N}{4q}$, where the single-particle excitations are the Pauli matrices at each site.

We now move to computing the overlap moments $Z_n(t)$ at finite $N$. Given that we have diagonalized the two-replica effective Hamiltonian, we can use that
\be
    e^{-tH_{\text{eff}}^{a\bar{b}}} = \sum_{\alpha} e^{-tE_\alpha} \ket{ P_{\alpha} } \bra{P_{\alpha}} \,,
\ee
to get a formula for the overlap moments using \eqref{eq:Znetaoverlapinftemp},
\be\label{eq:ZntBrownianspin}
 Z_n(t) = \frac{1}{4^{nN}} \sum_{\alpha_1, \ldots \alpha_n} e^{-t(E_{\alpha_1} + \ldots + E_{\alpha_n})} \left |\text{Tr}(P_{\alpha_1} \ldots P_{\alpha_n}) \right |^2\,.
\ee
What remains is to evaluate this expression. Doing this we find that:
\begin{lemma}\label{lemma:bspin2}
The overlap moment \eqref{eq:ZntBrownianspin} is given by the $n$-th spectral moment 
\be\label{eq:lemmabrownianZn(t)}
Z_n(t) = \Tr(\rho_t^n)\,,
\ee
of the operator $\rho_t = \sum_{\alpha} \lambda_t(\alpha)\, \ket{P_\alpha}\bra{P_\alpha}$. Its eigenvalues depend only on the length $s = |\alpha|$ of the Pauli string,
\be\label{eq:ptsspin}
\lambda_t(s) = \frac{1}{4^N} \sum_{l=0}^N 3^l e^{-tE(l)} \sum_{j=0}^s \left (-3 \right )^{-j} \binom{s}{j} \binom{N-s}{l-j} \,.
\ee
\end{lemma}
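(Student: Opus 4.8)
The plan is to start from the microscopic expression \eqref{eq:ZntBrownianspin} and collapse it to a single-replica sum by exploiting the group structure of Pauli strings. The essential input is that a product of Pauli strings is, up to a phase, again a Pauli string, so $\text{Tr}(P_{\alpha_1}\cdots P_{\alpha_n})$ is nonzero (with modulus $2^N$) precisely when the product is proportional to the identity, i.e. when the symplectic sum $\alpha_1\oplus\cdots\oplus\alpha_n$ vanishes. Hence $|\text{Tr}(P_{\alpha_1}\cdots P_{\alpha_n})|^2 = 4^N\,\delta[\alpha_1\oplus\cdots\oplus\alpha_n = 0]$, which already reduces \eqref{eq:ZntBrownianspin} to a constrained sum over the $n$ replica labels.

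Next I would resolve the constraint using the characters of the Pauli group modulo phases, $G\cong(\mathbb{Z}_2)^{2N}$. The commutator sign $(-1)^{k(\beta,\alpha)}$ of \eqref{commutation} is a bicharacter on $G$; since the associated symplectic form is non-degenerate and bilinear mod $2$, one has $\sum_{\beta}(-1)^{k(\beta,\gamma)} = 4^N\delta_{\gamma,0}$ together with $k(\beta,\oplus_a\alpha_a) \equiv \sum_a k(\beta,\alpha_a)\ (\mathrm{mod}\ 2)$. Writing $\delta[\oplus_a\alpha_a=0]=4^{-N}\sum_\beta\prod_a(-1)^{k(\beta,\alpha_a)}$ and substituting into \eqref{eq:ZntBrownianspin}, the sum over the $n$ replica indices factorizes completely into a single power, giving
\be
Z_n(t) = \sum_\beta \mu_t(\beta)^n\,, \qquad \mu_t(\beta)\equiv \frac{1}{4^N}\sum_\alpha (-1)^{k(\beta,\alpha)}\,e^{-tE_\alpha}\,.
\ee
Since $\{\ket{P_\beta}\}$ is orthonormal, this is exactly $\Tr(\rho_t^n)$ for the diagonal operator $\rho_t=\sum_\beta\mu_t(\beta)\ket{P_\beta}\bra{P_\beta}$, establishing \eqref{eq:lemmabrownianZn(t)} with eigenvalues $\lambda_t(\beta)=\mu_t(\beta)$.

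It then remains to evaluate $\mu_t(\beta)$ and show it depends only on $s=|\beta|$, matching \eqref{eq:ptsspin}. Here I would use that $E_\alpha=E(|\alpha|)$ depends only on the string length, group the sum over $\alpha$ by its length $l=|\alpha|$, and perform a site-by-site counting of $(-1)^{k(\beta,\alpha)}$. Letting $j$ be the number of non-identity sites of $\alpha$ that lie in the support of $\beta$ (with $l-j$ non-identity sites outside it), each shared site contributes $\sum_{\text{3 Paulis}}(-1)^{[\neq\beta]}=1-2=-1$, while each of the $l-j$ outside sites contributes a factor $3$, and sites where $\alpha$ is the identity contribute nothing. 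This yields $\sum_{|\alpha|=l}(-1)^{k(\beta,\alpha)}=\sum_j(-1)^j 3^{\,l-j}\binom{s}{j}\binom{N-s}{l-j}$, and using $3^{l}(-3)^{-j}=(-1)^j 3^{l-j}$ reproduces \eqref{eq:ptsspin} after summing over $l$ with the weights $e^{-tE(l)}$.

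The routine but fiddly part is this last step: tracking the three sign-weighted Pauli choices at each shared site and the binomial bookkeeping of which non-identity sites of $\alpha$ lie inside versus outside the support of $\beta$. The conceptual crux, however, is the character identity of the second step, which hinges on the non-degeneracy of the symplectic (commutator) form and is precisely what lets the $n$-replica sum collapse to a single power $\mu_t(\beta)^n$; once that factorization is in hand, the identification with $\Tr(\rho_t^n)$ is immediate.
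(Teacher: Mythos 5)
Your proof is correct and reaches the lemma by a genuinely different route from the paper's. The paper argues site by site: the trace in \eqref{eq:ZntBrownianspin} forces the stack of Pauli matrices at each site to multiply to the identity up to a phase, the number of such products is counted by the recurrence solution $a_m=\tfrac14\left(3^m+3(-1)^m\right)$ of \eqref{eq:Npauli}, and $Z_n(t)$ is then assembled through a multinomial expansion of $\prod_j a_{m_j}$ and a careful reorganization of shared indices, yielding \eqref{eq:Znqlocal}. You instead resolve the identity constraint globally: you write $|\Tr(P_{\alpha_1}\cdots P_{\alpha_n})|^2=4^N\,\delta[\alpha_1\oplus\cdots\oplus\alpha_n=0]$ and expand the delta with the symplectic bicharacter identity $\delta[\gamma=0]=4^{-N}\sum_\beta(-1)^{k(\beta,\gamma)}$, whose validity rests on the non-degeneracy and bilinearity (mod $2$) of the commutator form $k$; bilinearity then factorizes the $n$-replica sum into $\sum_\beta\mu_t(\beta)^n$ in one stroke, and the per-site evaluation of $\mu_t(\beta)$ (a factor $-1$ for each non-identity site of $\alpha$ inside the support of $\beta$, a factor $3$ outside) correctly reproduces \eqref{eq:ptsspin}, including the fact that the eigenvalue depends only on $s=|\beta|$, with multiplicity $3^s\binom{N}{s}$. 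The two arguments are the same combinatorics in different clothing --- the paper's $a_m$ is precisely the per-site average of your character over the four choices of $\beta_i$, and expanding $\prod_j a_{m_j}$ into $2^N$ terms weighted by $3^s$ is your $\beta$-sum grouped by support --- but your organization buys an immediate replica factorization, makes the diagonality of $\rho_t$ and the $|\beta|$-dependence of its spectrum manifest from the outset (using only that $E_\alpha=E(|\alpha|)$, in line with the paper's remark that this is all the lemma needs), and carries over essentially verbatim to the Majorana case of Lemma \ref{lemma:syk2} upon replacing the symplectic form; what the paper's more pedestrian expansion buys in exchange is the explicit closed form \eqref{eq:Znqlocal} with degeneracies displayed, together with the recurrence relation in $N$, \eqref{eq:ZnNrecspin}, obtained as a by-product.
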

For the proof, see Appendix \ref{app:exactBrownian}. As we will see later, $\rho_t$ is in fact a density matrix, although this is not immediately evident from the form of its eigenvalues \eqref{eq:ptsspin}. It represents the average state of the ensemble $\rho_t=\overline{\vert \psi_t\rangle\langle\psi_t\vert}$. We note that the proof of this relation does not depend on the particular functional form of $E(s)$. Equivalently, as long as the effective Hamiltonian is diagonalized in the basis of Pauli strings and the energies depend only on the size of the strings, then the previous formula holds.

We now use Lemma \ref{lemma:bspin2} and take the limit of $n\rightarrow 0$ of \eqref{eq:lemmabrownianZn(t)} to obtain 
\be 
\lim_{n\rightarrow 0} Z_n(t) = {\rm rank}(\rho_t)\,.
\ee
At $t=0$, we have $\lambda_0(s)=\delta_{s,0}$, so $\rank(\rho_0)=1$. 
For $t>0$, none of the $\lambda_t(s)$ vanish, implying $\rank(\rho_t)=4^N$ is maximal. The reason is that we can write $\lambda_t(s)=\sum_l c_l(s)\,e^{-tE(l)}$, with time-independent coefficients $c_l(s)$. 
As a function of the variables $\vec\omega=(e^{-tE(1)},\ldots,e^{-tE(N)})$, 
$\lambda_t(s)$ is real-analytic, and therefore it is either identically zero or its zeros form a discrete (countable) set of measure zero. 
At large times, $\lambda_{\infty}(s)=4^{-N}$ for all $s$, so $\lambda_t(s)$ is not identically zero, 
and any zeros in $\vec\omega$, and hence in $t$, can only occur on a countable, measure-zero subset of times. Using \eqref{eq:transitiondimension} this implies
\be\label{eq:resultbspin}
d_\Omega(t)=\min\!\left\{\Omega,\,\mathrm{rank}(\rho_t)\right\}
=\begin{cases}
1\, & t=0\,,\\[4pt]
\min\{\Omega,\,4^N\}\, & t>0^{(*)}\,,
\end{cases}
\ee
in agreement with the expectation discussed in Section~\ref{Sec:countingframeworkexpectation}. 
The asterisk indicates that the statement holds up to a countable, measure-zero subset of times. This subset might be empty, and in fact we expect it to be empty following the lemma described in the previous section concerning absolutely continuous probability distributions. The fact that the subset is empty and that all eigenvalues are positive for all times can be verified numerically for particular values of $N$ and $q$, but we were not able to give a direct proof.

We can also be more explicit and compute $Z_n(t)$ in the quasi-particle approximation \eqref{eq:linearspecapprox} for the spectrum of the effective Hamiltonian. Since the excitations are dilute in this part of the spectrum, we can, to a good approximation, neglect any possible overlap between different Pauli strings. Therefore, for the trace in \eqref{eq:ZntBrownianspin} not to vanish, the strings must come in pairs. We then have $\tbinom{n}{2} 3N $ fermionic excitations, each with energy $2\Egap$. The overlap moment is the partition function of such fermions
\be\label{eq:Bspinoverlapmom} 
Z_n(t) \approx \dfrac{1}{4^{N(n-1)}}  (1+e^{-2t\Egap})^{\frac{3Nn(n-1)}{2}} \,.
\ee 
The analytic continuation of \eqref{eq:Bspinoverlapmom} likewise leads to~\eqref{eq:resultbspin}. 
Regarding the regime of validity of this approximation, note that it neglects Pauli strings of length at least $O(\sqrt{N})$ which in fact begin to overlap with $O(1)$ probability \cite{erdHos2014phase}. 
Such configurations correspond to states with energies of order $O(N)$ in units of $\J$ and therefore decay rapidly. 
Since the number of these high-energy states is $O(4^N)$, 
their contribution becomes negligible for $\J t \gtrsim O(1)$ when compared to the low-lying modes. 
Consequently, expression~\eqref{eq:Bspinoverlapmom} is valid within this time regime.

\subsubsection{Brownian SYK}
\label{sec:BrownianSYKexact}

The case of Brownian SYK can be treated in close analogy to the previous example, so we will be brief here (see \cite{Saad:2018bqo,Sunderhauf:2019djv,Jian:2022pvj,Milekhin:2023bjv,Guo:2024zmr} for details on the definition of the model). We include it, however, because it will be important later on.  We consider $N$ Majorana fermions $\psi_{i}$ with $i=1,\dots,N$, with $\lbrace \psi_i,\psi_j\rbrace = 2\delta_{ij}$. We restrict the drive operators of the Hamiltonian to be $q$-body Majorana strings $\psi_{\alpha} = \psi_{i_1}\cdots\psi_{i_q}$, where $\alpha = (i_1,\dots,i_q)$ with $i_1 < \cdots < i_q$. We further take $q$ to be even, so that the operators are bosonic. The time-dependent Hamiltonian then takes the form
\be\label{eq:timedepHSYK}
    H(t) = (-\iw)^{\frac{q}{2}}\sum_{i_1 < ... < i_q} g_\alpha (t) \psi_\alpha\,.
\ee
The independent random couplings drawn from \eqref{eq:statisticscouplings} lead to the two-replica effective Hamiltonian \cite{Jian:2022pvj,Guo:2024zmr}
\be\label{eq:HeffSYK}
    H_{\text{eff}}^{a\bar{b}} =  \frac{J }{2}\,\iw^q\sum_{|\alpha|=q} \left (\psi_\alpha^a - (-1)^{\frac{q}{2}}\psi_\alpha^{\bar{b}} \right )^2\,.
\ee
This system has additional subtleties due to global symmetries. First, fermion parity $(-1)^F$ commutes with the Brownian Hamiltonians \eqref{eq:timedepHSYK} and hence with the effective Hamiltonian \eqref{eq:HeffSYK}. Second, for $N \equiv 4 \pmod{8}$ there is an extra discrete ``particle–hole'' symmetry that induces degeneracies within each fermion parity sector. For simplicity we here restrict to $N \not\equiv 4 \pmod{8}$. But we remark that, at the end of the day, the formula we derive below is valid for all $N$, including odd $N$, and we comment about this in Appendix \ref{app:exactBrownian}.

As a result, for $N$ even and $N \not\equiv 4 \pmod{8}$, there are two ground states, given by the infinite-temperature TFD states in each fermion parity sector, $\ket{\mathbf{1},\pm}$. This simply indicates that the time-evolution unitaries $U(t)$ are block-diagonal, and that they only become random within each fermion parity sector, rather than on the full Hilbert space. The overlap moments at infinite time become
\be 
Z_n(\infty) = |\bra{\eta}\ket{e,+}|^2 + |\bra{\eta}\ket{e,-}|^2 = \dfrac{1}{2^{(n-1)(N-1)}}\,,
\ee 
where $\ket{e,\pm}$ is defined as in \eqref{eq:defpermstate} with respect to a fixed fermion parity TFD state $\ket{\mathbf{1},\pm}$, whereas $\ket{\eta}$ is defined with respect to the total TFD $\frac{1}{\sqrt{2}}(\ket{\mathbf{1},+} + \ket{\mathbf{1},-})$.

Building on the intuition from the previous examples, consider now the Majorana string states $\ket{\psi_\alpha,\pm}= \psi^{a}_\alpha \ket{\mathbf{1},\pm}$. We find:

\begin{lemma}\label{lemma:bSYK1}
The Hamiltonian \eqref{eq:HeffSYK} is diagonal in the Majorana basis $\ket{\psi_\alpha,\pm}$, with eigenvalues
\be\label{eq:specSYKHefffinal}
E(s)=JK\Bigl[1-{}_2F_1\!\left(\!-q,\,-s;\,-N;\,2\right)\Bigr]\,,
\ee
where $s = |\alpha|$ and $K = {N \choose q}$. Each level is degenerate, with multiplicity $2{N \choose s}$.
\end{lemma}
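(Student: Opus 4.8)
The plan is to follow the route of the Brownian spin cluster (Lemma~\ref{lemma:bspin1}), replacing the Pauli commutation data by the Majorana one. First I would record the commutation relation of two Majorana strings: for subsets $\alpha,\gamma$ with $|\alpha|=s$ and $|\gamma|=q$, moving each factor of $\psi_\gamma$ through $\psi_\alpha$ and counting sign flips gives $\psi_\alpha\psi_\gamma=(-1)^{sq+|\alpha\cap\gamma|}\psi_\gamma\psi_\alpha$, where shared indices supply the $|\alpha\cap\gamma|$ exponent. Since $q$ is even this collapses to the clean relation $\psi_\alpha\psi_\gamma=(-1)^{|\alpha\cap\gamma|}\psi_\gamma\psi_\alpha$, the direct analog of \eqref{commutation}. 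I would also record $\psi_\gamma^2=(-1)^{q(q-1)/2}$ and the companion fact $\iw^q(-1)^{q(q-1)/2}=\iw^{q^2}=1$ for even $q$, so that the diagonal pieces $\tfrac{J}{2}\iw^q(\psi_\gamma^a)^2$ and $\tfrac{J}{2}\iw^q(\psi_\gamma^{\bar b})^2$ of \eqref{eq:HeffSYK} each contribute $+\tfrac{J}{2}$, summing to the positive constant $JK$.

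Next I would show that the Choi states $\ket{\psi_\alpha,\pm}=\psi_\alpha^a\ket{\mathbf 1,\pm}$ diagonalize \eqref{eq:HeffSYK}. The key input is the ground-state annihilation $(\psi_\gamma^a-(-1)^{q/2}\psi_\gamma^{\bar b})\ket{\mathbf 1,\pm}=0$, i.e.\ $\psi_\gamma^{\bar b}\ket{\mathbf 1,\pm}=(-1)^{q/2}\psi_\gamma^a\ket{\mathbf 1,\pm}$ (the analog of $(\Op_\alpha^a-\Op_\alpha^{\bar b*})\ket{\mathbf 1}=0$). Writing $D_\gamma\equiv\psi_\gamma^a-(-1)^{q/2}\psi_\gamma^{\bar b}$, commuting $\psi_\gamma^{\bar b}$ through $\psi_\alpha^a$ (replicas commute in the conventions set by the $\iw^q$, $(-1)^{q/2}$ factors) and using the annihilation condition yields $D_\gamma\ket{\psi_\alpha,\pm}=(1-(-1)^{|\alpha\cap\gamma|})\,\psi_\gamma^a\ket{\psi_\alpha,\pm}$. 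Applying $D_\gamma$ once more and using $\psi_\gamma^2=(-1)^{q(q-1)/2}$ shows that $\tfrac{J}{2}\iw^q D_\gamma^2$ acts on $\ket{\psi_\alpha,\pm}$ as $2J$ when $|\alpha\cap\gamma|$ is odd and as $0$ when it is even. Summing over $\gamma$ and using $2\cdot(\text{odd indicator})=1-(-1)^{|\alpha\cap\gamma|}$ gives the eigenvalue
\be
E_\alpha=J\sum_{|\gamma|=q}\bigl(1-(-1)^{|\alpha\cap\gamma|}\bigr)\,,
\ee
which depends on $\alpha$ only through $s=|\alpha|$, exactly paralleling \eqref{eq:specBspinHeff}.

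The remaining combinatorial step is to evaluate $\sum_{|\gamma|=q}(-1)^{|\alpha\cap\gamma|}$. Grouping the $q$-subsets $\gamma$ by their overlap $j=|\alpha\cap\gamma|$ gives $\sum_{j}(-1)^j\binom{s}{j}\binom{N-s}{q-j}$, and rewriting the binomials through Pochhammer symbols and resumming identifies this signed Vandermonde-type sum with $K\,{}_2F_1(-q,-s;-N;2)$, where $K=\binom{N}{q}$; the argument $z=2$ is the Majorana counterpart of $z=\tfrac43$ in \eqref{eq:specBspinHefffinal}, reflecting that every shared index flips the sign (whereas for Pauli strings only two of the three non-identity choices anticommute at a shared site). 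This produces $E(s)=JK[\,1-{}_2F_1(-q,-s;-N;2)\,]$. I would sanity-check the identity at $s=0$ (giving $E=0$, the TFD ground state), at $s=1$ (giving the gap $E(1)=2qJK/N$ from ${}_2F_1(-q,-1;-N;2)=1-2q/N$), and at $s=N$ (using ${}_2F_1(-q,-N;-N;z)=(1-z)^q$, which equals $1$ at $z=2$ for even $q$).

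The part I expect to be delicate is the fermion-parity bookkeeping behind the claimed multiplicity $2\binom{N}{s}$ and the completeness of $\{\ket{\psi_\alpha,\pm}\}$. One must justify the two ground states $\ket{\mathbf 1,\pm}$ (one per $(-1)^F$ sector), fix the transpose/conjugation signs defining the $\bar b$-replica action of Majorana strings so that the annihilation condition and the replica-commutation used above hold, and confirm that each size-$s$ subset contributes one eigenstate in each parity sector, yielding the factor of two. This is precisely where the global-symmetry subtleties flagged above enter, including the exceptional particle-hole degeneracies at $N\equiv 4\pmod 8$ and the extension to odd $N$; I would relegate this careful counting to Appendix~\ref{app:exactBrownian}, where the parallel spin-cluster diagonalization is carried out and the continuation of the final formula to all $N$ is verified.
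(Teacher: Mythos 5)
Your proposal is correct and follows essentially the same route as the paper's proof in Appendix \ref{app:exactBrownian}: the eigenvectors are the Majorana string states $\ket{\psi_\alpha,\pm}$, the eigenvalue $E_\alpha = J\sum_{|\gamma|=q}\bigl(1-(-1)^{|\alpha\cap\gamma|}\bigr)$ follows from the string commutation relation, and the overlap sum $\sum_{m}(-1)^m\binom{s}{m}\binom{N-s}{q-m}$ is resummed via Vandermonde together with the Kravchuk-polynomial/hypergeometric identity at argument $z=2$. Your explicit verification of the eigenstate property through the annihilation operators $D_\gamma$ and the $\iw^{q^2}=1$ sign bookkeeping is more detailed than the paper's one-line assertion, but it is the same argument, and your deferral of the parity/multiplicity counting mirrors what the paper itself does.
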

We leave the proof to Appendix \ref{app:exactBrownian}. Note that, in order to consider a basis of eigenstates, we need to restrict $|\alpha|\leq N/2$. This is consistent with $E(N-s) = E(s)$ for \eqref{eq:specSYKHefffinal}.

Defining the normalized coupling $\J = JK/N$, we have that the gap in this case reads
\be 
\Egap  = 2q \,\J \,.
\ee 
In the large-$N$ limit the excitations are additive $E(s) = s\Egap$, as for the Brownian spin cluster. 

We now want to compute the overlap moments $Z_n(t)$. We replace the form of the operator
\be
    e^{-tH_{\text{eff}}^{a\bar{b}}} = \sum_{\alpha} e^{-tE_\alpha} \left(\ket{ \psi_{\alpha},+ } \bra{\psi_{\alpha},+} + \ket{ \psi_{\alpha},-} \bra{\psi_{\alpha},-}  \right)\,,
\ee
into the definition of the moments \eqref{eq:Znetaoverlapinftemp} to get
\be\label{eq:SYKmoment} 
Z_n(t)= \frac{1}{2^{n(N-1)}} \sum_{\alpha_1, \ldots \alpha_n} e^{-t(E_{\alpha_1} + \ldots + E_{\alpha_n})}\sum_{f_1,...,f_n = \pm 1}\left |\text{Tr}(\Pi_{f_1}\psi_{\alpha_1}\Pi_{f_2}\psi_{\alpha_2} \ldots \Pi_{f_n}\psi_{\alpha_n}) \right |^2 \,.
\ee
where $\Pi_f = \tfrac{1}{2}(1 + f\,(-1)^F)$ is the projector onto the $f$ fermion parity sector, and where we recall that the sum over $\alpha$ is restricted to $|\alpha|\leq N/2$. We can simplify this expression a bit. First we notice we can extend the sums over all chains of size $s = 0, \ldots,N$, adding a factor $1/2^n$. Second, one notes that each term has $n$ factors $\Pi_f$ and $n$ chains of fermions between them. If we commute them in order to have all the $n$ factors $\Pi_f$ together (taking into account that $\Pi_f \psi_\alpha = \psi_\alpha \frac{1}{2} (1 +f (-1)^\abs{\alpha} (-1)^F)$), we see that given $\psi_{\alpha_1}, \ldots,\psi_{\alpha_n}$ chains, only $2$ changes of signs will endure since $\Pi_+ \Pi_- = 0$. In those cases, noting $\Pi_f^n = \Pi_f$ we arrive at
\be
Z_n(t) = \frac{1}{d^{2n}} \frac{1}{2} \sum_{\alpha_1, \ldots,\alpha_n}^{\abs{\alpha} \leq N} e^{-t(E_{\alpha_1} + \ldots +  E_{\alpha_n})} \left [\abs{\text{Tr}(\psi_{\alpha_1} \ldots \psi_{\alpha_n})}^2 + \abs{\text{Tr}((-1)^F \psi_{\alpha_1} \ldots \psi_{\alpha_n})}^2 \right ] \, ,
\label{eq:ZnSYKNeven}
\ee
where we can separate the module square of the trace because they cannot be non-zero at the same time. We find the replicated partition function to be an average of two partition functions: $Z_n^+(t)$ (first term) built over the ground state with no fermions, and $Z_n^-(t)$ (second term) built over the ground state with all fermions.\footnote{The label ``$\pm$'' is not related to the parity sector, it will be relevant in the proof.}

What remains is to evaluate this expression. Doing this we find that:
\begin{lemma}\label{lemma:syk2}
The overlap moment \eqref{eq:SYKmoment} is given by the $n$-th spectral moment 
\be\label{eq:lemmasykZn(t)}
Z_n(t) = \Tr(\rho_t^n)\,,
\ee
of the operator $\rho_t = \sum_{\alpha} \lambda_t(\alpha)\, \ket{P_\alpha}\bra{P_\alpha}$, with $\alpha$ even. Its eigenvalues depend only on the length $s = |\alpha|$ of the Majorana string,
\be\label{eq:ptssyk}
\lambda_t(s) = \frac{1}{2^{N}} \sum_{l=0}^{N} e^{-tE(l)} \sum_{j=0}^{s} (-1)^j \binom{s}{j} \binom{N-s}{l-j} \,.
\ee
\end{lemma}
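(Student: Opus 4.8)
The plan is to start from the reduced form \eqref{eq:ZnSYKNeven} of the moment, evaluate the two trace factors using the algebra of Majorana strings, and then linearize the resulting constraints with a discrete Fourier transform over the group of subsets of $\{1,\dots,N\}$ under symmetric difference. This transform factorizes the $n$-fold sum into an $n$-th power, which is exactly the structure of $\Tr(\rho_t^n)$, and simultaneously produces the explicit eigenvalues.

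First I would evaluate the traces. Since $\psi_i^2=1$ and distinct Majoranas anticommute, the product $\psi_{\alpha_1}\cdots\psi_{\alpha_n}$ equals, up to a sign, the single string $\psi_\gamma$ labelled by the symmetric difference $\gamma=\alpha_1\triangle\cdots\triangle\alpha_n$. Because $\Tr(\psi_\gamma)$ vanishes unless $\psi_\gamma\propto\mathbf{1}$, and $\Tr((-1)^F\psi_\gamma)$ vanishes unless $\psi_\gamma\propto(-1)^F$, the first trace in \eqref{eq:ZnSYKNeven} is nonzero (with $\abs{\Tr}^2=d^2=2^N$) precisely when $\gamma=\emptyset$, and the second precisely when $\gamma=\{1,\dots,N\}$. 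These conditions are mutually exclusive for $N\ge 1$, which justifies separating the modulus-squared traces as anticipated in the text. Each term thus collapses to a constrained Boltzmann sum $\sum_{\alpha_1,\dots,\alpha_n}e^{-t\sum_i E(\abs{\alpha_i})}$ with $\gamma$ fixed either to $\emptyset$ or to the full set.

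Next I would enforce the two constraints through the orthogonality of characters of $(\mathbb{Z}_2)^{N}$,
\be
\mathbf{1}[\gamma=c]=\frac{1}{2^{N}}\sum_{\beta}(-1)^{\abs{\beta\cap c}}\,(-1)^{\abs{\beta\cap\gamma}}\,,
\ee
taken with $c=\emptyset$ and with $c=\{1,\dots,N\}$, respectively. Using that $\abs{\beta\cap(\alpha\triangle\alpha')}\equiv\abs{\beta\cap\alpha}+\abs{\beta\cap\alpha'}\ (\mathrm{mod}\ 2)$, the character factorizes as $(-1)^{\abs{\beta\cap\gamma}}=\prod_i(-1)^{\abs{\beta\cap\alpha_i}}$, so each $n$-fold sum becomes $\mu_t(\abs{\beta})^n$ with
\be
\mu_t(\abs{\beta})=\sum_{\alpha}(-1)^{\abs{\beta\cap\alpha}}\,e^{-tE(\abs{\alpha})}\,,
\ee
which depends only on $\abs{\beta}$ by permutation symmetry. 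Collecting prefactors, the empty-set term contributes $2^{-N}\sum_\beta\mu_t(\abs{\beta})^n$ and the full-set term $2^{-N}\sum_\beta(-1)^{\abs{\beta}}\mu_t(\abs{\beta})^n$; their sum carries the factor $\tfrac{1}{2}(1+(-1)^{\abs{\beta}})$, which projects onto even $\abs{\beta}$. Together with the overall normalization of \eqref{eq:ZnSYKNeven} this yields $Z_n(t)=\sum_{\beta:\,\abs{\beta}\,\mathrm{even}}\lambda_t(\abs{\beta})^n$ with $\lambda_t(s)=2^{-N}\mu_t(s)$, which is manifestly $\Tr(\rho_t^n)$ for the diagonal operator $\rho_t$ supported on even strings. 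The explicit form \eqref{eq:ptssyk} then follows by grouping $\alpha$ according to $\abs{\alpha}=l$ and splitting its support between $\beta$ (size $s$) and the complement (size $N-s$), giving the Krawtchouk-type sum $\sum_{j}(-1)^j\binom{s}{j}\binom{N-s}{l-j}$.

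The routine parts are the Majorana algebra and the character identity; the step demanding the most care is the bookkeeping of signs, normalizations, and degeneracies—tracking the constant $\abs{\Tr}^2=2^N$, the two factors of $2^{-N}$, and the multiplicities $\binom{N}{s}$ of strings of fixed length—so that $\lambda_t(s)=2^{-N}\mu_t(s)$ emerges cleanly and the even-parity projection reproduces exactly the support of $\rho_t$. This mirrors the spin-cluster computation of Lemma~\ref{lemma:bspin2}, the only structural change being that the $\mathbb{F}_4^N$ Pauli characters there (responsible for the $3^l$ and $(-3)^{-j}$ in \eqref{eq:ptsspin}) are replaced by the $(\mathbb{Z}_2)^N$ Majorana characters here.
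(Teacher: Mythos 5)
Your proof is correct: it reproduces the paper's intermediate result \eqref{eq:ZnSYKexact} exactly, including the prefactor $2^{-Nn}$, the multiplicities $\binom{N}{s}$, and the restriction to even $s$, from which \eqref{eq:lemmasykZn(t)} and \eqref{eq:ptssyk} follow. The combinatorial core is the same as the paper's, but your packaging is different. The paper proves the lemma by reducing to the machinery of Lemma \ref{lemma:bspin2}: the per-site Pauli counting factor $a_m=\tfrac{1}{4}(3^m+3(-1)^m)$ is replaced by $a_m^{\pm}=\tfrac{1}{2}(1\pm(-1)^m)$ (one traceless operator per site instead of three), the partition functions $Z_n^{\pm}$ are computed with the same multinomial bookkeeping, and the energy symmetry $E(s)=E(N-s)$ is invoked to equate them and land on \eqref{eq:ZnSYKexact}. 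You instead work self-containedly: products of Majorana strings are labelled by symmetric differences, the two trace constraints in \eqref{eq:ZnSYKNeven} become $\gamma=\emptyset$ and $\gamma=\{1,\dots,N\}$, and character orthogonality on $(\mathbb{Z}_2)^N$ factorizes each constrained $n$-fold sum into $\mu_t(\abs{\beta})^n$. Algebraically, expanding $\prod_i a^{\pm}_{m_i}$ over sites is precisely your character sum over $\beta$, so the two computations coincide term by term. The one genuine difference is where the even-parity projection comes from: in your argument it is produced combinatorially by adding the $c=\emptyset$ and $c=\{1,\dots,N\}$ character sums (the factor $\tfrac{1}{2}(1+(-1)^{\abs{\beta}})$), with no reference to the spectrum, whereas the paper's stated route leans on $E(s)=E(N-s)$, which forces the odd-$s$ weights to vanish. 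Your version therefore makes manifest the claim, made in the text right after the lemma, that the result holds for an arbitrary function $E(l)$ of string length alone; this is a small but worthwhile strengthening of the written proof.
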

For the proof, see Appendix \ref{app:exactBrownian}. As before, $\rho_t$ is in fact a density matrix, although this is not immediately evident. It represents the average state of the ensemble. Also, again the proof of this relation does not depend on the particular functional form of $E(s)$. Equivalently, as long as the effective Hamiltonian is diagonalized in the basis of Majorana strings and the energies depend only on the size of the strings, then the previous formula holds. We remark that although we started assuming $N$ even and $N \not\equiv 4 \pmod{8}$, the formula is valid for all $N$. We explain this in Appendix \ref{app:exactBrownian}.

We now use Lemma \ref{lemma:syk2} and take the limit of $n\rightarrow 0$ of \eqref{eq:lemmasykZn(t)} to obtain 
\be 
\lim_{n\rightarrow 0} Z_n(t) = {\rm rank}(\rho_t)\,.
\ee
At $t=0$, we have $\lambda_0(s)=\delta_{s,0}$, so $\rank(\rho_0)=1$. 
For $t>0$, none of the $\lambda_t(s)$ vanish, implying $\rank(\rho_t)=2^{N-1}$ is maximal. The reasons are the same as for the Brownian spin cluster. Here, the $-1$ arises since we are exploring one single fermion parity sector of the Hilbert space. This is half of the Hilbert space of $2N$ Majoranas. Then, using \eqref{eq:transitiondimension} this implies
\be\label{eq:resultbspin2}
d_\Omega(t)=\min\!\left\{\Omega,\,\mathrm{rank}(\rho_t)\right\}
=\begin{cases}
1\, & t=0\,,\\[4pt]
\min\{\Omega,\,2^{N-1}\}\, & t>0^{(*)}\,,
\end{cases}
\ee
in agreement with the expectation discussed in Section~\ref{Sec:countingframeworkexpectation}, and where the asterisk means the same as for the Brownian spin cluster, see above.

In the large-$N$ limit, the excitations are dilute and their energies become additive. Consequently, they must appear in pairs in order to contribute to the trace (the trace of $\psi_i$ also vanishes within each parity sector). Furthermore, for any given configuration, simultaneously changing all $f_i \to -f_i$ leaves the contribution invariant, so there is a two-fold degeneracy associated to the degeneracy of ground states. The overlap moment therefore reduces to the partition function of ${N \choose 2}$ fermionic excitations,
\be\label{eq:SYKmomentapprox} 
Z_n(t) \approx \dfrac{1}{2^{(n-1)(N-1)}}  (1+e^{-2t\Egap})^{\frac{Nn(n-1)}{2}} \,,
\ee
a formula valid for $Jt\gtrsim \mathcal O(1)$.

\subsection{Large-$N$ path integral}\label{sec:largeNsyk}

We now turn to the regime most relevant for gravity, where the overlap moments are accessible only through collective field path integral methods. For concreteness, we focus on the Brownian $q$-SYK model, building on previous work \cite{Jian:2022pvj,Saad:2018bqo,Maldacena:2018lmt,Maldacena:2016hyu}, though similar methods also apply to the large-$N$ Brownian spin cluster \cite{Jian:2022pvj}.

\subsubsection{Method 1: Full path integral}\label{subsec:largeNsyk}

We begin by noting that each replica of the overlap \eqref{eq:overlapdef} can be expressed as a closed path integral of the Majorana fermions, governed by the Hamiltonian \eqref{eq:timedepHSYK}. We seek to compute the replica partition function $Z_n$. For $n$ even, we can transpose half of the traces (for example, the even factors) and use the cyclic property to write
\be
Z_n(t) = \frac{1}{2^{nN/2}} \overline{\text{Tr}(U^\dagger_1(t) U_2(t))\text{Tr}(U^*_2(t) U^t_3(t))\text{Tr}(U^\dagger_3(t) U_4(t)) \ldots,\text{Tr}(U^*_n(t)U^t_1(t))} \, ,
\ee
so that we have $n$ unitaries paired with their complex conjugate. 

The idea is to use the path integral to compute these traces over Majorana fermions, and approximate the path integral in the large-$N$ limit. In each (odd) trace $\text{Tr}(U^\dagger_j(t) U_{j+1}(t))$, we will have two sets of fermions $\{\psi_a^{j,1}(\tau)\}$,$\{\psi_a^{j,2}(\tau)\}$, $a = 1 \ldots,N$, the first one evolving with $-H_j(\tau)$ and the second with $H_{j+1}(\tau)$, where the subindex corresponds to a particular set of couplings $g_\alpha(\tau)$. In each (even) trace $\text{Tr}(U^*_j(t) U^t_{j+1}(t))$, the first set of fermions will evolve with $-H^*_j(\tau)$ and the second one with $H^*_{j+1}(\tau)$. The boundary conditions for both cases are
\be\label{boundc}
\psi_a^{j,1}(t) = \psi_a^{j,2}(0) \, , \;\;\;\; \psi_a^{j,2}(t) = -\psi_a^{j,1}(0) \, .
\ee
So, in general we have $2n$ sets of fermions $\{\psi_a^{j,1}\}, \{\psi_a^{j,2}\}$ ($j = 1, \ldots,n$), with boundary conditions given in pairs. The full path integral then reads 
\be\label{pathav}
\int \mathcal{D}\psi^j_a \exp{\iw \int_0^t d\tau \frac{i}{4} \psi^j_a(\tau) \partial_\tau \psi^j_a(\tau) - \sum_{\abs{\alpha} = q} (-1)^j g_\alpha^j(\tau) \left ((-\iw)^{q/2}\psi_\alpha^{j,1}(\tau) - \iw^{q/2} \psi_\alpha^{j-1,2}(\tau) \right )} \, ,
\ee
where we are summing over $a = 1, \ldots,N$, over $j = 1, \ldots,n$ ($j=0$ is equivalent to $j=n$), and, if we do not indicate, it is implicit the sum over $1,2$. Performing the average over the Brownian couplings, we obtain
\be
Z_n(t) = \frac{1}{2^{nN/2}} \int \mathcal{D}\psi^j_a \exp{-\int_0^t d\tau \frac{1}{4} \psi^j_a(\tau) \partial_\tau \psi^j_a(\tau) + J \sum_{\abs{\alpha} = q} \left (n-\psi_\alpha^{j,1}(\tau) \psi_\alpha^{j-1,2}(\tau) \right )} \, .
\ee
To compute this path integral we follow standard large-$N$ techniques. We first introduce new variables $G_j(\tau), \Sigma_j(\tau)$, $j = 1, \ldots,n$, so that the expression reads
\begin{align}
\notag Z_n(t) \approx \frac{1}{2^{nN/2}} &\int \mathcal{D}G_j \mathcal{D} \Sigma_j \exp{-\frac{1}{4}\int_0^t d\tau ~ 4J K \left (n-\iw^q G_j^q(\tau) \right ) + N G_j(\tau) \Sigma_j(\tau)} \times \\
\times & \int \mathcal{D}\psi^j_a \exp{-\frac{1}{4} \int_0^t d\tau ~ \psi^j_a(\tau) \partial_\tau \psi^j_a(\tau) -\Sigma_j(\tau) \psi_a^{j,1}(\tau)\psi_a^{j-1,2}(\tau)} \, .
\end{align}
The $\approx$ symbol refers to the large-$N$ approximation. In deriving this expression we used that for large-$N$, $K = \binom{N}{q} \approx N^q / q!$, and we just kept the highest order in $N$. Although it is transparent from the previous expression, we notice that if we integrate over the Lagrange multipliers $\Sigma_j(\tau)$, their equations of motion fix $G_j(\tau) = \frac{1}{N} \psi_a^{j,1}(\tau) \psi_a^{j-1,2}(\tau)$. If we introduce this constraint back in the action, we recover the original fermion path integral \ref{pathav}.

The next step is to integrate over the fermions to obtain an effective action for $\Sigma_j$ and $G_j$. This is possible because the expression is a quadratic action over the $2n-$vector $\Psi_a(\tau)$ of Majorana fermions with certain mass matrix $\mathbb{\Sigma}$. This part of the path integral then reads
\be
\label{Majoranadet}
\int \mathcal{D}\Psi_a \exp{-\frac{1}{4} \int_0^t d\tau ~ \Psi_a^\dagger \left ( \partial_\tau - \mathbb{\Sigma}(\tau) \right ) \Psi_a} = \det (\partial_\tau - \mathbb{\Sigma}(\tau))^{N/2} \, .
\ee
The particular form of the mass matrix $\mathbb{\Sigma}$, as well as the details of the computation of the determinant, taking into account the boundary conditions \ref{boundc}, can be found in Appendix \ref{app:ferdet}. The final result is 
\be
\det (\partial_\tau - \mathbb{\Sigma}(\tau))^{N/2} = e^{-I^{\textrm{eff}}_n(t)} \, , \;\;\; I^{\textrm{eff}}_n(t) = -\frac{N}{2} \ln{(2)} - \frac{N}{2} \ln\, (f_n(t))\, ,
\ee
where we have defined 
\be
f_n(t) \equiv \sum_{i=0}^{n/2} \sum_{\abs{\beta} = 2i}\prod_{j \in \beta} \cos{\left ( \int_0^t d\tau \frac{\Sigma_j(\tau)}{2} \right )} - \prod_{j=1}^n \sin{\left ( \int_0^t d\tau \frac{\Sigma_j(\tau)}{2} \right )} \, ,
\label{poldetMajorana}
\ee
where $\abs{\beta} = 2i$ means all the $\binom{n}{2i}$ subsets of $\lbrace \Sigma_1 , \ldots,\Sigma_n \rbrace$ with $2i$ elements. Then, the replica partition function can be written as $Z_n(t) = e^{-I_n(t)}$, where the total effective action is
\be
I_n(t) = I^{\textrm{eff}}_n(t) + n\frac{N}{2} \ln{(2)} + nJKt - \sum_{j=1}^n JK \iw^q \int_0^t d\tau ~ G^q_j(\tau) + \frac{N}{4} \int_0^t d\tau ~ G_j(\tau) \Sigma_j(\tau) \, .
\ee
To find the contribution for large $t$, we remember that for sufficiently long times, on average the effective Hamiltonian drives the circuit to its vacuum, which is the infinite temperature thermofield double between replicas. In such a state the correlations between fermions in different replicas saturate to a constant value. This suggests that to find the solution at large $t$, it is natural to assume a scenario where $\Sigma_i, G_i$ are constant. The same assumption was taken in Ref. \cite{Saad:2018bqo} to understand the spectral form factor in Brownian SYK. As explained there, such solutions play the role of ``wormholes'' in this system.

To understand the structure of the solutions, we start by analyzing the simplest case, namely $n=2$. The full effective action simplifies to
\be
I_2(t) = N \ln{(2)} - N \ln{\left (2 \cos{ \left (\frac{(\Sigma_1 + \Sigma_2) t}{4} \right )} \right )} + 2JKt - \iw^q JKt \left (G_1^q + G_2^q \right ) + \frac{N}{4} t \left ( G_1 \Sigma_1 + G_2\Sigma_2 \right ) \, .
\ee
Given that $\Sigma_1, \Sigma_2$ are (constant) complex numbers, for large $t$ we have 
\be
\ln{\left (2 \cos{ \left (\frac{(\Sigma_1 + \Sigma_2) t}{4} \right )} \right )} \approx \pm \iw \frac{(\Sigma_1 + \Sigma_2) t}{4} \, ,
\ee
where the $\pm$ depends on the sign of the imaginary part of $\Sigma_1 + \Sigma_2$ (opposite to it). By solving the equations of motion, we find the two saddle points for $\Sigma_j$ and $G_j$,
\be
G_j = \pm \iw \, , \;\;\; \Sigma_j = \mp \iw \frac{4qJK}{N} .
\ee
As in the case of the double-cone in the computation of the spectral form factor \cite{Saad:2018bqo}, the part of the action that depends on $t$ cancels out in these saddle points. Note also we have $2$ saddle points instead of $4$ because it must be $\Sigma_1 = \Sigma_2$ in order to obtain a non-decaying contribution for large $t$. Then, in the strict limit of large times we obtain
\be
Z_2(\infty) = \frac{1}{2^{N-1}} \, .
\ee
By taking the next to leading term in the expansion of the logarithm
\be
\ln{\left (2 \cos{ \left (\frac{(\Sigma_1 + \Sigma_2) t}{4} \right )} \right )} \approx \pm \iw \frac{(\Sigma_1 + \Sigma_2) t}{4} + \ln{\left (1 + e^{\mp \iw \frac{(\Sigma_1 + \Sigma_2) t}{2}} \right ) } \, ,
\ee
we can find the first correction in $t$ to the two replica partition function
\be
Z_2(t) \approx \frac{1}{2^{N-1}} \left (1 + N e^{- \frac{4qJKt}{N}} \right ) \, .
\ee
This is the same expression as the one obtained by exact methods in the appropriate time regime, see \eqref{eq:SYKmomentapprox} and take the large-$N$ limit. In Fig. \ref{fig:Z2BSYK} we give a graphical representation of these non-trivial saddles, as well as the disconnected saddles corresponding to $G_i = \Sigma_i = 0$, whose contribution decays like $e^{-JKt}$.

\begin{figure}[h]
\centering
\includegraphics[width=.9\linewidth]{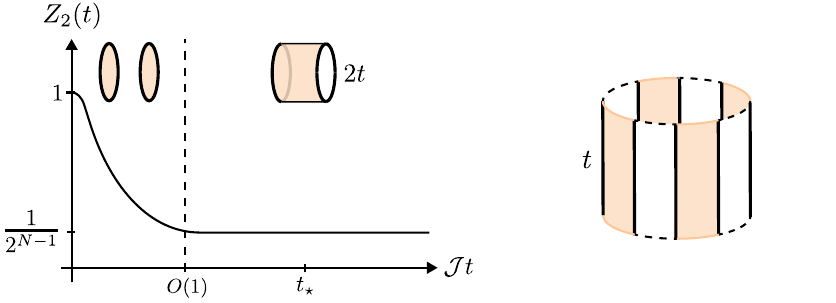}
\caption{On the left, the ``wormhole'' saddle point dominates $Z_2(t)$ at $O(1)$ times and reproduces the contribution from the tower of quasi-particle excitation with mass $E_{\text{gap}} = 2 q \J$. On the right, the corresponding replica wormhole providing the analogous contribution to $Z_4(t)$.}
\label{fig:Z2BSYK}
\end{figure}

We can follow a similar procedure for general even $n$. At large times we can simplify (\ref{poldetMajorana}) to
\be
f_n(t) \approx \prod_{j=1}^n \cos{\left (\frac{\Sigma_j t}{2} \right )} - \prod_{j=1}^n \sin{\left (\frac{\Sigma_j t}{2} \right )}  \, ,
\ee
so the total effective action reads
\be
I_n(t) \approx (n-1) \frac{N}{2} \ln{(2)} - \frac{N}{2} \ln{\left ( \prod_{j=1}^n \cos{\left (\frac{\Sigma_j t}{2} \right )} - \prod_{j=1}^n \sin{\left (\frac{\Sigma_j t}{2} \right )} \right )} + nJKt - t \sum_{j=1}^n JK \iw^q G_j^q + \frac{N}{4}  G_j \Sigma_j \, .
\ee
In this general case, the saddles are still the same $G_j = \pm \iw$, $\Sigma_j = \mp \iw \frac{4qJK}{N}$ (see Fig. \ref{fig:Z2BSYK} for a graphical representation of $Z_4$), but only some particular combinations will have a time-independent contribution for large $t$. Defining $x \equiv \frac{4qJK}{N}$, we have 
\be
\left(\prod_{j=1}^n \cos{\left (\frac{\Sigma_j t}{2} \right )} - \prod_{j=1}^n \sin{\left (\frac{\Sigma_j t}{2} \right )}\right)_{\textrm{Saddles}} = \cosh^n{ \left ( \frac{xt}{2} \right )} \pm \sinh^n{\left (\frac{xt}{2} \right )} \, ,
\ee
where the $\pm$ depends on how many $\Sigma_j$ are $\iw x$ and how many are $-\iw x$, being the $+$ cases the ones non-decaying. It is easy to see that half of the $2^n$ possibilities will contribute, obtaining
\be
Z_n(\infty) = \frac{1}{2^{(n-1)(N-1)}} \, .
\ee
The first correction in $t$ comes again from considering the next to leading term in the expansion of the logarithm
\be
\frac{N}{2}\ln{ \left (f_n(t) \right )} \approx  -(n-1)\frac{N}{2} \ln{(2)} + \frac{N}{4}xt + \ln{ \left (1 + N\binom{n}{2} e^{-xt} \right )} \, ,
\ee
leading to our main result in this section
\be
Z_n(t) \approx \frac{1}{2^{(n-1)(N-1)}} \left (1 + N \binom{n}{2}e^{-\frac{4qJKt}{N}} \right )\, ,
\ee
which coincides with the exact expression in the appropriate regime of large-$N$ and sufficiently large times. This shows we can recover the dimension of the Hilbert space of the theory in the semiclassical limit, and this precisely parallels the exact microscopic analysis. One interesting avenue for future work is to see how far one can go with the semiclassical analysis in relation to the exact formulas derived above.

\subsubsection{Method 2: Gluing energy propagators}

In the previous section we obtained the replica partition functions and associated Hilbert space dimension in the large-$N$ limit for times greater than the gap time and for $n$ even. In this section we take another more general approach to the computation.

To explain the idea, let us first start by writing the replicated partition functions as 
\be
\notag Z_n(t) = \langle \eta \vert \exp{-tH_{\text{eff}}}^{\otimes n} \vert \eta \rangle\;. 
\ee
Now we introduce partitions of the identity using the basis of strings of Majorana fermions (of length $\leq N/2$) acting over the two vacuums $\vert \pm \rangle$. Given that these are eigenstates of $H_\text{eff}$, we obtain the expression
\be
Z_n(t) = \sum_\alpha \abs{\langle \eta \vert \Pi_{f_1}\psi_{\alpha_1}, \ldots,\Pi_{f_n}\psi_{\alpha_n} \rangle}^2 e^{-t(E_{\alpha_1} + \ldots +  E_{\alpha_n})} \, , 
\ee
where we used that
\be
\langle \Pi_f\psi_\alpha \vert e^{-tH_\text{eff}} \vert \Pi_{f'}\psi_{\alpha'} \rangle  = \delta_{\alpha, \alpha'} \delta_{f,f'} e^{-tE_\alpha} \, .
\ee 
This is of course the path we took to arrive at the exact microscopic formula. The idea now is to re-obtain this last expression from the large-$N$ limit using the macroscopic variables $G$, $\Sigma$. Thus, we want to compute $\langle \psi_\alpha \vert \overline{U(t) \otimes U^*(t)} \vert \psi_{\alpha'} \rangle$, where $\psi_\alpha$, $\psi_{\alpha'}$ are strings of Majorana fermions acting on $\mathcal{H}^1$, via de the semiclassical path integral. To this end we consider again the path integral expression
\begin{align}
\notag &\int \mathcal{D}G \mathcal{D} \Sigma \exp{-\frac{1}{4}\int_0^t d\tau ~ 4JK \left (1-\iw^qG^q(\tau) \right ) + N G(\tau) \Sigma(\tau)} \times \\
\times & \int \mathcal{D}\psi^1_a \mathcal{D}\psi^{\bar{1}}_a \exp{-\frac{1}{4} \int_0^t d\tau ~ \psi^1_a(\tau) \partial_\tau \psi^1_a(\tau) + \psi^{\bar{1}}_a(\tau) \partial_\tau \psi^{\bar{1}}_a(\tau) -\Sigma(\tau) \psi_a^1(\tau)\psi_a^{\bar{1}}(\tau)} \, .
\end{align}
The difference lies in the boundary conditions. While previously we considered the trace with antisymmetric boundary conditions, in this case we need to enforce particular initial and final boundary conditions. With this in mind, to integrate the Majorana fermions, we can form $N$ Dirac fermions $\Psi_a(\tau) = \frac{1}{2} \left (\psi_a^1(\tau) + \iw \psi_a^{\bar{1}}(\tau)\right )$ with masses $\Sigma(\tau)/2$. Given that the path integral factorizes in the $N$ Dirac fermions, we can consider each one separately. Formally, we have
\be
\exp{\iw \int_0^t d\tau \frac{\Sigma(\tau)}{4} } \int_{\Psi(0) = \Psi_0}^{\Psi(t) = \Psi_t} \mathcal{D} \bar{\Psi} \mathcal{D} \Psi \exp{\iw \int_0^t d\tau ~ \iw \bar{\Psi}(\tau)\partial_\tau \Psi(\tau) - \frac{\Sigma(\tau)}{2}\bar{\Psi} (\tau)\Psi(\tau)} \, .
\ee
This is an evolution of a free fermion with mass $\Sigma(\tau)/2$. Defining $\vert 0 \rangle$ from $ \Psi \vert 0 \rangle = 0$ and $\Psi^\dagger \vert 0 \rangle \equiv \vert 1 \rangle$, we obtain
\begin{align}
&\exp{\iw \int_0^t d\tau \frac{\Sigma(\tau)}{4} }  \langle \Psi_t \vert \exp{-\iw \Psi^\dagger \Psi \int_0^t d\tau \frac{\Sigma(\tau)}{2}} \vert \Psi_0 \rangle \\
&= \delta_{\Psi_t, \Psi_0} \left ( \delta_{\Psi_0,0} \exp{\iw \int_0^t d\tau \frac{\Sigma(\tau)}{4}} + \delta_{\Psi_0,1} \exp{-\iw \int_0^t d\tau \frac{\Sigma(\tau)}{4}} \right ) \, .
\end{align}
This is an exact result. It tells us the path integral is diagonal in the basis of strings of Majorana fermions, as expected from the microscopic description. Also, if the string $\psi_\alpha$ contains the Majorana fermion $\psi_a$, the corresponding Dirac state $\Psi_a$ is $\vert 1 \rangle$; if not, the corresponding state is $\vert 0 \rangle$. Thus, if $\abs{\alpha} = s$, the effective action for $G$ and $\Sigma$ is
\be
I_{\textrm{eff}}(t) = -\iw \frac{N-2s}{4}  \int_0^t d\tau ~ \Sigma(\tau) + JKt - \iw^q JK \int_0^t d\tau ~ G^q(\tau) + \frac{N}{4} \int_0^t d\tau ~ G(\tau) \Sigma(\tau) \, .
\ee
This action is linear in $\Sigma(\tau)$, so we can compute the path integral exactly, obtaining
\be
I_{\textrm{eff}}(t) = tJK \left [1-\left (1 - \frac{2s}{N} \right )^q \right ] \equiv t\tilde{E}_s \, ,
\ee
where $\tilde{E}_s$ is the large-$N$ approximation of (\ref{eq:specSYKHefffinal}). Therefore, we find
\be
\langle \psi_\alpha \vert \overline{U(t) \otimes U^*(t)} \vert \psi_{\alpha'} \rangle = \delta_{\alpha, \alpha'} e^{-t\tilde{E}_\alpha} \, .
\ee
Then, we can compute the replicated partition function in the exact same way as we did above, in order to obtain (\ref{eq:lemmasykZn(t)}) and (\ref{eq:ptssyk}). The only thing that changes is $E_s \to \tilde{E}_s$, i.e. we need to change the spectrum from the exact to the semiclassical one. We note this a remarkably transparent demonstration of the robustness of semiclassical methods to compute Hilbert dimensions in this framework. The formula for the replicated partition function is structurally the same, and the only thing that gets modified is the energy spectrum, from exact to semiclassical. But this obviously has nothing to do, and therefore cannot modify, the rank of the average state and therefore the dimension of the Hilbert space spanned by these states.

To end this semiclassical discussion, let us note that we can reinterpret what we have done in a manner that will reappear when discussing black hole caterpillars. We can write
\be
Z_n(t) = \sum_\alpha \abs{\langle \eta \vert \Pi_{f_1} \psi_{\alpha_1}, \ldots,\Pi_{f_n}\psi_{\alpha_n} \rangle}^2 e^{-t(E_{\alpha_1} + \ldots +  E_{\alpha_n})} = \sum_{\alpha,\alpha'} I_{\alpha\alpha'} \,G_{\alpha_1\alpha_1'}\cdots G_{\alpha_n\alpha_n'}\, , 
\ee
where we have defined
\be 
G_{\alpha_j\alpha_j'}\equiv\langle \psi_{\alpha_j} \vert e^{-t H_{\textrm{eff}}^{j\bar{j}}} \vert\psi_{\alpha_j'}\rangle\;,
\ee
and
\be 
I_{\alpha\alpha'}\equiv\langle \eta \vert \Pi_{f_1} \psi_{\alpha_1}, \ldots,\Pi_{f_n} \psi_{\alpha_n} \rangle \langle \Pi_{f_1'} \psi_{\alpha_1'}, \ldots,\Pi_{f_n'} \psi_{\alpha_n'} \vert \eta \rangle\;.
\ee
This has a wormhole interpretation, analog to Fig. \ref{eq:figurewhJT} in the caterpillar scenario discussed below (see Fig. \ref{fig:Z2BSYK} too). Such wormhole resembles a ``birdcage'', where the propagators $G_{\alpha\alpha'}$ compute the partition functions (energy propagators) in the strips (the bars of the birdcage), while the kernel $I_{\alpha\alpha'}$ glue those propagators at the top and at the bottom (the caps of the birdcage).

\subsection{Further generalities}
\label{sec:overlapgeneralexp}

We end this section by showing that the conclusions obtained in the examples above actually hold for generic ensembles constructed following the prescription of Section~\ref{Sec:ensemblesinftemp}. We hope to make transparent that with very mild input, the framework achieves almost complete universality.

To begin, we introduce the transverse transpose of the relevant two-replica operator
\be\label{eq:defrhot}
\rho_t \equiv \big(e^{-t H_{\eff}^{a\bar{b}}}\big)^{T_\perp}\,.
\ee
This is just the Choi state version of the two-replica averaged evolution $\overline{U(t)^1 \otimes U(t)^{\bar{1}}\,^*} =  e^{-tH^{1\bar{1}}_{\eff}}$. With this definition in place, we obtain the following lemma:
\begin{lemma}\label{lemma:densitymatrix}
The operator $\rho_t$ defines a normalized density matrix corresponding to the average state of the ensemble.
\end{lemma}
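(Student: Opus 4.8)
The plan is to avoid checking the three defining properties of a density matrix — Hermiticity, positivity, and unit trace — directly at the level of the operator $e^{-tH_{\eff}^{a\bar{b}}}$. This would be treacherous, since the transverse transpose $T_\perp$ in \eqref{eq:defrhot} is a partial transposition across the two replicas, and partial transposition does not preserve positivity in general. Instead, I would establish the stronger statement that $\rho_t$ equals the ensemble average of rank-one projectors onto the normalized Choi states,
\be
\rho_t = \overline{\ket{U(t)}\bra{U(t)}}\,,
\ee
from which every density-matrix property follows simultaneously, and which is by definition the first moment of the state ensemble.

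To prove this identification, I would expand the projector in the computational basis using \eqref{eq:inftempstate},
\be
\ket{U(t)}\bra{U(t)} = \frac{1}{d}\sum_{a,b,c,d} U(t)_{ab}\,U(t)^*_{cd}\,\ket{a}\ket{b^*}\bra{c}\bra{d^*}\,,
\ee
and average over the Brownian couplings, obtaining the matrix elements $\bra{a}\bra{b^*}\rho_t\ket{c}\ket{d^*} = \tfrac{1}{d}\,\overline{U(t)_{ab}\,U(t)^*_{cd}}$. On the other hand, the already-established relation $\overline{U(t)\otimes U(t)^*} = e^{-tH_{\eff}^{1\bar{1}}}$ (derived in Appendix~\ref{app:Heff}) has matrix elements $\bra{i}\bra{j^*} e^{-tH_{\eff}^{1\bar{1}}}\ket{k}\ket{l^*} = \overline{U(t)_{ik}\,U(t)^*_{jl}}$. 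Matching the two index patterns yields
\be
\bra{a}\bra{b^*}\rho_t\ket{c}\ket{d^*} = \frac{1}{d}\,\bra{a}\bra{c^*}e^{-tH_{\eff}^{1\bar{1}}}\ket{b}\ket{d^*}\,,
\ee
i.e. $\rho_t$ is obtained from $e^{-tH_{\eff}^{1\bar{1}}}$ by exchanging the column index of the $\mathcal{H}^1$ leg with the row index of the $\mathcal{H}^{\bar{1}}$ leg. This reshuffling is precisely the content of the transverse transpose, so $\rho_t = (e^{-tH_{\eff}^{a\bar{b}}})^{T_\perp}$ as claimed, and simultaneously $\rho_t = \overline{\ket{U(t)}\bra{U(t)}}$.

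With the identification in hand, the density-matrix properties are immediate. Hermiticity holds because each $\ket{U(t)}\bra{U(t)}$ is Hermitian and averaging is linear; positivity holds because a convex combination of positive semidefinite projectors is positive semidefinite; and the normalization follows from unitarity,
\be
\text{Tr}(\rho_t) = \overline{\braket{U(t)}{U(t)}} = \overline{\tfrac{1}{d}\,\text{Tr}\!\left(U(t)^\dagger U(t)\right)} = 1\,.
\ee
Finally, $\rho_t$ is the average state of the ensemble by construction, being exactly its first moment.

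The main obstacle is conceptual rather than computational: the positivity of $\rho_t$ is genuinely not an algebraic feature of $e^{-tH_{\eff}^{a\bar{b}}}$, and any attempt to read it off from the spectrum of the effective Hamiltonian would fail because $T_\perp$ can produce negative eigenvalues from a positive operator. The nontrivial input is therefore the physical realization of $e^{-tH_{\eff}^{a\bar{b}}}$ as the transverse transpose of an honestly averaged state; once this is secured, the remaining index bookkeeping defining $T_\perp$, and the tracking of the $1/d$ normalization, are routine. The same argument carries over to the finite-temperature ensemble of Section~\ref{sec:finitetempstates} upon replacing $\ket{U(t)}$ by $\ket{W(t,\beta)}$ and using the annealed normalization $\overline{\mathcal{N}}$.
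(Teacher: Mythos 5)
Your proof is correct and follows essentially the same route as the paper's: Appendix~\ref{app:Heff} likewise identifies $\rho_t$ as the Choi-state version of $\overline{U(t)\otimes U(t)^*}=e^{-tH_{\eff}^{1\bar 1}}$, i.e. $\rho_t=\overline{\ket{U(t)}\bra{U(t)}}$, and deduces the density-matrix properties from the fact that averaging preserves them (phrased there as the average being a CPTP map, equivalently your convex-combination argument). Your explicit matrix-element bookkeeping for the $T_\perp$ reshuffling and the $1/d$ factor is just a spelled-out version of the same step.
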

The proof of this lemma is described in Appendix \ref{app:Heff}. With this definition, the overlap moments \eqref{eq:Znetaoverlapinftemp} can then be written as R\'enyi moments
\be
Z_n(t) = \Tr\left(\rho_t^n\right)\,.
\ee
This is precisely the form found in the previous examples, see \ref{lemma:bspin2} and \ref{lemma:syk2}. This expression shows that the saturation value of the dimension is determined by the rank of the average state:
\be
\lim_{n\to 0} Z_n(t) = \rank(\rho_t)\,,
\ee
and thus the dimension of the Hilbert space spanned by the states is
\be\label{eq:d_Omegagen}
    d_\Omega(t) =  \text{min}\lbrace \Omega, \rank(\rho_t)\rbrace\,.
\ee 
Equivalently, the dimension $d_\infty(t)$ is the operator Schmidt rank of $e^{-t H_{\eff}^{a\bar{b}}}$.

At $t=0$ we have $\rho_0 =\mathbf{1}^{T_\perp}\,\propto\, \ket{\mathbf{1}}\bra{\mathbf{1}}$ and thus $\rank(\rho_0) =1$. To understand the rank at $t>0$, we can formally diagonalize the two-replica effective Hamiltonian
\be\label{eq:effHgeneral}
H_{\text{eff}}^{a\bar{b}} = \sum_{i=1}^{d^2} E_i \ket{V_i}\bra{V_i}\,,
\ee
ordering the eigenvectors $\ket{V_i} \in \mathcal{H}_a\otimes \mathcal{H}_{\bar{b}}$ by increasing energy, with $0=E_1  \leq E_2 \leq E_3  \leq \cdots $. Each $\ket{V_i}$ is the Choi state associated with a single-replica Hermitian operator $V_i$, normalized such that $\text{Tr}(V^2_i) = d$. The operator $\rho_t$ can then be written as
\be\label{eq:T_t} 
\rho_t = \sum_{i=1}^{d^2} e^{-t E_i} (V_i \otimes V^*_i)\,.
\ee 
Under very general conditions (like the absence of symmetries) we expect that for $t>0$, the rank of \eqref{eq:T_t} to be constant, and moreover we expect it to be maximal so that
\be\label{eq:d_Omegagen2}
    d_\Omega(t) =  \text{min}\lbrace \Omega, d^2\rbrace\,,\qquad \forall\, t > 0\,.
\ee 
This validates the general expectation of Section \ref{Sec:countingframeworkexpectation}.

A general argument for this works as follows. The determinant $\det(\rho_t)$ is a degree-$d^2$ polynomial in the weight vector
\be\label{eq:omegavect}
\vec\omega=(e^{-tE_1},\ldots,e^{-tE_{d^2}})\,.
\ee
As we show below, for sufficiently late times the determinant is nonzero (and the rank maximal), if we assume that $\ket{V_1}=\ket{\mathbf{1}}$ is the unique ground state of $H_{\text{eff}}^{a\bar{b}}$. This property is a very mild input that is simply attained in all scenarios. In this regime the average state $\rho_t \rightarrow \mathbf{1}_{d^2}/d^2$ tends to the maximally mixed state.\footnote{In the presence of global symmetries, as in the SYK example of Section~\ref{sec:BrownianSYKexact}, the average state becomes a weighted mixture of maximally mixed states within each superselection sector. A similar argument points that the rank is constant $\frac{\mathrm{d}}{\mathrm{d}t}\mathrm{rank}(\rho_t)=0$ for $t>0$, even when the rank is submaximal.} The zeros in weight space are therefore isolated. In general, they are complex, but even if some of them lie on the real $\vec\omega$-plane, a trajectory of the form \eqref{eq:omegavect} would not intersect any of them for any $t>0$, in general. Therefore, on quite solid grounds, the rank remains maximal for $t>0$. This is what we found for the Brownian spin clusters.

A more explicit general analysis is possible for a single drive operator. Here, evolution is effectively time-independent and the states $\ket{U(t)}$ are restricted to a diagonal subspace. We study this scenario explicitly in Appendix~\ref{app:H}. Different realizations of the Brownian coupling at fixed $t$ correspond to unitaries at different Hamiltonian times. Therefore, the unitary never becomes Haar random, as shown in detail in \cite{Guo:2024zmr}. But it does instantaneously generate a spanning set of states within the diagonal subspace, provided the spectrum of the Hamiltonian is non-degenerate.

\subsubsection{Dimension from explicit overlap moments}

For the purposes of what follows, it is useful to make the relation between the dimension and the overlap moments $Z_n(t)$ more explicit. At late enough times, the dimension can be obtained directly from the explicit evaluation of the overlaps. Using the generic diagonal form of the effective Hamiltonian \eqref{eq:effHgeneral}, we can write
\be \label{eq:Zntgeneralinftemp}
Z_n(t) = \dfrac{1}{d^{2n}} \sum_{i_1,...,i_n} e^{-t(E_{i_1} + ... + E_{i_n})} \,|\text{Tr}(V_{i_1}...V_{i_n})|^2\,.
\ee 
At late enough times, the sum is dominated by low-energy eigenstates of the effective Hamiltonian. Under the general condition that there are no symmetries or non-interacting clusters among the degrees of freedom, $\ket{V_1} = \ket{\mathbf{1}}$ is the unique ground state. At sufficiently late times we can then expand the replica partition function by retaining low energy modes
\be\label{eq:expansionZ_ngen}
Z_n(t) = \frac{1}{d^{2(n-1)}}\left(1 + n\,\chi_1(t) + \binom{n}{2}\,\chi_2(t) + \cdots\right)\,,
\ee
where 
\begin{align}\label{subtbig}
\chi_1(t) &= \frac{1}{d^2}\sum_{i=2}^{d^2} e^{-tE_i}\,|\Tr(V_i)|^2 = 0\,, \\[.3cm]
\chi_2(t) &= \frac{1}{d^2}\sum_{i,j=2}^{d^2} e^{-t(E_i+E_j)}\,|\Tr(V_i V_j)|^2
= \sum_{i=2}^{d^2} e^{-2tE_i} \,.
\end{align}
The vanishing of $\chi_1(t)$ follows from the orthogonality condition $\braket{\mathbf{1}}{V_i} = 0$ for $i \ne 1$, which implies $\text{Tr}(V_i) = 0$. For $\chi_2(t)$, we use the Hermiticity of the effective Hamiltonian, which implies the orthonormality relation $\delta_{ij} = \braket{V_i}{V_j} = \tfrac{1}{d}\text{Tr}(V_i V_j)$. The resulting expression can be written in terms of the second overlap moment $\chi_2(t) =Z_2(t)-1$, demonstrating that little information is needed to derive the dimension. The dots in \eqref{eq:expansionZ_ngen} represent suppressed contributions from three or more energy levels.

The truncation where we neglect the higher-order terms in \eqref{eq:expansionZ_ngen} is valid once the contributions from all states except the $N_\ast$ lowest excited levels of the effective Hamiltonian, each with energy $\Egap$, have decayed. The corresponding parametric timescale is the {\it gap time}
\be\label{eq:gaptime}
t_\ast =  \Egap^{-1}\log N_\ast\,.
\ee 
For times much greater than such time, using \ref{subtbig}, we can further approximate \ref{eq:expansionZ_ngen} simply as
\be\label{eq:expansionZ_ngen2}
Z_n(t) \simeq \frac{1}{d^{2(n-1)}}\left(1  + N_\ast\, \binom{n}{2}\,e^{-2t\Egap} + \cdots\right)\,,
\ee
Analytically continuing the truncated expansion \eqref{eq:expansionZ_ngen} to $n\to 0$, we find on general grounds that the dimension spanned by the family of states $\mathsf{F}_\Omega^t$ is
\be
    d_\Omega(t) =  \text{min}\lbrace \Omega, d^2\rbrace \,\qquad t\gg t_\ast\,.
\ee
This provides a general derivation of the expectation in Section \ref{Sec:countingframeworkexpectation} directly from the form of the overlap moments, valid at times larger than the gap time and under the stated assumptions. Of course, we expect that on general grounds only $t>0$ is really needed, but demonstrating this requires more information about the ensembles, as we had when considering the Gaussian unitary ensemble and the spin clusters.

\subsubsection{Gap time $\sim$ scrambling time} 
\label{sec:gaptime}

How does the gap time $t_\ast$ scale with system size? Table \ref{tab:gaptime} summarizes our findings from the previous subsections for a system of $N$ qubits. Measured in units of the normalized coupling $\J ^{-1}$, the gap time scales as $O(\log N)$ for few-body Brownian Hamiltonians, and as $O(1)$ for non-local Brownian Hamiltonians such as the Brownian GUE.

\begin{table}[h]
\centering
\setlength{\tabcolsep}{12pt} 
\renewcommand{\arraystretch}{1.2} 
\begin{tabular}{|c|c|c|c|c|}
\hline
& $q$ & $\Egap/\J$ & $N_\ast$ & $\J t_\ast$ \\ \hline
{$q$-local} & $O(1)$ & 
\makecell{$\tfrac{4q}{3}$ (spin) \\[2pt] $2q$ (SYK)} 
& \makecell{$3N$ (spin) \\[2pt] $N$ (SYK)} &  $O(\log N)$  \\ \hline
{non-local} & $O(N)$ & $O(N)$ & $O(2^N)$ &  $O(1)$  \\
\hline
\end{tabular}
\caption{$N$-scaling of the gap time in spin systems. The natural coupling is $\J  = JK/N$, with $J$ the variance of the Brownian couplings and $K$ the number of drive operators in each case.}
\label{tab:gaptime}
\end{table}

We see that the gap time $t_\ast$ scales in the same way as the “scrambling time” \cite{Shenker:2013pqa} for these systems. The connection to scrambling is natural: the gap time governs the approach of the ensemble of time-evolution operators $U(t)$ to a unitary $k$-design \cite{Jian:2022pvj,Guo:2024zmr}, and unitary $2$-designs already appear scrambled when probed by average notions of the four-point OTOC \cite{Roberts:2016hpo,Sunderhauf:2019djv}. Nevertheless, it is unclear whether this relation holds for higher dimension gravity scenarios. More research is needed to understand the similarities and differences.

Furthermore, considering states after the gap time is particularly interesting, as this is precisely when the Choi states of different scrambling unitaries, $\ket{U_1(t)}$ and $\ket{U_2(t)}$, become distinguishable with high probability. The small value of $Z_2(t)$ quantifies this loss of fidelity. In this regime, we are therefore counting families of {\it distinguishable} quantum states.

\section{Counting caterpillar states of black holes}
\label{Sec:BHs}

We now turn to black holes and apply the construction of Section~\ref{sec:finitetempstates}, which is the appropriate one for scenarios at finite temperature. Our goal is to compute the overlap moments $Z_n(t)$ of the corresponding states using the gravitational path integral and to extract from them the dimension of the Hilbert space they span. We begin with near-extremal black holes, where the analysis can be carried out explicitly. Then, we comment on the expected features of more general cases in higher dimensions.

\subsection{Semiclassical geometries}\label{sec:semiclgeom}

Near-extremal black holes develop a nearly AdS$_2$ near-horizon throat (times a compact manifold), whose metric in global coordinates takes the form $\text{d}s^2 =  \tfrac{1}{\sin^2\sigma}(-\text{d}t^2+\text{d}\sigma^2 )$ for $\sigma \in [0,\pi]$, in units where $\ell_{\rm AdS}=1$. The boundary gravitational degree of freedom $t(u)$ determines the breaking of conformal symmetry at low energies. Its intrinsic dynamics is governed by the Schwarzian action \cite{Maldacena:2016upp}. In our case, we additionally consider a set of matter operators $\Op_\alpha$ in $\mathsf{PSL}(2,\mathbf{R})$ representations of lowest weight $\Delta_\alpha$. We drive the boundary with a time-dependent source for the operators $\Op_\alpha$ with time-dependent couplings, so that the action has the form
\be\label{eq:actiongaugenot}
I = I_{\rm Schw}+ \iw \int \text{d}u \, \sum_{\alpha=1}^K g_\alpha (u) \Op_\alpha\,,\qquad 
\ee 
where 
\be 
I_{\rm Schw}=\phi_b \int \text{d}u \left\lbrace {\tanh(\dfrac{t(u)}{2})},u\right\rbrace \,. 
\ee 
and $\phi_b$ is the boundary value of the dilaton. The first term in $I$ is the Schwarzian action, which performs the cooling allowing to keep the temperature finite. The second term introduces the time-dependent sources along the evolution.


\begin{figure}[h]
\centering
\includegraphics[width=.7\linewidth]{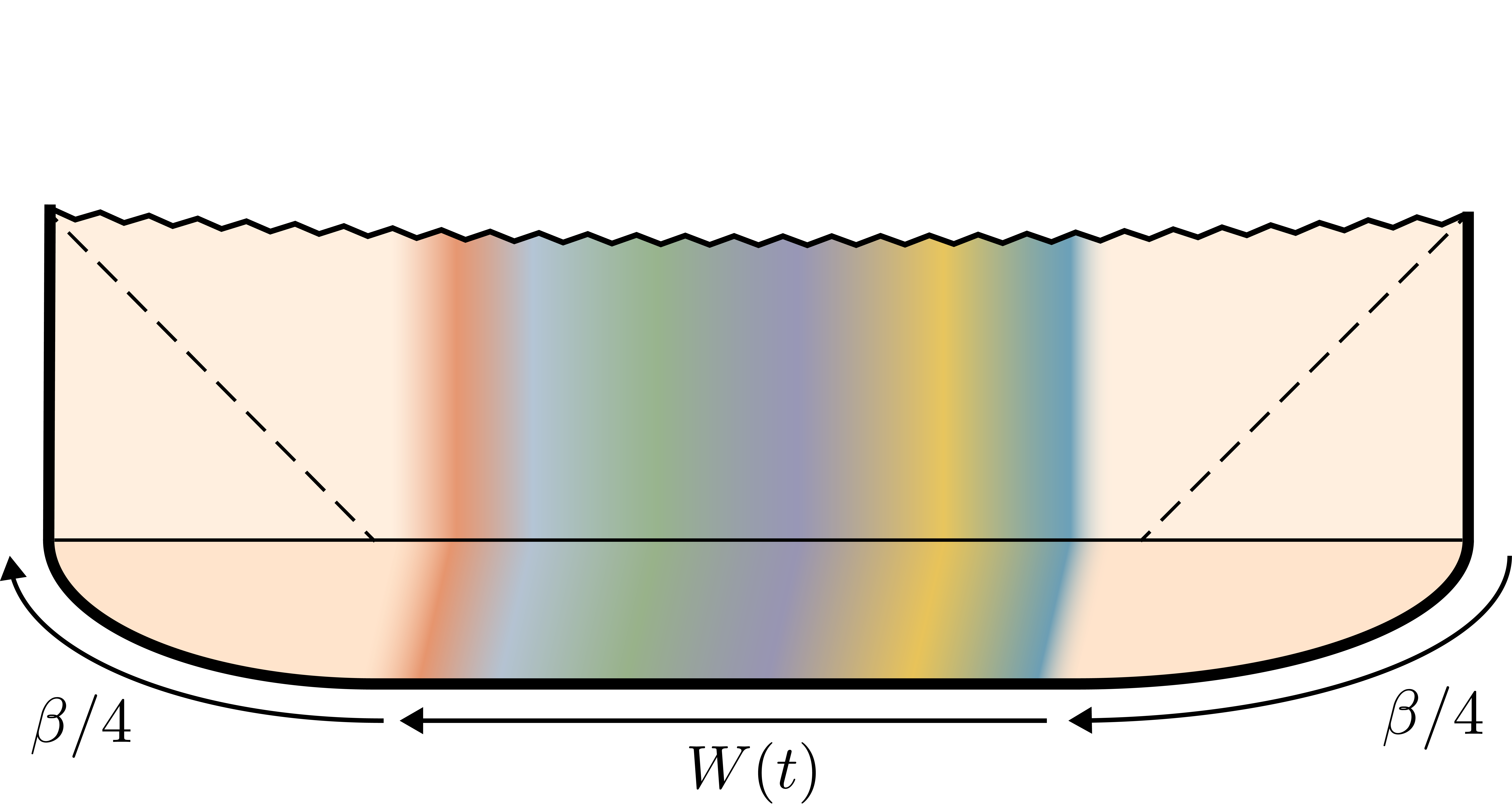}
\caption{A caterpillar, where $W(t)$ is prepared by the Hamiltonian \eqref{eq:coolrhamilt}.}
\label{fig:caterpimodel}
\end{figure}

A schematic representation of the preparation of a state $\ket{W(t,\beta)}$ and its semiclassical dual is shown in Fig.~\ref{fig:caterpimodel}. The time-dependent couplings generate an inhomogeneous matter distribution in the black hole interior. Although the detailed structure of individual realizations is difficult to analyze, coarse-grained geometric features can be extracted by averaging over the ensemble of states prepared at a given time $t$. Such averages, including the norm \eqref{eq:normfintemppreavg} or $\mathsf{LR}$ two-point correlation functions, reveal two key properties \cite{Magan:2024aet,Magan:2025hce}:
\begin{itemize}
\item The average wormhole length grows linearly with the preparation time $t$.
\item The average transverse area (or the value of the dilaton in JT gravity) remains approximately constant, up to small fluctuations, along the wormhole in the black hole interior.
\end{itemize}

\subsubsection{A discrete toy model}

To understand individual instances of the caterpillars it is useful to replace $W(t)$ by some discrete toy version
\be 
\widetilde{W}(t) = U_{t} e^{-\delta t H_0} ...U_2 e^{-\delta t H_0}U_1 \,.
\ee 
where $\delta t$ is finite and $U_i=\exp(-\iw \delta t \Op_i)$ are independent random draws of a ``gate set'' $\mathcal{G} = \{ \exp(-\iw \delta t \Op_\alpha)\}$, where $\Op_\alpha$ is a $\mathsf{PSL}(2,\mathbf{R})$ primary of conformal dimension $\Delta_\alpha$.\footnote{Similar models have appeared in \cite{Lin:2022rzw,Lin:2022zxd,Antonini:2025rmr}, differing in that the infinitesimal unitaries $U_i$ are replaced by operators $\Op_i$ (and that the Euclidean evolution between insertions might taken to be infinite in the BPS context). In those models, the $\mathsf{LR}$ entanglement is found to decrease with the number of insertions, and the states no longer approach generic states in the doubled Hilbert space, but rather products of generic states in single Hilbert spaces. The limit to a vanishing ``semi-quenched'' entropy was demonstrated in \cite{Antonini:2025rmr}.} The corresponding state is shown in Fig.~\ref{fig:toymodel}. In this toy model construction, the geometry contains a discrete number $t$ of matter particles inside the black hole, each with mass $m_i \approx \Delta$. The wormhole length then grows linearly with the total operator insertions $t$. This behavior also characterizes individual realizations, since for $t \gg 1$ the length becomes self-averaging across the ensemble by the central limit theorem.

\begin{figure}[h]
\centering
\includegraphics[width=.7\linewidth]{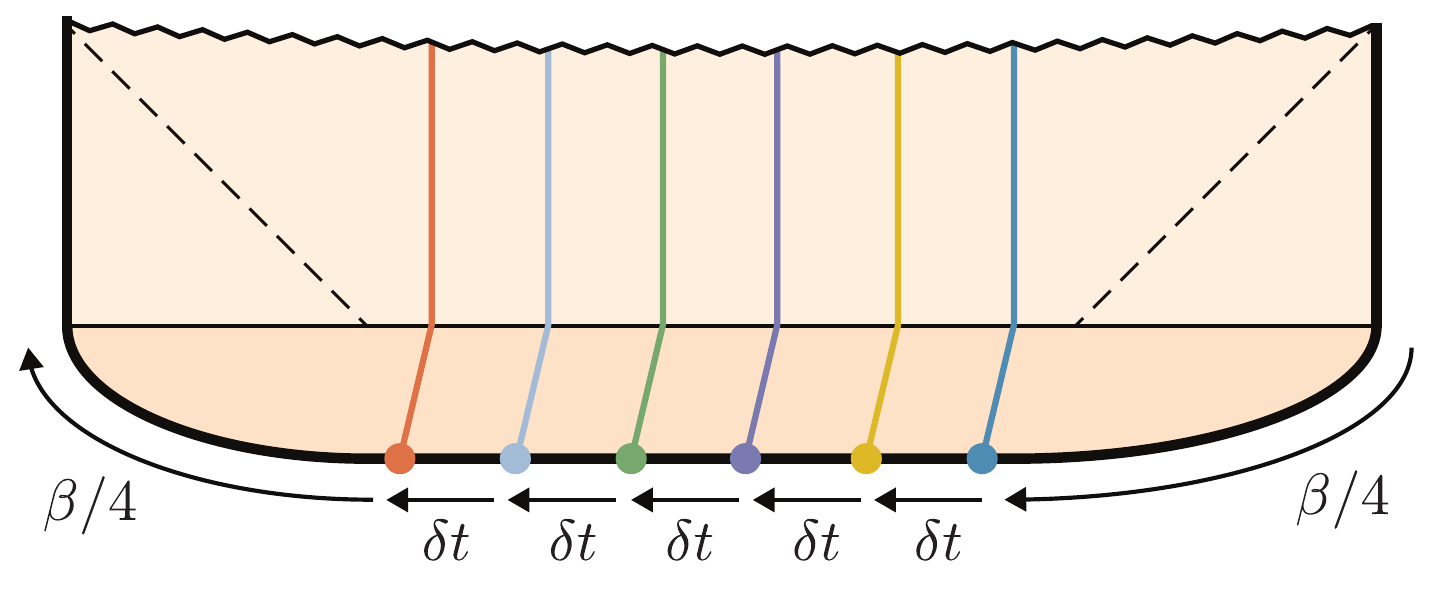}
\caption{Toy model of a caterpillar in terms of heavy local matter insertions with flavors.}
\label{fig:toymodel}
\end{figure}

While this simplified model is convenient for illustrating individual instances in Fig. \ref{fig:caterpimodel}, it shares most of the shortcomings discussed in the introduction and involves an artificially large number of flavors. Nevertheless, one further advantage is that it can be easily considered in general dimensions expanding on \cite{Balasubramanian:2022gmo,Balasubramanian:2022lnw}, and we develop the state-counting in these types of models in Appendix \ref{app:toymodel}.

\subsection{Overlap moments from a replica wormhole}

We consider two cutoff boundaries $a$ and $\bar{b}$, with action \eqref{eq:actiongaugenot} and its complex conjugate, respectively, so that the Brownian couplings are correlated at each point of the two boundaries. We want to compute quantities which are averaged over the couplings. Performing the Gaussian integral over couplings we get an effective action 
\be\label{eq:avgcouplingsJT} 
I_{a\bar{b}}\equiv-\log \left(\int \prod_{\alpha}\mathcal{D} g_\alpha(u) \,e^{-\frac{1}{2 J} \int \text{d}u \sum_{\alpha} g^2_\alpha(u)} e^{-I_a - I_{\bar{b}}^*}\right) = I_{{\rm Schw},a} + I^*_{{\rm Schw},\bar{b}}  + I_{\rm int}\;,
\ee 
where the two boundaries are locally interacting in time via
\be\label{eq:Iint} 
I_{\rm int} = -\frac{\J N}{K} \int \text{d}u \, \sum_{\alpha=1}^K \Op^a_\alpha(u)  \Op^{\bar{b}}_\alpha(u)\,^*\,.
\ee 
This provides the bulk path-integral derivation of the effective Hamiltonian (recall that the operator formalism derivation is presented in Appendix \ref{app:Heff}). Here we used that the renormalized coincident operators square to the identity and omitted additive constants in the action. After imposing the $\mathsf{PSL}(2,\mathbf{R})$ gauge constraints, the two replicas are governed by the gauge-invariant Euclidean effective action of Maldacena-Qi (MQ) \cite{Maldacena:2018lmt}
\be\label{eq:liouvilleaction} 
I_{a\bar{b}} = \dfrac{N}{2} \int \text{d}\tu \,\left(\dfrac{1}{2}\dot{\ell}^2 + 2 e^{-\ell} - \eta e^{-\Delta \ell }\right)  \,,
\ee
where $\tu =  \frac{N}{\phi_b} u = \frac{\mathcal{J}}{\alpha_S}u$ is a dimensionless boundary time (and the total time is $\tilde{t} =\tfrac{\mathcal{J}}{\alpha_S}t$). The relevant degree of freedom is the renormalized geodesic length $\ell = -2\log \dot{\tau}(\tu)$ between boundaries.

We now compute $Z_n(t)$ with the gravitational path integral. The boundary conditions include $n$ closed asymptotically AdS boundaries, with intervals of length $t$ which interact in pairs via the interaction \eqref{eq:Iint}. Any configuration without $a\bar{a}$ correlation (for $a=1,...,n$) will have at least $O(\mathcal{J} N t )$ action \eqref{eq:Iint} and thus its contribution will decay exponentially and become negligible at $\mathcal{J}t \sim O(1)$. Therefore, we seek for a fully connected configuration whose contribution dominates at late times. In particular we seek a configuration which saturates at a non-vanishing value at late times, as expected from the general considerations of Sec. \ref{Sec:countingframework}.

\subsubsection{The propagator}

To construct such connected configuration we proceed in two steps. We first evaluate the Euclidean propagator of the $\ell$-particle in the potential $V(\ell) = 2 e^{-\ell} - \eta e^{-\Delta \ell }$, i.e. the heat kernel
\be 
G^\eta_t(\ell,\ell') = \bra{\ell} e^{-tH(\ell)}\ket{\ell'} \,.
\ee 
The potential is shown in Fig \ref{fig:plot_potential}. From such figure it is transparent the potential includes bound states $\psi_{E_n}(\ell)$ (with discrete energies $E_n<0$) and a continuum of scattering eigenstates $\phi_{E}(\ell)$ (with $E>0$). Therefore, the propagator decomposes as

\begin{figure}[h]
    \centering
    \includegraphics[width=0.5\linewidth]{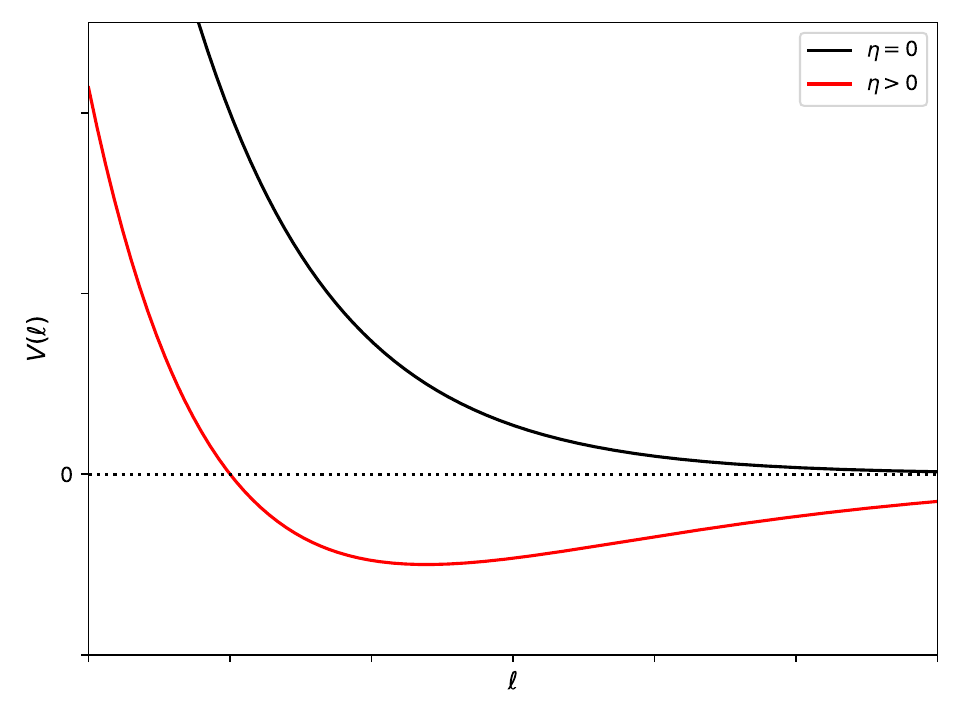}
    \caption{
    Plot of the potential $V(\ell) = 2 e^{-\ell} - \eta e^{-\Delta \ell }$ as a function of $\ell$ for $\Delta = 1/2$ and different values of $\eta$. For $\eta\neq 0$ the spectrum contains both scattering and bound eigenstates.
    }
    \label{fig:plot_potential}
\end{figure}

\be \label{propjt}
G^\eta_t(\ell,\ell')= \int_{E>0} \text{d}E \,e^{-t E}\, \phi_{E}(\ell)\phi^*_{E}(\ell') + \sum_{n=0}^{N_b-1} e^{-t E_n} \psi_{E_n}(\ell)\psi^*_{E_n}(\ell')\;,
\ee 
where $N_b$ is the number of bound states.

The eigenfunctions solve the Schrödinger equation
\be 
\left (-\frac{1}{2}\partial_\ell^2 + 2 e^{-\ell} - \tilde{\eta} e^{-\Delta \ell} \right )\Psi_{E}(\ell) = \frac{E N}{2} \Psi_{E}(\ell)\;,
\ee 
where we have shifted $\ell \rightarrow \ell + 2\log \left (N/2 \right )$ and we have defined $\tilde{\eta} \equiv \eta \left (N/2 \right )^{2-2\Delta}$.

To be fully explicit, let us consider the case of $\Delta = \tfrac{1}{2}$ (free chiral fermion), which is analytically solvable. The scattering eigenstates are
\be
\phi_E(\ell) = \frac{\Gamma(\frac{1}{2}-2\iw s-\tilde{\eta} )}{8^{\frac{1}{2} + 2 \iw s} \, \Gamma(-4\iw s)}  e^{\ell/4} W_{\tilde{\eta}, 2\iw s} \left(8e^{-\ell/2} \right)\,,\qquad s = \sqrt{EN}\,,
\ee
where $W_{\kappa, \mu}(z)$ is the Whittaker function. In particular, $e^{\ell/4} W_{0, 2\iw s}(8e^{-\ell/2}) = \sqrt{\frac{8}{\pi}} K_{2\iw s}(4e^{-\ell/2})$ and this recovers the scattering eigenstates in the Liouville wall \cite{Harlow:2018tqv}. The normalization is chosen in the usual way so that the ingoing coefficient is $1$ as $\ell\rightarrow +\infty$, and the reflection coefficient is a pure phase, given that for large $\ell$, equivalently for $x \equiv 8 e^{-\ell/2} \ll 1$ we have
\be
\label{eq:Whapprox0}
x^{-1/2} W_{\kappa, \mu}(x) \approx x^{-\mu} \frac{\Gamma(2 \mu)}{\Gamma (\frac{1}{2} + \mu - \kappa)} + x^\mu \frac{\Gamma(-2 \mu)}{\Gamma(\frac{1}{2} - \mu - \kappa)} \, .
\ee

For the bound states, the solutions are
\be
\psi_E(\ell) = c_E e^{\ell/4} W_{\tilde{\eta}, 2s} \left(8e^{-\ell/2} \right)\,,\qquad s = \sqrt{-EN}\,,
\ee
where the normalizability at $\ell \to +\infty$ requires the coefficient of the negative power to vanish (first term of (\ref{eq:Whapprox0})), imposing $\Gamma\!\left(\frac{1}{2} + 2s -\tilde{\eta}\right)^{-1} = 0$. This forces a discretization in the energy spectrum of the bound states. More concretely, the discrete energy levels $E_n = -s_n^2/N$ are
\be 
E_n = -\frac{1}{4N}(\tilde{\eta} - \tfrac{1}{2} - n)^2,
\qquad
n = 0,1,\dots,\big\lfloor \tilde{\eta} - \tfrac{1}{2}\big\rfloor\,.
\ee
The number of bound states is $N_b = \lfloor \tilde{\eta} - \tfrac{1}{2} \rfloor +1 \sim \eta N  \gg 1$. Further, for these particular values, the Whittaker function reduces to a Laguerre polynomial, 
\be
W_{\tilde{\eta},2 s_n}(x) \,\propto\, x^{2s_n+\frac{1}{2}} e^{-x/2}\, L_n^{(4s_n)}(x) \, .
\ee
The normalized bound–state wavefunctions are thus
\be
\psi_n(\ell) = c_n\, e^{-s_n\ell}\, e^{-4e^{-\ell/2}}\, L_n^{(4s_n)}\!\left(8e^{-\ell/2}\right)\,,
\ee
where $c_n$ is a normalization constant. Inserting the scattering and bound eigenstates in (\ref{propjt}) we obtain an explicit, albeit quite involved, expression for such propagator. 

\subsubsection{The replica wormhole}

The replica wormhole computing the replica partition function $Z_n$ is built from two building blocks. One of them is the propagator associated to the preparation of the states. For the finite temperature scenario considered here, the ensemble was defined in \ref{sec:finitetempstates}, by composing the random circuit $W(t)$ with cooling Euclidean evolution as  $W(t,\beta) \equiv e^{-\frac{\beta}{4}H_0}W(t)e^{-\frac{\beta}{4}H_0}$. The propagator associated with this ensemble is then obtained by convoluting the MQ propagator ($\eta>0$) described by eq. \eqref{propjt}, with a pair of JT propagators ($\eta =0$), see Fig. \ref{fig:bone}. This leads to
\be 
\mathscr{G}_{t,\beta}(\ell_1,\ell_2) =\int \text{d}\ell \text{d}\ell' \,G^0_{\beta/4}(\ell_1,\ell)G^\eta_t(\ell,\ell') G^0_{\beta/4}(\ell',\ell_2)\,.
\ee

\begin{figure}[h]
\centering
\includegraphics[width = .4\textwidth]{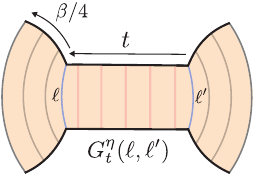}
    \caption{Bone-like propagator, obtained by convoluting the amplitude associated to the MQ eternal traversable wormhole, eq. \eqref{propjt}, appropriate for non-vanishing $\eta>0$ interactions between replicas, with the Euclidean evolution of the non-interacting (Schwarzian) Hamiltonian, as required by the ensemble definition in \ref{sec:finitetempstates}.}
\label{fig:bone}
\end{figure}

The second building block is the gluing of $n$ such propagators to form the ``birdcage wormhole'' represented in Fig. \ref{eq:figurewhJT}. This is done by projecting into the $n$ boundary state prepared by the JT path integral on the hyperbolic $n$-gon, whose wavefunction in the length basis is \cite{Yang:2018gdb,Penington:2019kki}
\be 
I_{n}(\ell_{1\bar{1}},...,\ell_{n\bar{n}}) = 2^{n} \int_0^\infty \text{d}s \,\varrho(s)\, K_{2\iw s}(4 e^{-\frac{\ell_{1\bar{1}}}{2}})...\,K_{2\iw s}(4 e^{-\frac{\ell_{n\bar{n}}}{2}})\,,
\ee 
where 
\be 
\varrho(s) = \dfrac{s}{2\pi^2} \sinh(2\pi s)\,,
\ee
is the density of states of the Schwarzian action \cite{Maldacena:2016upp,Stanford:2017thb,Mertens:2017mtv}.

\begin{figure}[h]
\centering
\includegraphics[width=.9\linewidth]{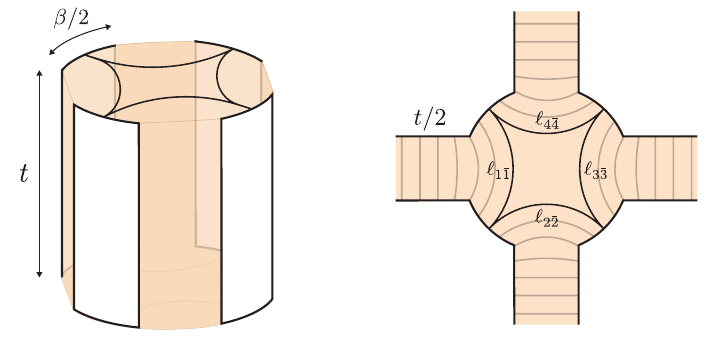}
\caption{Left: The birdcage wormhole has the topology $\Sigma_{0,n}$ of an $n$-punctured two-sphere and provides the dominant contribution to $Z_n(\beta,t)$ for $Jt \gg O(1)$. We depict the case $n=4$. This wormhole plays the analogous role of the pinwheel wormhole in \cite{Penington:2019kki}. Right: Deconstruction of half of the birdcage wormhole into propagators glued by the wavefunction $I(\ell_{1\bar{1}},\ell_{2\bar{2}},\ell_{3\bar{3}},\ell_{4\bar{4}})$, prepared by the JT path integral on the hyperbolic square.}
\label{eq:figurewhJT}
\end{figure}

Therefore, the replicated partition function $Z_n$ is explicitly given by
\be \label{Zngrav}
\notag Z_n(t,\beta)  = \frac{1}{\mathcal{N}^{n}}\int \text{d} \vec{\ell}\,\text{d}\vec{\ell}\,'\,I_{n}(\ell_{1\bar{1}},...,\ell_{n\bar{n}})I_{n}(\ell_{1\bar{1}}',...,\ell_{n\bar{n}}')\,\mathscr{G}_{t,\beta}(\ell_{1\bar{1}},\ell_{1\bar{1}}')...\mathscr{G}_{t,\beta}(\ell_{n\bar{n}},\ell_{n\bar{n}}')\;,
\ee
where $\mathcal{N}$ is the normalization of the states in the ensemble, defined in \eqref{eq:norm}, i.e. it is just $Z_1$. To simplify the dependence on $n$, we notice this expression can be rewritten as
\be
\notag Z_n(t,\beta)  = \frac{1}{\mathcal{N}^{n}} \int ds ~ ds' \varrho(s) \varrho(s') \left ( \alpha^\beta_t(s,s') \right )^n \; ,
\ee
where we have defined
\be
\alpha_t^\beta(s,s') = \int d\ell ~ d\ell' K_{2\iw s}(4 e^{-\frac{\ell}{2}}) \mathscr{G}_{t,\beta}(\ell,\ell') K_{2\iw s'}(4 e^{-\frac{\ell'}{2}}) \;.
\ee
The $s$ dependence of this expression reads
\be 
\alpha_t^\beta(s,s')\,\propto\, e^{-\frac{\beta}{4} \left (E_s + E_{s'} \right )} \int d\ell ~ d\ell' K_{2\iw s}(4 e^{-\frac{\ell}{2}})\, G_t^\eta(\ell,\ell')\, K_{2\iw s'}(4 e^{-\frac{\ell'}{2}}) \, ,
\ee
where we used the orthogonality relation for Bessel functions \cite{Whittaker} and ``$\propto$'' means independent of $s,s'$ factors.

Although these are involved expressions, they simplify at times greater than the gap time, as in all previous cases. Indeed, at large $t$, we can approximate the propagator \eqref{propjt} by keeping the first two bound states
\be
G^\eta_t(\ell, \ell') \approx e^{-tE_0} \left (\psi_0(\ell) \psi_0^*(\ell') + e^{-tE_{\text{gap}}} \psi_1 (\ell) \psi_1^*(\ell') \right ) \, .
\ee
In this approximation we get
\be
\alpha_t^\beta(s,s') \approx e^{-tE_0} \left (\alpha_0(s)\, \alpha^*_0(s') + e^{-tE_{\text{gap}}} \alpha_1(s) \,\alpha^*_1(s') \right ) \, ,
\ee
with
\be
\alpha_n(s)\, \propto\, e^{-\frac{\beta}{4}E_s}\int d\ell K_{2\iw s}(4 e^{-\frac{\ell}{2}}) \psi_n(\ell) \, .
\ee
The proportionality factor does not matter since it will cancel out with the denominator in the expression of the replica partition function $Z_n$. Then, defining
\be 
\varrho_n(s)\equiv e^{-\frac{\beta}{4}E_s} \int d\ell K_{2\iw s}(4 e^{-\frac{\ell}{2}}) \psi_n(\ell)\;,
\ee
and also
\be
\rho_\equil(s) \equiv \frac{\varrho_0(s)}{\int_0^\infty ds' \varrho(s') \varrho_0(s')} \,,\hspace{1cm}  f(s) \equiv \frac{\varrho_1(s)}{\int_0^\infty ds' \varrho(s') \varrho_0(s')} \; ,
\ee
we finally obtain the following formula for the replicated partition function, valid at times much greater than the gap time
\begin{align}\label{znjt}
\notag Z_n(t, \beta) \approx \left ( \int_0^\infty ds \varrho(s) \rho_{\equil}^n(s) \right )^2 &+ n e^{-tE_{\text{gap}}} \left [ \left (\int_0^\infty ds \varrho(s) \rho_\equil^{n-1}(s) f(s) \right )^2 - \right . \\
& \left . - \left ( \int_0^\infty ds \varrho(s) f(s) \right )^2 \left ( \int_0^\infty ds \varrho(s) \rho_{\equil}^n(s) \right )^2 \right ] \;.
\end{align}

\subsection{Microcanonical projection and black hole entropy}

As explained in Sec. \ref{Sec:countingframework}, the dimension of the Hilbert space spanned by $\Omega$ states drawn from the Brownian ensemble is
\be\label{eq:transitiondimensionjt}
d_{\Omega} = \min\lbrace \Omega, \lim_{n \rightarrow 0} Z_n\rbrace \,.
\ee
Expression \eqref{znjt}, albeit seemingly complicated, has an extremely simple $n$ dependence. In fact
\be \label{znjtl}
\lim_{n \rightarrow 0}\, Z_n(t, \beta)=\lim_{n \rightarrow 0}\,\left ( \int_0^\infty ds \varrho(s) \rho_{\equil}^n(s) \right )^2\, .
\ee
This follows because the second term is linear $n$, and the dependence in $n$ of the terms in parenthesis has no effect (the limit leads to the inverse of probabilities, which are positive numbers). Defining the following equilibrium state in a single replica Hilbert space
\be
\rho_{\equil}\equiv \int ds \varrho(s) \rho_{\equil}(s)\,\vert E_s\rangle\langle E_s\vert\;,
\ee
which is diagonal in the energy basis of the Schwarzian theory, we can write the limit as
\be 
\lim_{n \rightarrow 0}\, Z_n \left (t, \beta \right )_{\textrm{Single-sided}} =\text{rank}(\rho_{\text{eq}}) \, .
\ee
This is precisely the expectation from general grounds described in \ref{Sec:countingframeworkexpectation}. Now, of course, the rank of such matrix is infinite here. This is also expected, since the state ensemble produces non-vanishing and random tails for all energy sectors of the theory. But the replica partition functions have the usual simple form, namely disregarding the square in the right hand side of \ref{znjtl}, that only arises since we are considering a two boundary scenario, we have that the partition function has the form $ \int_0^\infty ds \varrho(s) \rho_{\equil}^n(s)$. One can then consider a microcanonical projection. More concretely, following \cite{Penington:2019kki}, where the function $y(s)$ there  gets replaced by $\rho_{\equil}(s)$ here, the projected microcanonical partition functions become $\mathsf{Z}_n (s)\equiv \varrho(s) \rho_{\equil}^n(s)\Delta s$.\footnote{We could include the corrections at large times as well in the projected partition functions, but these are still linear in $n$ and do not contribute to the computation of the Hilbert space dimension.} The Hilbert space dimension spanned by the projected ensemble is then
\be 
d_{\Omega}(t) = \text{min}\lbrace \Omega, d_s^2\rbrace\;,
\ee 
where
\be 
d_s = \dfrac{s}{2\pi^2} \,\sinh(2\pi s)\,.
\ee 
This is the microcanonical Hilbert space dimension of the near-extremal black hole \cite{Maldacena:2016upp,Stanford:2017thb,Mertens:2017mtv,Iliesiu:2020qvm}.

Two remarks follow. First, we note that in the previous formulas we have always disregarded the topological term $e^{S_0}$ that appears in JT gravity besides the Schwarzian action \cite{Maldacena:2016upp}. Including such contribution is straightforward by changing $\varrho(s)\rightarrow e^{S_0}\varrho(s)$. Second, in Sec. \ref{Sec:countingframework}, it was shown that the variance of the dimension of the Hilbert space spanned by the microstates drawn from the Brownian ensembles is zero, and that this remains the same when projecting into a subspace. This implies that the variance in the present computation only comes from the definition of the microcanonical energy window. More precisely, if we define a microcanonical energy window as $E_i\in [E,E+\Delta]$, the dimension of the Hilbert space spanned by the Brownian circuits above will be the number of eigenvalues $n_{E,\Delta}$ within such window. This statement has zero variance when averaging over the disordered couplings defining the Brownian motion. But the number itself has a variance in the associated JT random matrix ensemble \cite{Saad:2019lba} that will be analyzed elsewhere.

\subsection{States in two SYK models and overlaps}
\label{sec:SYKfintemp}

It is clarifying to have a specific microscopic model of the previous JT caterpillars and overlaps in mind. This helps to visualize that the counting remains structurally the same whether we perform it at the microscopic or gravitational level.\footnote{This of course parallels the discussion in Secs. \ref{sec:BrownianSYKexact} and \ref{sec:largeNsyk} concerning the infinite temperature scenario.} To this end, following \cite{Maldacena:2018lmt}, we consider two copies of the SYK model, and define the ensemble of states $\ket{W(t,\beta)}$ of Section \ref{sec:finitetempstates} in this Hilbert space. Microscopically, the operator $W(t)$ is generated as in \eqref{eq:W(t)torder} by a cooling $p$-body SYK Hamiltonian together with a real-time Brownian $q$-SYK Hamiltonian,
\be\label{eq:HamiltonianSYKcooling}
H(t) = -\iw H_{\text{SYK}} + \iw^{\frac{q}{2}}\sum_{|\alpha|=q} g_{\alpha}(t)\,\psi_\alpha\,,
\ee
where the $p$-body SYK Hamiltonian is defined in the standard manner, see e.g. \cite{Maldacena:2018lmt} for the purposes of this discussion. As usual, we now consider a family of $\Omega$ draws of the ensemble
\be 
\mathsf{F}^t_{\Omega} = \lbrace\ket{W_i(t,\beta)}: i=1,...,\Omega \rbrace \;,
\ee
and we ask what the dimension $d_\Omega(t)$ of the linear span of $\mathsf{F}^t_{\Omega}$ is.

The overlap moments $Z_n(t,\beta)$, defined in \eqref{eq:Znetaoverlapfintemp}, are governed by the two-replica effective Hamiltonian $H_{\eff}^{a\bar{b}}$, defined generally in \eqref{eq:tworeplicaeffHfintemp}. In the present scenario, this Hamiltonian coincides with the MQ Hamiltonian \cite{Maldacena:2018lmt},
\be\label{eq:MQ}
H_{\eff}^{a\bar{b}} = H^a_{\text{SYK}} + H^{\bar{b}\,*}_{\text{SYK}} - \frac{\J N}{K} \sum_{|\alpha|=q} \psi^a_\alpha \psi^{\bar{b}}_\alpha + \J N \, .
\ee
Here we have made use of the fact that Majorana strings square to the identity and introduced the rescaled coupling $J = \frac{\J N}{K}$ with $K = {N \choose q}$. Recall that $q$ is even, so that \eqref{eq:HamiltonianSYKcooling}, and therefore $W(t)$, are bosonic operators. At low energies (in units of the SYK couplings), such Hamiltonian admits a semiclassical description in the regime analyzed in \cite{Maldacena:2018lmt}. The only relevant properties for us is that such Hamiltonian is gapped with a gap $E_{\rm gap} = O(N^0)$ and a unique ground state $\ket{\rm GS}$ which admits a semiclassical description as a connected spatial wormhole. The $\ket{\rm GS}$ is similar to a finite-temperature thermofield double state $\ket{\rm TFD}\propto \sum_i e^{-\beta E_i/2}\vert E_i, E_i\rangle$ at some effective temperature. The number of first excited states is $N_* = N$. Hence the associated gap time given by \eqref{eq:gaptime} is $t_\ast = O(\log N)$, of the order of the scrambling time of the black hole.

Now, defining $\rho_t = (e^{-t H_{\eff}^{a\bar{b}}})^{T_\perp}$, see \eqref{eq:defrhot} and lemma \ref{lemma:densitymatrix}, we have that the overlap moments correspond to the following R\'enyi moments
\be\label{eq:momentSYKfintemp} 
Z_n(t,\beta) = \dfrac{\text{Tr} {\rho}^n_{t,\beta}}{\left(\text{Tr} {\rho}_{t,\beta}\right)^n}\,,
\ee 
for the unnormalized density matrix 
\be 
{\rho}_{t,\beta} \equiv \left(e^{-\frac{\beta}{4}H_{\rm SYK}} \otimes e^{-\frac{\beta}{4}H_{\rm SYK}}\right) \rho_t \left(e^{-\frac{\beta}{4}H_{\rm SYK}} \otimes e^{-\frac{\beta}{4}H_{\rm SYK}}\right)\;.
\ee 
This is again the unnormalized density matrix associated to the average state of the ensemble. Taking the limit $n\rightarrow 0$ and using \eqref{eq:transitiondimension} we recover 
\be\label{eq:d_OmegaSYKfintemp}
    d_\Omega(t) =  \text{min}\lbrace \Omega, \rank(\rho_{t,\beta})\rbrace\,.
\ee 
Again we have that $\rank(\rho_{0,\beta}) =1$ and we can argue very generally that  $\rank(\rho_{t,\beta})$ is constant (and maximal) for any $t>0$.

To see this explicitly, note that after the gap time $t \gg t_\ast$, we can use the properties of the MQ Hamiltonian reviewed above to expand it as \footnote{For simplicity, here we subtract the ground-state energy of the MQ Hamiltonian. This does not affect the overlap moments.}
\be 
e^{-tH_{\text{eff}}^{a\bar{b}}} = \ket{\text{GS}}\bra{\text{GS}} + e^{-t\Egap}\,\Pi_{\text{gap}} + \cdots\,,
\ee 
where $\Pi_{\text{gap}}$ is the orthogonal projector onto the subspace of first excited states, and the dots denote higher excitations that decay more rapidly. The $p$-norm distance 
$\|e^{-tH_{\text{eff}}^{a\bar{b}}} - \ket{\text{GS}}\bra{\text{GS}}\|_p$ 
is therefore small, allowing us to approximate the operator by the ground-state projector.

Substituting this large time expansion into \eqref{eq:momentSYKfintemp} yields 
${\rho}^n_{t,\beta} \approx \rho_\equil \otimes \rho_\equil$, where we now define the single-replica (unnormalized) equilibrium density matrix
\be
\rho_\equil \;\equiv\; e^{-\frac{\beta}{4}H_{\text{SYK}}}\, \sqrt{\rho_{\mathsf{L}}}\,e^{-\frac{\beta}{4}H_{\text{SYK}}}, 
\qquad 
\rho_{\mathsf{L}} \;=\; \Tr_{\mathsf{R}}\!\big[\ket{\text{GS}}\bra{\text{GS}}\big].
\ee
Hence, after the gap time, the overlaps approach constant values,
\be
Z_n(t,\beta) \approx
\frac{\big[\Tr(\rho_\equil^n)\big]^2}{\big[\Tr(\rho_\equil)\big]^{2n}} + \cdots,
\qquad t \gg t_\ast\,.
\ee
From this expression it follows that the corresponding dimension becomes constant, corresponding to the square of the Schmidt rank of $\ket{\text{GS}}$ across the $\mathsf{L}|\mathsf{R}$ bipartition,
\be 
\rank(\rho_{t,\beta}) \;=\; \rank(\rho_\equil)^2  \;=\; \rank(\rho_{\mathsf{L}})^2\,.
\ee
Since $\ket{\rm GS}$ is close to a finite temperature TFD state, the rank will be maximal.

\subsection{Higher dimensions and universal structure}

The JT and SYK analysis suggests that it is possible to extend this discussion to higher dimensions, although we leave the specific details for future work. The main reason behind this is that the structure of the replica partition functions is expected to be completely universal. The only assumption required to achieve such universality is the one described in \cite{Magan:2024aet,Magan:2025hce}, when analyzing black hole caterpillars in higher dimensions. There it was assumed that, as for SYK or JT gravity, the effective Hamiltonian
\be
H_{\mathrm{eff}}^{a\bar{b}} = H_0^{a} + H_0^{\bar{b}}\,^* + \dfrac{J}{2} \sum_{\alpha=1}^K \left(\Op_\alpha^{a} - \Op_\alpha^{\bar{b}*} \right)^2\,,
\ee
associated to the random circuit of a holographic CFT, where $H_0$ is the CFT Hamiltonian and $\mathcal{O}_\alpha$ are operators in such CFT, has a unique vacuum
\be
H_{\mathrm{eff}}^{a\bar{b}}\vert \text{GS}\rangle = E_0\,\vert \text{GS}\rangle\;,\label{eq:tworeplicaeffHfintemp2}
\ee
with large entanglement between the replicas and diagonal in the energy basis
\be\label{eq:GSdef} 
\ket{\GS} = \sum_{i} c_i \ket{E^*_i} \otimes \ket{E_i}\,,\quad\quad \rho(\GS) = \sum_{i} c_i \ket{E_i} \bra{E_i}\,,
\ee 
i.e. resembling a thermofield double state. Notice $\rho(\GS)$ is the Choi operator that corresponds to $\ket{\GS}$. To ensure this, the CFT Hamiltonian has to be part of the drive operators as well. See \cite{toappear} and prior \cite{Cottrell:2018ash,Magan:2024aet} for arguments and examples supporting these statements.

Given such assumption and the definition of the circuits \ref{sec:finitetempstates}, namely through the Choi state associated to the operator $W(t,\beta) \equiv e^{-\frac{\beta}{4}H_0}W(t)e^{-\frac{\beta}{4}H_0}$, it is natural to define the equilibrium density matrix
\be\label{eq:eqstate}
\rho_{\equil} = \dfrac{1}{\mathcal{N}}
\,e^{-\frac{\beta}{4}H_0} \rho(\GS) e^{-\frac{\beta}{4}H_0} \,.
\ee
This equilibrium state is normalized by
\be\label{eq:normalizationdef} 
\mathcal{N}\, \equiv Z(\beta)^{1/2}\,\langle \TFD|\GS\rangle\;,
\ee 
i.e. by the norm of state at late times. These definitions imply that if $\ket{\GS}=\vert\textrm{TFD}\rangle$ for some effective temperature $\beta$, then $\rho_\equil$ is the thermal density matrix at such temperature. In general $\rho_\equil$ will be only ensemble equivalent to the thermal density matrix.

Again, after the gap time $t \gg t_\ast$, we can approximate
\be \label{expeff}
e^{-tH_{\text{eff}}^{a\bar{b}}} = \ket{\text{GS}}\bra{\text{GS}} + e^{-t\Egap}\,\Pi_{\text{gap}} + \cdots\,,
\ee 
where $\Pi_{\text{gap}}$ is the orthogonal projector onto the subspace of first excited states. For holographic CFTs, we expect such gap time to be $\Egap t_\star = O(1)$ since that is the number of first excited states over the putative traversable wormhole.

If we define the superoperator
\be \label{2super}
\Phi_t^{a\bar{b}} \equiv\, \left(\, e^{-\frac{\beta}{4}H_0}\otimes e^{-\frac{\beta}{4}H_0}\right)\,e^{-tH_{\text{eff}}^{a\bar{b}}}\, \left(\, e^{-\frac{\beta}{4}H_0}\otimes e^{-\frac{\beta}{4}H_0}\right)\;,
\ee
then at late times we can approximate $\Phi_t^{a\bar{b}}\approx \mathcal{N}^2\,\vert\rho_\equil\rangle\langle\rho_\equil\vert$, see Fig \ref{fig:Heff_tinf}.
\begin{figure}[h]
    \centering
    \includegraphics[width=0.5\linewidth]{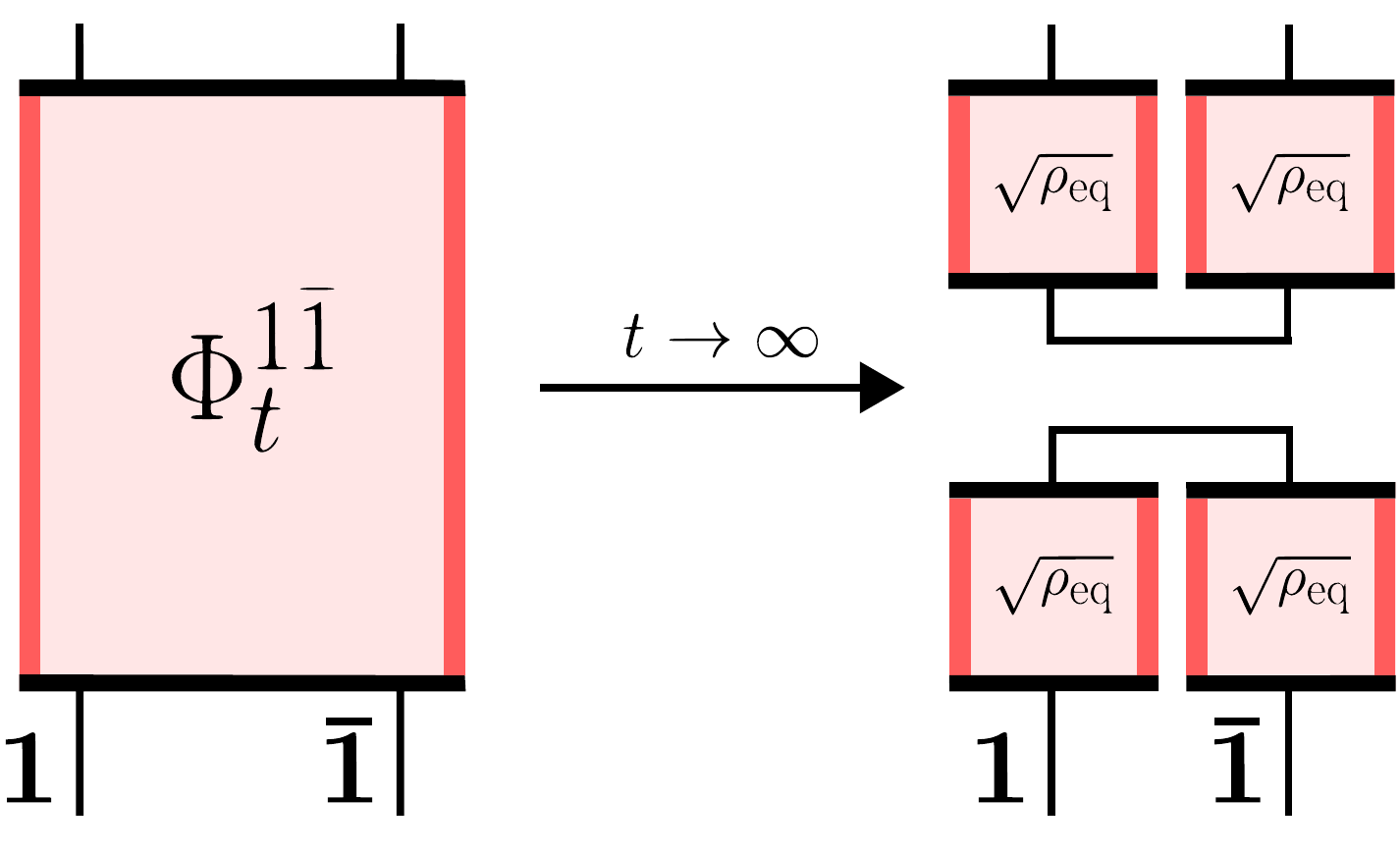}
    \caption{At times much greater than the gap time $t_\star$, the two-replica interacting superoperator $\Phi_t^{a\bar{b}}$ simplifies to a time independent equilibrium state determined by $\rho_\equil$. We have omitted a normalization factor $\mathcal{N}^2$ in the Figure. }
    \label{fig:Heff_tinf}
\end{figure}
Corrections are easily computed using the subleading in time terms in \eqref{expeff}. Then, on quite general grounds we obtain the following universal expansion of the replicated partition functions
\be\label{mainsemi}
Z_n(t) \approx  \text{Tr}(\rho_\equil^n)^2 + n   \frac{e^{-\Egap t}}{\mathcal{N}^2} \Bigl (\langle e^{-\frac{\beta}{4}H_0} \rho_\equil^{n-1} e^{-\frac{\beta}{4}H_0} \vert \Pi_{\text{gap}} \vert e^{-\frac{\beta}{4}H_0} \rho_\equil^{n-1} e^{-\frac{\beta}{4}H_0} \rangle - \text{Tr}(\rho_\equil^n)^2 \langle e^{-\frac{\beta}{2}H_0} \vert \Pi_{\text{gap}} \vert e^{-\frac{\beta}{2}H_0} \rangle \Bigr )  \;,
\ee
where we notice that
\be
\mathcal{N}=Z_1(t) \approx \frac{1}{Z(\beta)}  \left (\text{Tr} \left(e^{-\frac{\beta}{4}H_0}\, \rho (\GS)\,e^{-\frac{\beta}{4}H_0}\right)^2  +  e^{-\Egap t} \langle e^{-\frac{\beta}{2}H_0} \vert \Pi_{\text{gap}} \vert e^{-\frac{\beta}{2}H_0} \rangle \right ).
\ee
In particular the late time stationary value of $Z_n$ is computed by the $n$-Renyi entropy of the equilibrium state, see Fig. \ref{fig:Heff_tinf_Z3}. This generalizes the universal result in \cite{Balasubramanian:2022gmo,Balasubramanian:2022lnw}, which arises in the special case in which $\rho_\equil\approx\rho_\beta$.

From this formula, the dimension of the Hilbert space spanned by these microstates is the expected one by now, i.e.
\be 
d_{\Omega}(t) = \text{min}\lbrace \Omega, \text{rank}(\rho_{\text{eq}})^2\rbrace\;,
\ee 
shown for times greater than the gap time $t_\star$ but expected form smaller times as well.

\begin{figure}[h]
    \centering
    \includegraphics[width=1\linewidth]{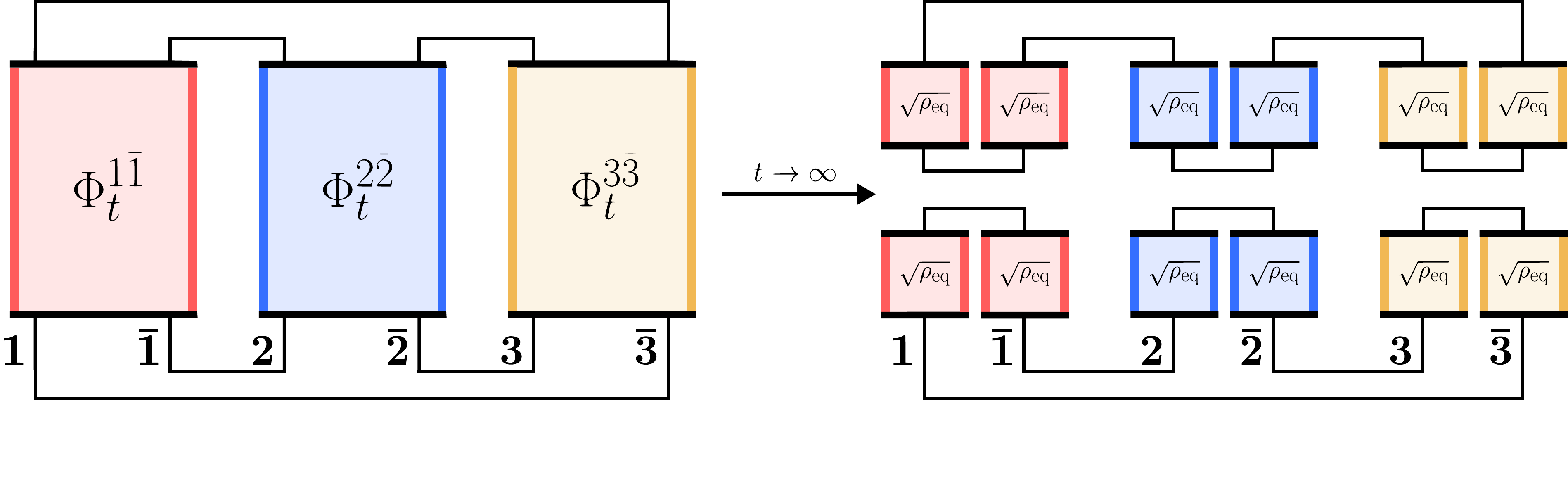}
    \caption{Circuit diagram describing the evaluation of $Z_3$. The large time expansion of the two-replica interacting superoperator $\Phi_t^{a\bar{b}}$ trivializes the computation. We have omitted a normalization factor $\mathcal{N}^{2n}$ in the Figure.}
    \label{fig:Heff_tinf_Z3}
\end{figure}

Since we do not have a precise example of traversable wormhole in higher dimensions we cannot properly build the wormholes that compute $Z_n$. But we expect a wormhole very similar to the JT bird-cage constructed in the previous section, namely we have the $n$-bars of the cage which are the gravitational duals to the effective two-replica evolution operator $e^{-tH_{\text{eff}}^{a\bar{a}}}$, for $a=1,\cdots ,n$, and we have to glue those at future and past infinites with the caps, see Fig \ref{eq:figurewhJT}. Then the action of such wormholes should take the form \eqref{Zngrav}, just by changing the $l$-basis to whatever semiclassical basis is convenient in the higher dimensional theory.

\section{Conclusions}\label{SecVI}

In the introduction we described various caveats pertaining to the recent statistical explanations of black hole entropy and the Page curve. Let us start now the discussion by briefly describing how the approach presented in this article overcomes them.

\begin{enumerate}
    \item {\it Atypical families of black hole microstates}. The microstates studied above were constructed with Brownian motions that, at each time, have computable randomness \cite{Magan:2024aet,Magan:2025hce}. In particular, at large times they become arbitrarily close to typical states. The ability to universally derive \eqref{eq:bhe} by counting diverse families of caterpillars then addresses this question.

    \item {\it Microscopics of shell operators and overlaps}.  While the precise microscopic definition of the shell operator is not under fair control, the microstates used in this article are perfectly defined in the quantum theory through the disordered, time-dependent couplings, associated with particular driving operators. On one hand, this construction grants microscopic control (at least in theory) over overlap moments. On the other hand, this construction is universally applicable to any quantum system.

 \item {\it Implicit statistics}. While the statistics of shell operators is only computable in gravity, the statistics of the Brownian microstates is computable at both sides, through a solid universal and sensible basis, controlled by the mapping of the problem to an effective statistical mechanical system. Also, from the bulk perspective in quantum gravity, each realization of the random couplings prepares a distinct caterpillar with specific semiclassical but erratic features. Then, in this framework, one considers an explicit ensemble of states, draws a finite number of instances, and asks for the Hilbert space dimension the states span. Hence, there is no obvious ``factorization puzzle'' for the overlap moments of the ensemble, which are the inherently ensemble-averaged objects at both sides, from which we extract the dimension. Equivalently, in this scenario we have a precise map between the ensemble in the boundary and the ensemble in the bulk.

\item {\it Variance of the Hilbert space dimension}. The microscopic control over the Brownian microstates allows to show there is no variance in the counting method. This is intuitively clear since the Brownian motion creates absolutely continuous probability distributions in the Hilbert space, and getting two parallel vectors has zero probability. In finite temperature scenarios, the ``variance'' then only comes from the definition of the energy window and not from the counting method, and it can be precisely obtained if the spectrum statistics is known. This will be developed elsewhere.

 \item {\it Large mass limit}. Finally, the physical continuous parameter playing the role of the shell proper mass is the Brownian time $t$. In the present scenario, the replica partition functions can be computed at finite $t$, solving the present issue. In fact, as shown in Appendix \ref{app:resol}, we can generalize the Schwinger-Dyson equation used in \cite{Penington:2019kki} so that we include scenarios in which the  one point function of the inner product is non-diagonal. These required generalizations do not affect the computation of the Hilbert space dimension.

  \item {\it Semiclassical limit}. We have performed the counting microscopically in various many-body quantum systems, and we have also carried it out in the semiclassical limit in the infinite-temperature Brownian SYK and in JT. In all cases it has been immediate to verify that the replica partition functions display the universal late time structure $Z_n(t)\approx \text{Tr}(\rho_\equil^n)^2 +\# n e^{-\Egap t}+\# n^2 e^{-2\Egap t}+\cdots$. The semiclassical limit does not interfere with this structure, in particular it leaves the critical $n\rightarrow 0$ limit intact. This limit implies that the dimension is always $\text{rank}(\rho_{\text{eq}})^2$. The semiclassical limit just affects the precision with which we compute $\Egap$, and whether we can write $\rho_\equil$ as a discrete sum or as a continuous integral. The power of the gravitational path integral is here simply understood as a universal behavior in quantum mechanics.
 
\end{enumerate}

In summary, constructing a linear basis for the Hilbert space of a many-body quantum system is, in principle, straightforward: absent fine-tuning, any collection of $\Omega \geq d$ states will span the full space of dimension $d$. Collections of such states induced from Brownian circuits are specially convenient in this regard. In this work, we have shown that the same idea applies to black holes. This construction  demonstrates that the state-counting derivation of black hole entropy with the gravitational path integral is remarkably robust. Also, it is worth noticing that there is no need for long ER bridges to build bases. In particular, the present results shows that ER bridges with lengths of the order of the scrambling time suffice. Moreover, the results in the quantum mechanical many-body systems suggest that ER bridges of any size suffice.

We end by describing some interesting avenues for future research. First, at a technical level, an interesting and important open problem, already emphasized in \cite{Magan:2024aet,Magan:2025hce}, is the explicit construction of caterpillars and their associated average geometries over couplings, including the disk and replica wormhole saddle points, in higher dimensional Einstein gravity. From the structure of the effective Hamiltonian, the average case is closely related to the explicit construction of eternal traversable wormholes from time-independent double-trace perturbations in higher dimensions. In principle, by choosing suitable drive operators of the Brownian Hamiltonian, like heavy enough operator sourcing localize worldlines, this should be possible. Second, it would also be interesting to better understand if the ground state of the two-replica effective Hamiltonian can provide a high-fidelity thermofield double state in the large-$N$ limit for generic chaotic quantum systems \cite{Maldacena:2018lmt,Cottrell:2018ash,toappear}. In fact, this problem has practical applications in quantum simulation too. Finally, as remarked in \cite{Magan:2024aet,Magan:2025hce}, we find important to explore the application of these new constructions to important open problems in black hole physics. The first concerns the physicality of the black hole interior, given the extreme macroscopic linear dependencies between different geometries and the fact that most of them are almost indistinguishable from random states. The second concerns the validity of the equivalence principle at the event horizon. In all cases, renewed hope stems from the microscopic control we have within this construction, as well as technical control at finite circuit time.

\section*{Acknowledgments}

We are indebted to Martin Sasieta for discussions and collaboration on different stages of this project. We also wish to thank Brian Swingle for discussions. E.B thanks hospitality of Instituto Balseiro where part of this work was developed. The work of J.M is supported by a Ramón y Cajal fellowship from the Spanish Ministry of Science. In the first stages of this project, the work of J.M was supported by Conicet, Argentina. The work of L.M is supported by Conicet, Argentina.

\appendix

\section{Derivation of the effective Hamiltonian}
\label{app:Heff}

This appendix gives a comprehensive derivation of the effective Hamiltonian that arises from averaging over Brownian couplings. Within the path integral formalism, such relations follow from interpreting the couplings as Hubbard–Stratonovich fields and performing the Gaussian integrals (see e.g. \cite{Cardy:2018sdv}). In fact, we already did this in the JT theory, in \eqref{eq:avgcouplingsJT}. Here we instead take a broader perspective and perform the derivation directly in the operator formalism.

Given a scalar or operator-valued functional $A[g_\alpha(t)]$, its ensemble average over the white-noise correlated couplings is defined as
\be\label{eq:ensemblecouplings}
    \overline{A} \equiv \int \prod_{\alpha}\mathcal{D} g_\alpha(t) \,e^{-\frac{1}{2 J} \int \text{d}t \sum_{\alpha} g^2_\alpha(t')} A[g_\alpha(t)]\,,
\ee
where the measure $\prod_{\alpha}\mathcal{D} g_\alpha(t)$ is normalized so that $\overline{1} = 1$. 

\subsection*{Infinite temperature}

For the discussion in Section \ref{Sec:ensemblesinftemp}, the distribution of Brownian couplings defines the Brownian Hamiltonian \eqref{eq:BrownianH}, which in turn induces an ensemble of unitaries $U(t)$ via the definition \eqref{eq:timeorderedinfinite} and associated Choi states $\ket{U(t)}$.

The relevant moments for deriving \eqref{eq:id1} and \eqref{eq:squareoverlapavginftemp} are
\be\label{eq:appendixAmoments}
 \overline{U(t)}\,,\qquad   \overline{ U(t) \otimes U(t)^*}\,.
\ee
From the Markovian nature of the Brownian couplings it follows that $(\overline{U(t)})^2 = \overline{U(2t)}$. In terms of these two moments, the quantities of interest in Section \ref{Sec:ensemblesinftemp} can be written as
\begin{gather} 
\overline{\bra{U_1(t)}\ket{U_2(t)}} = \frac{1}{d}\text{Tr}(\overline{U(2t)})\,,\label{eq:appaid1}\\[.2cm]\qquad Z_n(t) = \bra{\eta} \overline{ U(t)^1 \otimes U(t)^{\bar{1}}\,^*}\otimes \cdots \otimes \overline{ U(t)^n \otimes U(t)^{\bar{n}}\,^*}\ket{\eta}\,,\label{eq:appaid2}
\end{gather}
where in the second expression the superscripts indicate the replicas on which the unitaries act.

We therefore need to evaluate the explicit averages over couplings in \eqref{eq:appendixAmoments}. Expanding the time-ordered exponential gives
\begin{align}          \label{Udet}
    \notag U(t) &=\sum_{n=0}^\infty \frac{(-\iw)^n}{n!} \mathsf{T} \left \lbrace\prod_{i=1}^n \int_0^t \text{d}t_i \sum_{\alpha_i = 1}^K g_{\alpha_i}(t_i) \mathcal{O}_{\alpha_i} \right \rbrace \\
     &= \sum_{n=0}^\infty (-\iw)^n \int_0^t \text{d}t_1 \int_0^{t_1} \text{d}t_2 \ldots \int_0^{t_{n-1}} \text{d}t_n  \sum_{\alpha_1 \ldots \alpha_n = 1}^K g_{\alpha_1}(t_1) \ldots g_{\alpha_n}(t_n) \mathcal{O}_{\alpha_1} \ldots \mathcal{O}_{\alpha_n}\,.
\end{align}
Taking the ensemble average,
\be\label{eq:avgU(t)2}
    \overline{U(t)} = \sum_{n=0}^\infty (-\iw)^n \int_0^t \text{d}t_1 \ldots \int_0^{t_{n-1}} \text{d}t_n  \sum_{\alpha_1 \ldots \alpha_n = 1}^K \overline{g_{\alpha_1}(t_1) \ldots g_{\alpha_n}(t_n) } \,\mathcal{O}_{\alpha_1} \ldots \mathcal{O}_{\alpha_n}\,.
\ee

Since the couplings in \eqref{eq:ensemblecouplings} are Gaussian and white-noise correlated, all odd moments vanish. For the even moments, Wick’s theorem gives
\be
    \overline{g_{\alpha_1}(t_1) \ldots g_{\alpha_{2n}}(t_{2n})} 
         =  J^n \sum_{ [\pi]\in \mathcal{P}_{2n}} \delta_{\alpha_{\pi(1)} \alpha_{\pi(2)}} \delta (t_{\pi(1)} - t_{\pi(2)}) \ldots \delta_{\alpha_{\pi(2n-1)} \alpha_{\pi(2n)}} \delta (t_{\pi(2n-1)} - t_{\pi(2n)})\,,\label{eq:Wickcouplings}
\ee
where $\mathcal{P}_{2n} = \text{Sym}(2n)/H$ denotes the set of all pairings, expressed as the coset by the subgroup $H < \text{Sym}(2n)$ that preserves the standard pairing $\lbrace \lbrace 1,2\rbrace, \ldots, \lbrace 2n-1,2n\rbrace \rbrace $.

Substituting \eqref{eq:Wickcouplings} into \eqref{eq:avgU(t)2}, the time ordering implies that the only nonvanishing contribution among the pairings $[\pi]$ is the identity element $[e]$.  This gives
\be
    \overline{U(t)} = \sum_{n=0}^\infty \left (-J \right )^n \int_0^t \text{d}t_1 \ldots \int_0^{t_{2n-1}} \text{d}t_{2n} \sum_{\alpha_1 \ldots \alpha_{2n} = 1}^K \delta_{\alpha_1 \alpha_2} \delta (t_1 - t_2) \ldots \delta_{\alpha_{2n-1} \alpha_{2n}} \delta (t_{2n-1} - t_{2n}) \mathcal{O}_{\alpha_1} \ldots \mathcal{O}_{\alpha_{2n}}\,.
\ee
Now, using the identity
\be
    \int_0^{y} \text{d}x \; f(x) \; \delta(y - x) = \frac{1}{2} f(y)\,,
\ee
that follows from an ``even'' definition of the delta function, we obtain
\be
    \int_0^t \text{d}t_1 \ldots \int_0^{t_{2n-1}} \text{d}t_{2n} \; \delta (t_1 - t_2) \; \delta (t_3 - t_4) \ldots \delta (t_{2n-1} - t_{2n}) = \frac{1}{n!} \left (\frac{t}{2} \right )^n\,.
    \label{deltas}
\ee
The resulting expression for the average of $U(t)$ is an Euclidean evolution for time $t$:
\begin{align}
    \notag \overline{U(t)} &= \sum_{n=0}^\infty \left (-\frac{Jt}{2} \right )^n \frac{1}{n!} \sum_{\alpha_1 \ldots \alpha_{2n} = 1}^K \delta_{\alpha_1 \alpha_2} \ldots \delta_{\alpha_{2n-1} \alpha_{2n}} \mathcal{O}_{\alpha_1} \ldots \mathcal{O}_{\alpha_{2n}}  \\
     &= \sum_{n=0}^\infty \left (-\frac{Jt}{2} \right )^n \frac{1}{n!} \left (\sum_{\alpha = 1}^K \mathcal{O}^2_{\alpha} \right )^n = e^{-tH_{\eff}}\,,
\end{align}
with the single-replica effective Hamiltonian
\be 
H_{\eff} = \frac{J}{2} \displaystyle\sum_{\alpha = 1}^{K} \mathcal{O}^2_\alpha\,.
\ee
Substituting this into \eqref{eq:appaid1} yields the identity \eqref{eq:id1}.

The analogous identity for $U(t) \otimes U(t)^*$ follows from the identification
\be
    U(t)^1 \otimes U(t)^{\bar{1}}\,^*=\mathsf{T} \left\lbrace e^{-\iw \int_0^t \text{d}t' H(t')}\right\rbrace\,,\qquad   H(t) = \sum_{\alpha = 1}^{K} g_\alpha (t) \mathcal{O}^{1\bar{1}}_\alpha\,,
\ee
with $\mathcal{O}^{1\bar{1}}_\alpha =  \mathcal{O}^{1}_\alpha - \mathcal{O}^{\bar{1}}_\alpha\,^*$. Repeating the same steps as above gives
\be\label{eq:appAUotimesU*}
    \overline{U(t)^1 \otimes U(t)^{\bar{1}}\,^*} =  e^{-tH^{1\bar{1}}_{\eff}}\,,
\ee
with the two-replica effective Hamiltonian
\be
    H^{1\bar{1}}_{\eff} = \frac{J}{2} \sum_{\alpha = 1}^{K} \left (\mathcal{O}^{1}_\alpha - \mathcal{O}^{\bar{1}}_\alpha\,^* \right )^2\,.
\ee
Substituting this into \eqref{eq:appaid2} reproduces the identity \eqref{eq:Znetaoverlapinftemp}.

From these expressions, it follows directly that Lemma~\ref{lemma:densitymatrix} holds.
\begin{proof}[Proof of Lemma~\ref{lemma:densitymatrix}]
Taking the Choi state version of \eqref{eq:appAUotimesU*}, we find
\be
\overline{\ket{U(t)}\bra{U(t)}} = \big(e^{-t H_{\eff}^{a\bar{b}}}\big)^{T_\perp} = \rho_t\,.
\ee
The ensemble average acts as a CPTP map, ensuring that $\rho_t$ is a normalized density matrix.
\end{proof}

\subsection*{Finite temperature}

We now turn to the more general case where a time-independent, non-random term is added to the Brownian Hamiltonian \eqref{eq:BrownianH}. The time-dependent Hamiltonian then takes the form
\be\label{eq:HdetH0}
H(t) = H_0 + \sum_{\alpha = 1}^{K} g_\alpha(t) \mathcal{O}_\alpha = \sum_{\alpha = 1}^{K+1} g_\alpha(t) \mathcal{O}_\alpha\,,
\ee
where $H_0$ is a generic Hermitian operator. In the last expression we have set $g_{K+1}(t) \equiv 1$ for all times and $\mathcal{O}_{K+1} \equiv H_0$. As before, we expand the time-evolution operator using \eqref{eq:HdetH0}:
\begin{align}
    \notag U(t) &= \sum_{n=0}^\infty \frac{(-\iw)^n}{n!} \mathsf{T} \left \lbrace \prod_{i=1}^n \int_0^t \text{d}t_i \sum_{\alpha_i = 1}^{K+1} g_{\alpha_i}(t_i) \mathcal{O}_{\alpha_i} \right \rbrace \\
         &= \sum_{n=0}^\infty (-\iw)^n \int_0^t \text{d}t_1 \int_0^{t_1} \text{d}t_2 \ldots \int_0^{t_{n-1}} \text{d}t_n  \sum_{\alpha_1 \ldots \alpha_n = 1}^{K+1} g_{\alpha_1}(t_1) \ldots g_{\alpha_n}(t_n) \mathcal{O}_{\alpha_1} \ldots \mathcal{O}_{\alpha_n}\,.
\end{align}
Now, in each term of the sum over $n$, there can be $p = 0, \ldots ,n$ insertions of $H_0$. There is a total of $n-p$ random couplings in $g_{\alpha_1}(t_1) \ldots g_{\alpha_n}(t_n)$. From the Gaussianity of the random couplings, only terms with even $n-p$ survive the averaging. This requires us to treat separately the cases with $n$ even ($p$ even) and $n$ odd ($p$ odd). Furthermore, when performing the time integrals, the only nonvanishing contributions are those in which all of the $\delta$-functions appear in the form $\delta (t_i - t_{i+1})$, exactly as in the $H_0 = 0$ case. These contractions generate quadratic factors in $\mathcal{O}_{\alpha_i}$. 

It is also straightforward to see that \eqref{deltas} generalizes to the case with $\frac{n-p}{2}$ $\delta$-functions as
\be
\int_0^t \text{d}t_1 \ldots \int_0^{t_{n-1}} \text{d}t_n \; \delta (t_i - t_{i+1}) \; \ldots \delta (t_j - t_{j+1}) = \frac{t^{\frac{n+p}{2}}}{2^{\frac{n-p}{2}} \left (\frac{n+p}{2} \right )!}\,,
\ee
where we note that there are now $n$ integrals, since $n$ may be odd when $p$ is odd. This result is independent of which $\frac{n-p}{2}$ time variables $t_i, \ldots, t_j$ appear in the arguments of the $\delta$’s.

 It is therefore convenient to introduce the generalization of the binomial theorem to non-commuting operators:
\be
\left (A+B \right )^n = \sum_{p = 0}^n \sum_{\text{perm}} A^p B^{n-p} \, ,
\ee
where $\sum_{\text{perm}} A^p B^{n-p}$ denotes the $\binom{n}{p}$ distinct permutations of terms containing $p$ factors of $A$ and $n-p$ factors of $B$.

With this in mind, we can write
\begin{align}
\notag\overline{U(t)} &= \sum_{n_1=0}^\infty (-1)^{n_1} \sum_{p_1 = 0}^{n_1}    \frac{t^{n_1+p_1}}{(n_1+p_1)!}  \sum_{\text{perm}} H_0^{2p_1} \left (H_1 \right )^{n_1-p_1} \\
&-\iw \sum_{n_2=0}^\infty (-1)^{n_2} \sum_{p_2 = 0}^{n_2} \frac{t^{n_2+p_2+1}}{(n_2+p_2+1)!}  \sum_{\text{perm}} H_0^{2p_2+1} \left (H_1 \right )^{n_2-p_2}\,,
\end{align}
where we have separated the contributions corresponding to even $n = 2n_1$ ($p = 2p_1$) and odd $n = 2n_2+1$ ($p = 2p_2+1$), and moreover $H_1 = \tfrac{J}{2} \sum_{\alpha = 1}^{K} \mathcal{O}^2_\alpha$. The first sum contains terms with an even number of $H_0$ insertions, while the second one contains those with an odd number.

Making the change of variables $n_1 + p_1 = m_1$, $n_1-p_1 = l_1$, $n_2+p_2+1 = m_2$, and $n_2-p_2 = l_2$, we obtain
\begin{align}
\notag \overline{U(t)} &= \sum_{m_1=0}^\infty \frac{(-t)^{m_1}}{m_1!} \sum_{(m_1-l_1) \;\text{even}}^{m_1} \; \sum_{\text{perm}} (\iw H_0)^{m_1-l_1} \left (H_1 \right )^{l_1} \\
\notag &+ \sum_{m_2=1}^\infty \frac{(-t)^{m_2}}{m_2!} \sum_{(m_2-l_2) \;\text{odd}}^{m_2} \; \sum_{\text{perm}} (\iw H_0)^{m_2-l_2} \left (H_1\right )^{l_2} \\
&= \sum_{n=0}^\infty \frac{(-t)^{n}}{n!} \left (\iw H_0 + H_1 \right )^n = e^{-tH_{\eff}}\,,
\end{align}
for the single-replica effective Hamiltonian
\be
H_{\eff} = \iw H_0 + \frac{J}{2} \sum_{\alpha = 1}^{K} \mathcal{O}^2_\alpha\,.
\ee
We emphasize that the same notation is used for the effective Hamiltonian as in the previous case; the distinction should be clear from context.

The analogous identity for $U(t) \otimes U(t)^*$ follows from the identification obtained analogously to the case $H_0 = 0$. We start from
\begin{equation*}
U(t)^1 \otimes U(t)^{\bar{1}}\,^* = \mathsf{T} \left\lbrace  e^{-\iw \int_0^t \text{d}t’ H(t’)} \right\rbrace \,, \qquad H(t) = H_0^{1\bar{1}} + \sum_{\alpha = 1}^{K} g_\alpha (t)\, \mathcal{O}^{1\bar{1}}_\alpha\,,
\end{equation*}
where $H_0^{1\bar{1}} =  H_0^{1} - H_0^{\bar{1}}\,^{*}$. Thus we arrive at
\be
\overline{U(t)^1 \otimes U(t)^{\bar{1}}}\,^*  = e^{-t H_{\eff}^{1\bar{1}}}\,,
\ee
for the two-replica effective Hamiltonian
\be
    H_{\eff}^{1\bar{1}} = \iw  (H_0^{1} - H_0^{\bar{1}}\,^*) + \frac{J}{2} \sum_{\alpha = 1}^{K} \left ( \mathcal{O}^{1}_\alpha - \mathcal{O}_\alpha^{\bar{1}}\,^* \right )^2\,.
\ee

When we perform gradual cooling for $W(t)$ given by \eqref{eq:coolrcircuit}, the system evolves in real time under the Hamiltonian \eqref{eq:coolrhamilt}, which has the same form as \eqref{eq:HdetH0} with the substitution $H_0 \rightarrow -\iw  H_0$. Making the corresponding substitution in the formulas above, we obtain the finite-temperature single-replica and two-replica effective Hamiltonians for the moments:
\begin{gather}
     \overline{W(t)} = e^{-tH_{\eff}}\,,\qquad  H_{\eff} = H_0  + \frac{J}{2} \sum_{\alpha = 1}^{K} \mathcal{O}^2_\alpha\,,\\[.2cm]
      \overline{W(t)^1\otimes W(t)^{\bar{1}}\,^*}  = e^{-tH^{1\bar{1}}_{\eff}}\,,\qquad H_{\eff}^{1\bar{1}} = H_0^{1} + H_0^{\bar{1}}\,^* + \frac{J}{2} \sum_{\alpha = 1}^{K} \left ( \mathcal{O}^{1}_\alpha - \mathcal{O}_\alpha^{\bar{1}}\,^* \right )^2\,.
\end{gather}
Once more, the same notation is used for the effective Hamiltonians as for the infinite-temperature ones; the intended meaning should be evident from the context. Taking the average of \eqref{eq:numfintemppreavg}, \eqref{eq:normfintemppreavg} or considering \eqref{eq:znfintemp}, and substituting these expressions, we arrive at \eqref{eq:avoverlapfintemp} and \eqref{eq:Znetaoverlapfintemp}.

\section{Statistics of the rank of the Gram Matrix: numerics}\label{app:rank}

In this appendix we numerically analyze the statistics of the rank of the Gram matrix $G_{ij} = \braket{\Psi_{i}}{\Psi_{j}}$  with respect to the number of states $\Omega$ for different ensembles. Equivalently, we seek to understand the statistics of the dimension of the Hilbert space spanned by the $\Omega$ states in the ensemble, as discussed in section \ref{Sec:countingframeworkdim}. The procedure to calculate the statistics consist on drawing a set of $\Omega$ vectors from an ensemble, construct the Gram Matrix corresponding to this set, and then calculate it rank. To take the average, we repeat this procedure several times for different realizations. Below we confirm the general discussion given in section \ref{Sec:countingframeworkdim}.

\subsubsection*{Haar Example}
We start from the ensemble of Haar random states. More precisely, we take a Hilbert space $\mathcal{H}$ of dimension $d = \text{dim}(\mathcal{H}) = 100$. From it we take a reference vector $\ket{v_{0}} \in \mathcal{H}$ sampling $d$ random real numbers. The set of vectors is constructed by drawing $\Omega$ unitary matrices $U_{i}$ from the Haar distribution, and then acting over the references vector $\ket{v_{0}}$ to obtain
\be
    \mathsf{F}_\Omega^{\text{Haar}} = \{ \ket{\Psi_{i}} = U_{i} \ket{v_{0}} : \ i = 1, \dots, \Omega \} \ .
\ee
With this ensemble it is straightforward to calculate the Gram matrix and its rank. We repeat this process several times to calculate the mean $\overline{\text{rank}(G)}$ and the variance $\sigma_{G} = \overline{\text{rank}(G)^2}-\overline{\text{rank}(G)}^2$ for a given value of $\Omega$.
In figure \ref{fig:Rank2Local} we plot both quantities.

\begin{figure}[H]
    \centering
    \includegraphics[width=0.6\linewidth]{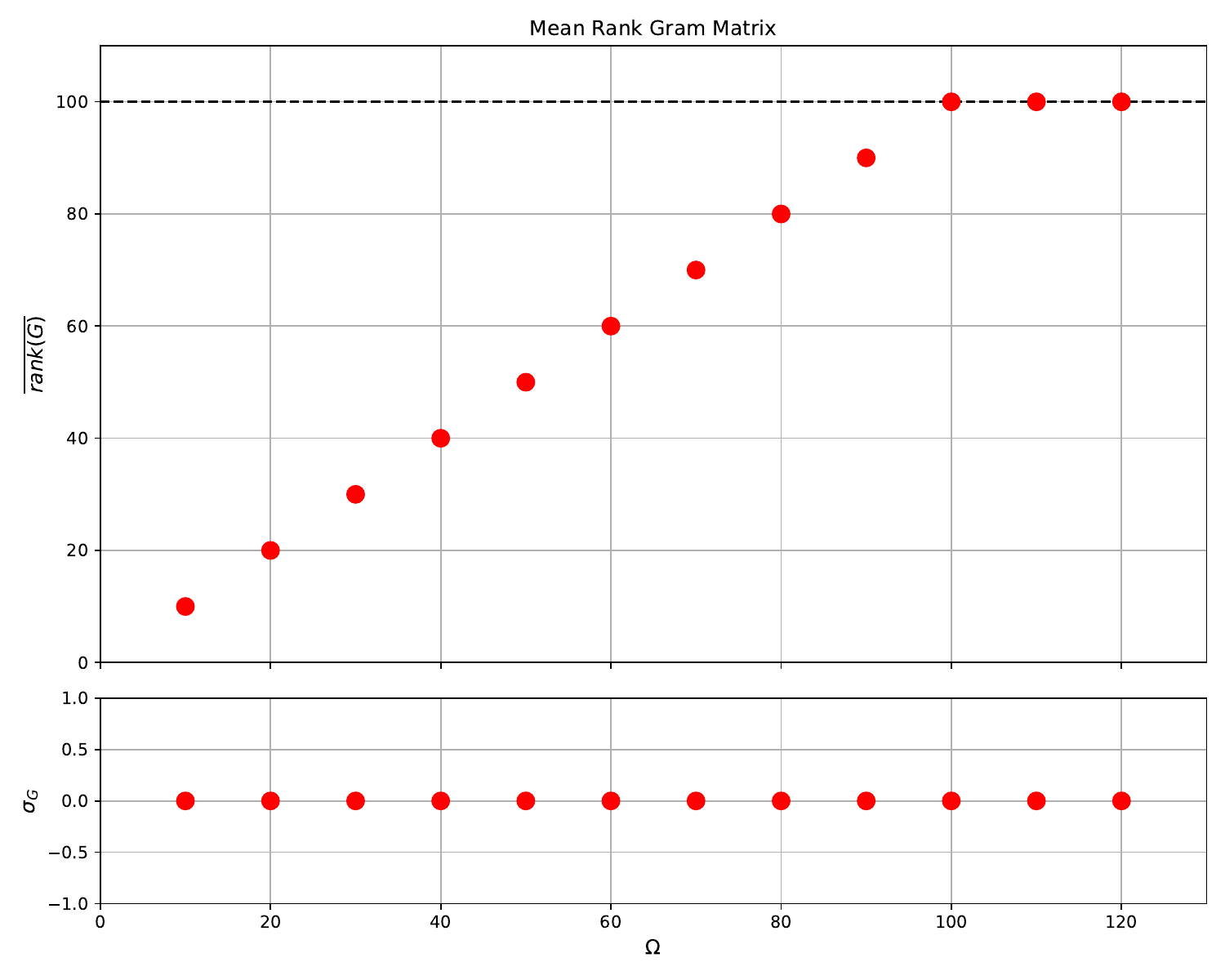}
    \caption{
        Plot of the mean rank $\overline{\rank(G)}$ and standard deviation $\sigma_{G}$ of the Gram Matrix as a function of $\Omega$.
        The black dashed line correspond to the full dimension of the spaces.
    }
    \label{fig:Rank2Local}
\end{figure}

This numerical results can be summarized by
\be
    d_{\Omega} = \overline{\rank(G)} = \min \{\Omega, d \} \ , \quad
    \sigma_{G}^{2} = \overline{\rank(G)^{2}} - \overline{\rank(G)}^{2} = 0 \ .
\ee
The first part is well known by now. More interesting is the behavior of the variance, which is exactly zero, even at finite $\Omega$ and $d$. This of course follows from the general observation described in section \ref{Sec:countingframework}, i.e. the probability of having linear dependence has measure zero $\Omega<d$.

\subsubsection*{Brownian spin cluster example}

We now analyze the statistics of the rank of the Gram matrix for the Brownian spin cluster worked out in section \ref{S:brownianspincluster}. Using the Hamiltonian defined in \eqref{eq:hamiltonianspincluster}, we construct the time evolved states at time $t = M \delta t$ as
\be
    \ket{\Psi_{i}(t)} = \prod_{l=1}^{M} (e^{-iH_{l} \delta t} \otimes I) \ket{\mathbf{1}} \,,
\ee
and the ensemble at time $t$ as
\be
    \mathsf{F}_\Omega^{t} = \{ \ket{\Psi_{i}(t)} | \ i = 1, \dots, \Omega \} \ .
\ee
With this ensemble it is straightforward to calculate the Gram matrix and its rank. We repeat this process several times to calculate $\overline{\text{rank}(G)}$ and the variance $\overline{\text{rank}(G)^2}-\overline{\text{rank}(G)}^2$ for a given value of $\Omega$.
For the calculations we take $q=2$, $N=4$, $\delta t =1$ and $J = \frac{1}{10 K}$ which gives a $t_{\text{gap}} \propto 1 / E_{\text{gap}} = 16.75$.
In figure \ref{fig:RankGramqLocalAll} we plot both quantities as a function of $\Omega$ for different values of the time $t$.

\begin{figure}[H]
    \centering
    \includegraphics[width=0.6\linewidth]{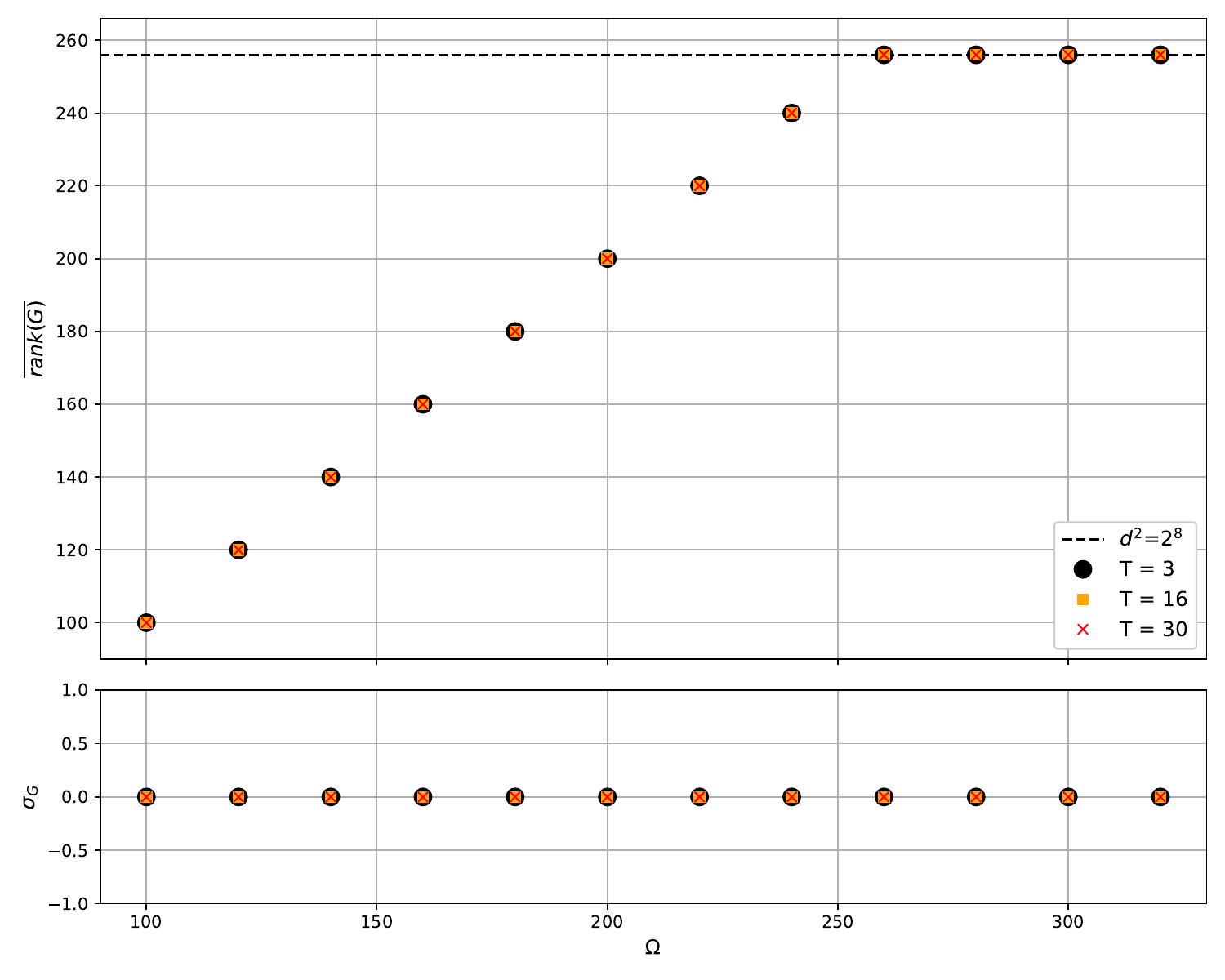}
    \caption{
        Plot of $\overline{\rank(G)}$ and its standard deviation $\sigma_{G}$ of the Gram Matrix as a function of $\Omega$ at different times $t$. The black dashed line is the dimension of the Hilbert space.
    }
    \label{fig:RankGramqLocalAll}
\end{figure}

Again, these numerical results verify that for every $t$ and $\Omega$ we have
\be
    d_{\Omega} = \overline{\rank(G)} = \min \{ \Omega, d \} \ , \quad
    \sigma_{G}^{2} = \overline{\rank(G)^{2}} - \overline{\rank(G)}^{2} = 0 \,,
\ee
showing that the variance is zero for the state ensembles induced by the Brownian circuits, even for small $\Omega$ and $d$, in accordance with the discussion in section \ref{Sec:countingframework}.

\section{Resolvent computations in Brownian circuits}\label{app:resol}

In this Appendix we recover the finite $\Omega$ formula \eqref{eq:transitiondimension} from resolvent type computations, expanding on the methods of \cite{Penington:2019kki}. The objective is to test the validity of the planar approximation and the assumed vanishing of the average inner product between different microstates.

\subsection{Resolvent counting methods}

We first review the standard approach. Consider the resolvent of the Gram matrix $G_{ij}=\langle \Psi_i\vert\Psi_j\rangle$ of overlaps, where $\vert\Psi_j\rangle$ are a set of $\Omega$ states. This is defined as
\be\label{eq:resolapp}
R_{ij}(\lambda) \equiv \left( \frac{1}{\lambda \mathbf{1} - G} \right)_{\! ij}=\frac{1}{\lambda}\delta_{ij}+\sum\limits_{n=1}^{\infty}\,\frac{1}{\lambda^{n+1}}(G^n)_{ij}\,.
\ee
where we have expanded the resolvent around large $\lambda$. To find the dimension spanned by the states we can proceed in two analog manners. Denote by $R(\lambda) = \sum_{i=1}^\Omega R_{ii}(\lambda)$ the trace of the resolvent and by $\lbrace \lambda_i: i=1,...,\Omega\rbrace $ the spectrum of eigenvalues of $G$. There is a well-known mathematical relation between the density of eigenvalues $D(\lambda) = \sum_{i=1}^\Omega \delta(\lambda-\lambda_i)$ of a matrix such as $G$, and the discontinuity of $R(\lambda)$ along the imaginary axis, 
\be\label{eq:densitydisc}
D(\lambda)=\lim\limits_{\epsilon \rightarrow 0}\frac{1}{2\pi \iw}\left(R(\lambda-\iw\epsilon)-R(\lambda+\iw\epsilon)\right)\,.
\ee
So if one finds $R(\lambda)$, then the dimension follows directly from the integral of the density of eigenvalues over the positive real axis
\be\label{eq:dimspan}
d_\Omega = \int_{0^+}^\infty\text{d}\lambda\, D(\lambda) \,.
\ee 
Alternatively, we can use the fact that the trace of the resolvent of the Gram matrix can be written as
\be \label{eq:kerg}
R(\lambda)=\sum\limits_{i=1}^{\Omega}\,\frac{1}{\lambda  -\lambda_i}=\frac{\text{Ker}(G)}{\lambda}+\sum\limits_{\lambda_i>0}\,\frac{1}{\lambda  -\lambda_i}\,,
\ee
where we are using that $G$ is Hermitian and positive semi-definite. Then the trace of the resolvent has a simple pole at $\lambda=0$ with residue equal to $\text{Ker}(G)$. The dimension follows directly from this, together with \eqref{eq:Gramdimensionrankker},
\be \label{eq:domega}
d_\Omega = \Omega - \text{Res}_{\lambda=0} \,R(\lambda)\,.
\ee
The goal now is to use these methods to compute the average dimension $\overline{d_\Omega}$ spanned by a randomly chosen $\Omega$ states, where the overline denotes averaging. Since the methods above work for individual instances of $G$, we just need to use the average resolvent $\overline{R(\lambda)}$. The difference is now that $\overline{R(\lambda)}$ typically has branch cuts instead of simple poles, but both \eqref{eq:densitydisc} and \eqref{eq:domega} are still valid for the average. More concretely, notice that the average over the second term $\sum_{\lambda_i>0}\,\frac{1}{\lambda  -\lambda_i}$ cannot contribute to the pole at $\lambda=0$ to the averaged resolvent since no instance in the ensemble shows that pole. The average of the first term is just the average of $\text{Ker}(G)$ divided by $\lambda$. Then the residue at $\lambda =0$ of the averaged resolvent provides the desired result.

So we seek the averaged resolvent $\overline{R(\lambda)}$. The first simplification occurs in the large-$\Omega$ expansion, with $ d\Omega^{-1}$ held fixed. Assuming that
\be \label{asszn}
\overline{\bra{\Psi_{i_1}}\ket{\Psi_{i_2}} \bra{\Psi_{i_2}}\ket{\Psi_{i_3}}... \bra{\Psi_{i_n}}\ket{\Psi_{i_{n+1}}}}= \delta_{i_{1},i_{n+1}}\,\overline{\bra{\Psi_{i_1}}\ket{\Psi_{i_2}} \bra{\Psi_{i_2}}\ket{\Psi_{i_3}}... \bra{\Psi_{i_n}}\ket{\Psi_{i_{1}}}}\equiv \delta_{i_{1},i_{n+1}} \,Z_n\;,
\ee
for $i_i\neq i_2\neq \cdots\neq i_n$, which in particular says
\be
\overline{\bra{\Psi_{i_1}}\ket{\Psi_{i_2}} }=\delta_{i_{1},i_{2}}\,,
\ee
it was shown in \cite{Penington:2019kki}, following earlier \cite{speicher2009freeprobabilitytheory,Cvitanovic:1980jz}, that the average resolvent satisfies the following Schwinger-Dyson (SD) equation 
\be\label{eq:SchwingerDyson}
    \overline{R(\lambda)} = \frac{\Omega}{\lambda} + \frac{1}{\lambda} \sum_{n=1}^{\infty} Z_n \overline{R(\lambda)}^{\,n} \ .
\ee
The SD equation \eqref{eq:SchwingerDyson} is derived in a ``planar approximation'' that only considers diagrams whose index contractions can be drawn on the plane. It relies on the presence of two large parameters in the calculation: the number of states $\Omega$ and the total dimension $d$. Repeated indices include suppression factors of $\Omega^{-1}$, since they effectively lower the number of traces, while at the same time they are enhanced by factors of $d$, since they increase the number of loops. The relevant parameter is $d\,\Omega^{-1}$. It is simple to check in explicit examples that planar diagrams have the dominant scalings in the limit $\Omega,d\rightarrow\infty$ with $d\,\Omega^{-1}=\text{constant}$. 

Note the expansion of the resolvent in terms of the moments of the Gram matrix \eqref{eq:resolapp} holds in the case where $\lambda$ is bigger than all the eigenvalues of $G$. But expression \eqref{eq:domega} instructs us to inspect $\overline{R(\lambda)}$ in the limit $\lambda \rightarrow 0$. Indeed, stepping on the SD equation above, we formally have
\be
d_\Omega = - \lim\limits_{\lambda \to 0}\, \sum_{n=1}^{\infty} Z_{n} \overline{R(\lambda)}^{n}\,.
    \label{eq:domega2}
\ee
We can bypass this problem by further analyzing the SD equation for the resolvent in the case that the states $\vert \psi_i\rangle$ are drawn from some ensemble. In this scenario, we define the average state
\be 
\rho\equiv  \overline{\ket{\Psi}\bra{\Psi}}\,.
\ee
This state is, by construction, a normalized density matrix. We denote its $k$ different non-zero eigenvalues by $\lbrace p_\alpha: \alpha = 1,...,k\rbrace$, and their multiplicities by $\lbrace \omega_\alpha: \alpha = 1,...,k\rbrace$, so that $\sum_{\alpha=1}^{k}\, \omega_{\alpha} p_{\alpha} = 1$. 
We can write the coefficients of the SD equation \eqref{eq:SchwingerDyson} in terms of the R\'enyi entropies of this density matrix
\be 
Z_n=\textrm{Tr}\,\rho^n=\sum_{\alpha=1}^{k} \omega_{\alpha} p_\alpha^n\,.
\ee
The SD equation \eqref{eq:SchwingerDyson} then can be formally resummed
\be\label{eq:resummedSD}
    \lambda \overline{R(\lambda)}  = {\Omega} +  \sum_{\alpha=1}^{k} \sum_{n=1}^{\infty} \omega_{\alpha} \left( p_\alpha \overline{R(\lambda)}\right)^{n} = {\Omega} +  \sum_{\alpha=1}^{k} \omega_{\alpha} \frac{p_\alpha\,\overline{R(\lambda)}}{1-p_\alpha \,\overline{R(\lambda)}}\,.
\ee
In this form, we can analyze the limit $\lambda \rightarrow 0$. First, if $\overline{R}(\lambda)$ goes to infinity as $\lambda$ goes to zero, we can combine this equation with \eqref{eq:kerg} to obtain
\be 
d_\Omega = \sum_{\alpha=1}^{k} \,\omega_{\alpha} = \lim_{n \to 0} Z_n\;.
\label{Z0}
\ee
The limit defines $\text{rank}(\rho)$ by analytic continuation of the R\'enyi entropies. But \eqref{eq:kerg} also shows that $R(\lambda)$ diverges as $\lambda \rightarrow 0$ only if $\text{Ker}(G) >0$. Therefore, again using \eqref{eq:kerg} we have
\be 
\text{Ker}(G) = \Omega - d_\Omega > 0 \Rightarrow \Omega > d_\Omega.
\ee
Equivalently
\be  
d_\Omega = \lim\limits_{n \to 0} Z_n \,\,\,\, \text{only if} \,\,\,\,\Omega > \lim\limits_{n \to 0} Z_n\;.
\ee
On the other hand, if $R(\lambda)$ is continuous in $\lambda = 0$, eq.\eqref{eq:kerg} shows that $d_\Omega = \Omega$. We finally conclude
\be
d_\Omega = \text{min} \left \{\Omega, \lim_{n \to 0} Z_n \right \},
\ee
where we notice that if $\Omega = \Omega_0 > \lim\limits_{n \to 0} Z_n$ and $d_\Omega = \Omega_0$, by increasing $\Omega > \Omega_0$ (by adding more states to the set of generators), eventually $R(\lambda)$ will diverge and we shall have $d_\Omega = \lim\limits_{n \to 0} Z_n < \Omega_0$. But this is not possible since it would mean that adding more states to a set of generators would end up generating a smaller subspace.

For completeness, we note that another formula for the dimension follows by defining the function
\be
    f(z) = \text{Analytic extension of} \  \sum_{n=1}^{\infty} Z_{n} z^{n}\,,
\ee
whose Taylor expansion around $z=0$ coincides with the series $\sum_{n=1}^{\infty} Z_{n} z^{n}$. Using the previous form for $Z_n$ we have
\be\label{secondd}
     d_{\Omega} = - \lim_{x \rightarrow \infty} f(x) \ .
\ee
We now expand on the application of this method to the case of interest in this paper, namely the case the states are drawn from Brownian circuits.

\subsection{Planar resolvent vs exact numerics}
\label{sS:planarnumerics}

We recall that the previous analysis applies when \eqref{asszn} holds. In particular, it holds when
\be\label{onepo}
\overline{\bra{\Psi_{i_1}}\ket{\Psi_{i_2}} }=\delta_{i_1,i_2}\,.
\ee
This is a problem for specific constructions in quantum gravity. For example, for the shells used in \cite{Balasubramanian:2022gmo,Balasubramanian:2022lnw}, although small, inner products between shell states of different masses are non-zero since there is a small amplitude for annihilation and creation of dust particles due to gravitational effects. This forces one to go to the limit in which all masses and all mass differences diverge.

In the case of interest of this paper, i.e. states drawn from Brownian circuits, we face a similar situation. The advantage of the present framework is that such non-vanishing ``one-point functions'' $\overline{\bra{\Psi_{i_1}}\ket{\Psi_{i_2}} }$ can be explicitly computed using the methods developed above. Indeed, from the results of Appendix \ref{app:Heff}, we know that such amplitudes decay exponentially fast with the circuit time. Then we should be able to verify numerically the validity of the planar approximation and the SD equation, at least after some time scale. In this appendix, using specific examples, we compare the exact averaged resolvent obtained numerically using \eqref{eq:kerg}, with the resolvent obtained by the SD equation in the form \eqref{eq:resummedSD}.

To compute $\overline{R(\lambda)}$ numerically we use \eqref{eq:kerg}. We first perform multiple iterations in which we sample $\Omega$ vectors from the ensemble and construct the corresponding Gram matrix. For each sample, we extract the eigenvalues of the Gram matrix, use them to build the resolvent via \eqref{eq:kerg}, and evaluate it for several values of the parameter $\lambda$. Since the resolvent has poles along the real axis, we shift $\lambda \rightarrow \lambda + i\epsilon$ with small real constant $\epsilon$ to avoid divergences. Finally, we average the resulting values of $R(\lambda)$ across all iterations to obtain $\overline{R(\lambda)}$.

To compute the numerical planar resolvent $\overline{R(\lambda)}$ we use \eqref{eq:resummedSD}. We first construct the density matrix $\rho$ by sampling $M$ vectors $\ket{\Psi_i}$ from the ensemble and computing the average: $\overline{\ketbra{\Psi}} = \frac{1}{M} \sum_{i=1}^{M} \ketbra{\Psi_i} = \rho$. We then compute the eigenvalues $p_i$ of $\rho$ and solve \eqref{eq:resummedSD} for $\overline{R(\lambda)}$ at a given value of $\lambda$. Since \eqref{eq:resummedSD} is a polynomial equation in $\overline{R(\lambda)}$, its solution is generally not unique. We therefore expect that the resolvent obtained from the Gram matrix corresponds to one of the branches of solutions derived from this method.

\subsubsection*{Haar example}

We start with Haar random states, namely states constructed from unitaries drawn with the Haar measure. More precisely, we take a Hilbert space $\mathcal{H}$ of dimension $d = \text{dim}(\mathcal{H}) = 100$, and fix a reference vector $\ket{v_{0}} \in \mathcal{H}$ by sampling $d$ random real numbers. To compute $\overline{R}(\lambda)$ using \eqref{eq:kerg}, we draw $\Omega$ unitary matrices $U_{i}$ with a Haar distribution. Then we let those unitaries act over the reference vector $\ket{v_{0}}$. This leads to the ensemble of states
\be
    \mathsf{F}_{\Omega}^{\text{Haar}} = \{ \ket{\Psi_{i}} = U_{i} \ket{v_{0}} | \ i = 1, \dots, \Omega \} \,.
\ee
With this ensemble we can construct a Gram matrix and calculate the resolvent $R(\lambda)$ as was explained before. We repeat this process $N = 1000$ to obtain $\overline{R(\lambda)}$.

To compute $\overline{R}(\lambda)$ using \eqref{eq:resummedSD} it is well known that for Haar random unitaries $U$ it follows
\be
    \overline{\ketbra{\Psi}}
    = \frac{1}{d} \Tr \left( \ketbra{v_{0}} \right) 
    = \frac{1}{d} \,.
\ee
Then the eigenvalues are trivially $p_{i} = 1/d$, and the mean resolvent takes the form
\be
    \lambda \overline{R(\lambda)} = \Omega + d \frac{\overline{R(\lambda)}}{d - \overline{R(\lambda)}} \,,
\ee
which is a quadratic equation in $\overline{R(\lambda)}$ and therefore has an explicit solution.

In figure \eqref{fig:ResolventHaarN} we plot the results obtained from the two methods, showing a fair match already for relatively small $\Omega$ and $d$.

\begin{figure}[h]
    \centering
    \includegraphics[width=0.8\linewidth]{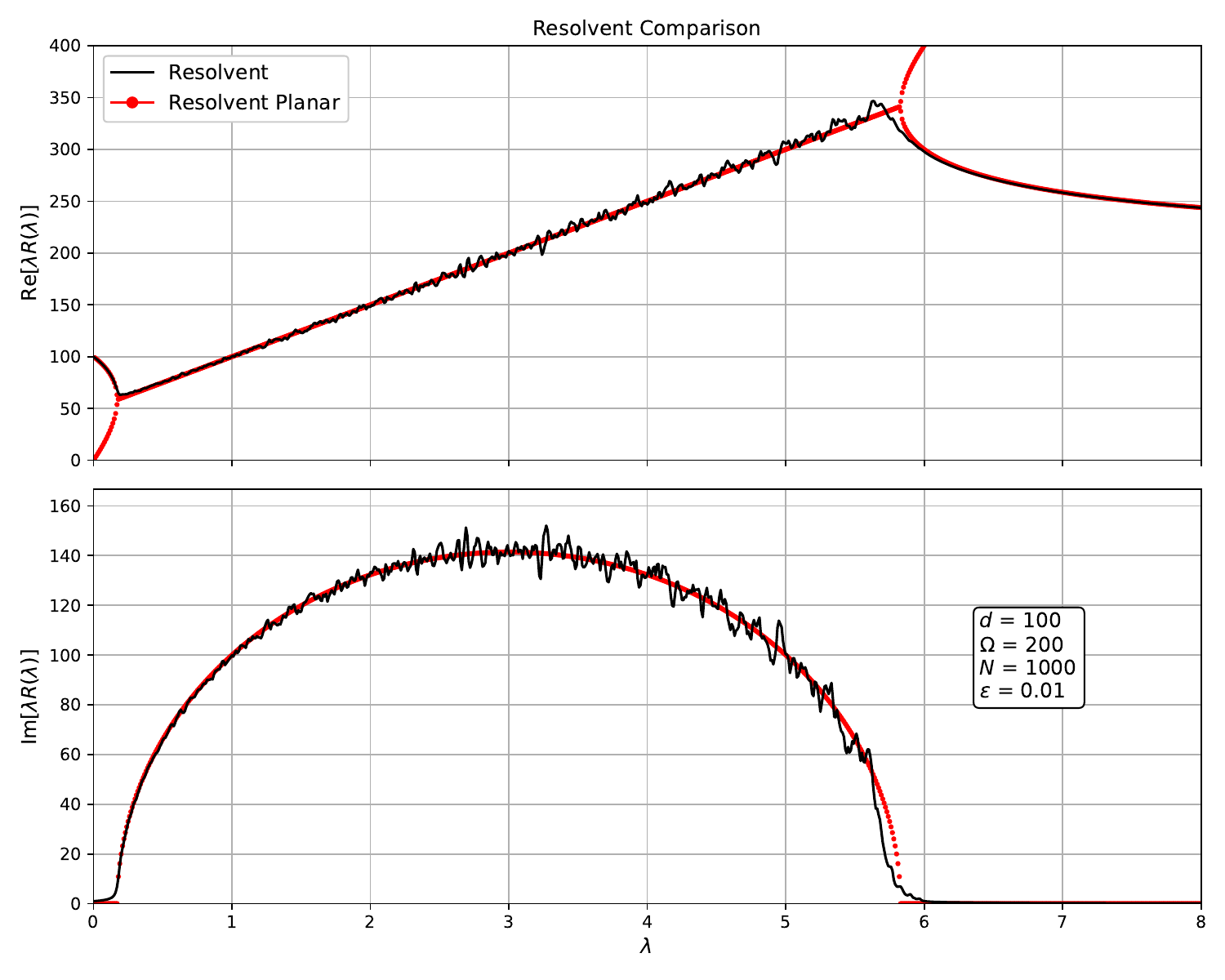}
    \caption{
    Plot of the real and imaginary part of the mean resolvent $\lambda \overline{R}(\lambda)$ calculated from the two methods described in the text for the Haar ensemble. The black solid line corresponds to the resolvent calculated with the Gram matrix \eqref{eq:kerg} and the red point curve with \eqref{eq:resummedSD}, which only takes into account the planar diagrams.
    The parameters of the calculation are $d = 100$, $\Omega = 200$, $\epsilon = 0.01$, with averages taken over 1000 draws.
    }
    \label{fig:ResolventHaarN}
\end{figure} 

\subsubsection*{Brownian GUE Evolution example}

We now analyze the time-dependent Brownian GUE. We seek to understand whether the planar approximation is affected at sufficiently small times.

We start from a Hilbert space $\mathcal{H}$ of dimension $d = \text{dim}(\mathcal{H}) = 100$. From it we choose a reference vector $\ket{v_{0}} \in \mathcal{H}$ sampling $d$ random real numbers. The state ensemble $\mathsf{F}_{\Omega}^{t}$ at a time $t$ is
\be\label{eq:brownianGUEapp}
    \mathsf{F}_{\Omega}^{t} = \{ \ket{\Psi_{i}(t)} | \ i = 1, \dots, \Omega \} \ .
\ee
The states $\ket{\Psi_{j}(t)}$ are constructed as
\be
    \ket{\Psi_{j}(t)} = \prod_{l=1}^{M} e^{-iH_{l} \delta t} \ket{v_{0}} \ ,
\ee
where $H_{l}$ is an Hermitian matrix taken from the Gaussian unitary ensemble (GUE) with zero mean and standard deviation equal to one and $t = M \delta t$.

Following the same procedure explained before, we calculate the mean resolvent using the two different methods. We plot these computations for different times in Figs. \eqref{fig:ResolventGUET5}, \eqref{fig:ResolventGUET1} and \eqref{fig:ResolventGUET01}. For the computations we take $\Omega = 20$, perform the average over 1000 draws, and analyze three different times $t=5$, $t=1$ and $t=0.1$.
As might have been anticipated, we observe a good matching already for very low values of $\Omega$ and $d$. But, maybe unexpectedly given the comments above, we find a fairly good matching even at small times. We expand on this in the next section.

Summarizing, the numerical analysis let us conclude that the planar approximation works well for Brownian circuits and that $\mathsf{F}_{\Omega}^{t}$ spans the full Hilbert spaces at every time $t$, so that
\begin{equation}
    d_{\Omega} = \Omega - \text{Res}_{\lambda = 0} R(\lambda) = \Omega - \lim_{\lambda \rightarrow 0} \lambda R(\lambda) = d \ .
\end{equation}

\subsection{Generalized Schwinger-Dyson equation}

The numerical analysis of the previous section led us to the conclusion that the planar approximation works well at all times. This seems at odds with the fact that for random circuits 
\be
A_n \equiv\overline{\braket{\Psi_{i}}{\Psi_{i_{1}}} \braket{\Psi_{i_{1}}}{\Psi_{i_{2}}} \cdots \braket{\Psi_{i_{n-1}}}{\Psi_{j}}}\;,
\ee
does not vanish for $i\neq i_l\neq j$. Let us provide an explanation for this observation.

\begin{figure}[H]
    \centering
    \includegraphics[width=0.6\linewidth]{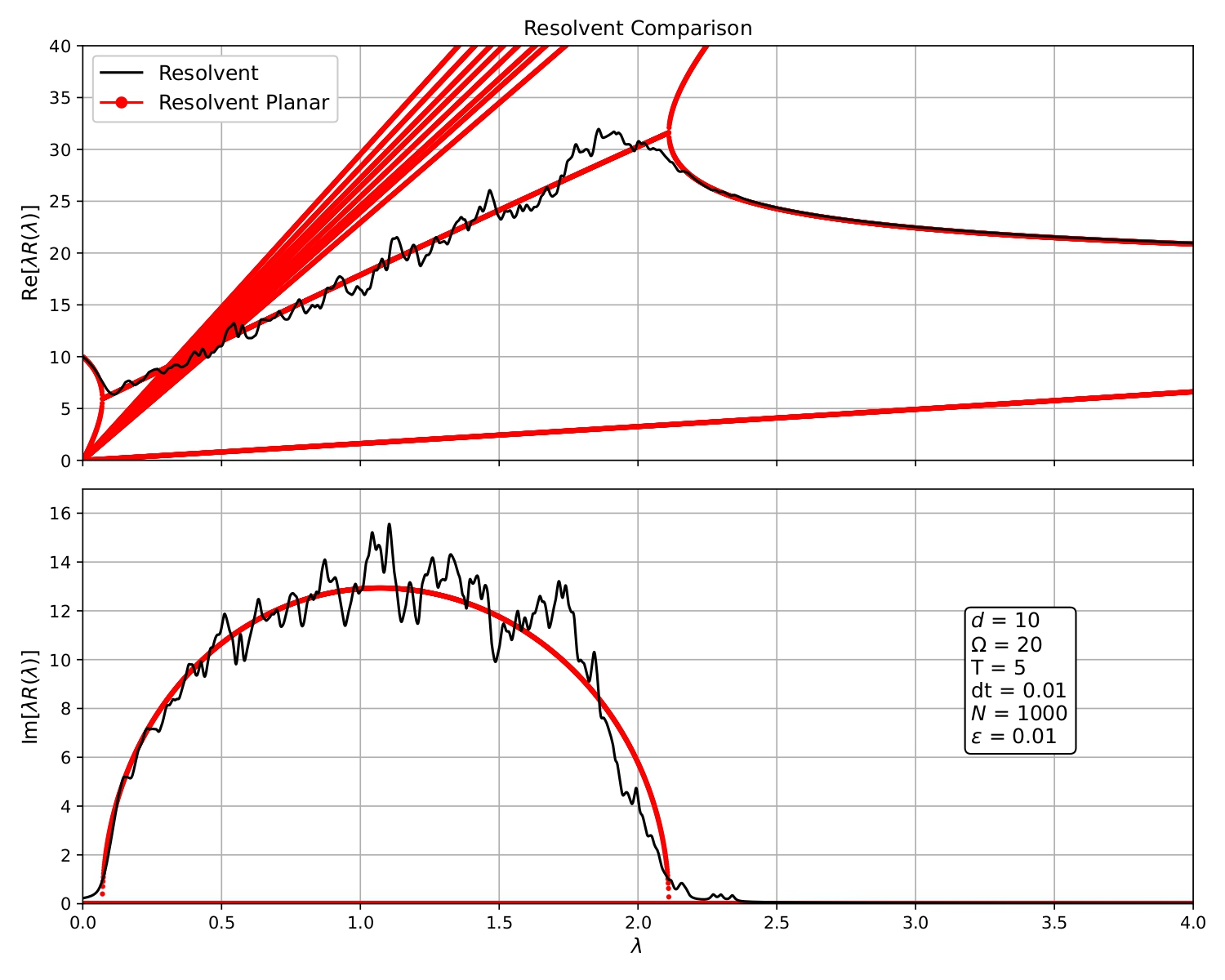}
    \caption{
    Plot of real and imaginary part of the mean resolvent $\lambda \overline{R}(\lambda)$, calculated from different methods for the Brownian GUE.
    The black solid line correspond to the resolvent calculated with the Gram matrix \eqref{eq:kerg} and the red point curve with \eqref{eq:resummedSD}, which only takes in account the planar diagrams.
    The parameters used in the calculation are $T=5$, $dt=0.01$, $d = 10$, $\Omega = 20$, $\epsilon = 0.01$, with averages taken over 1000 draws.
    }
    \label{fig:ResolventGUET5}
\end{figure}

\begin{figure}[H]
    \centering
    \includegraphics[width=0.6\linewidth]{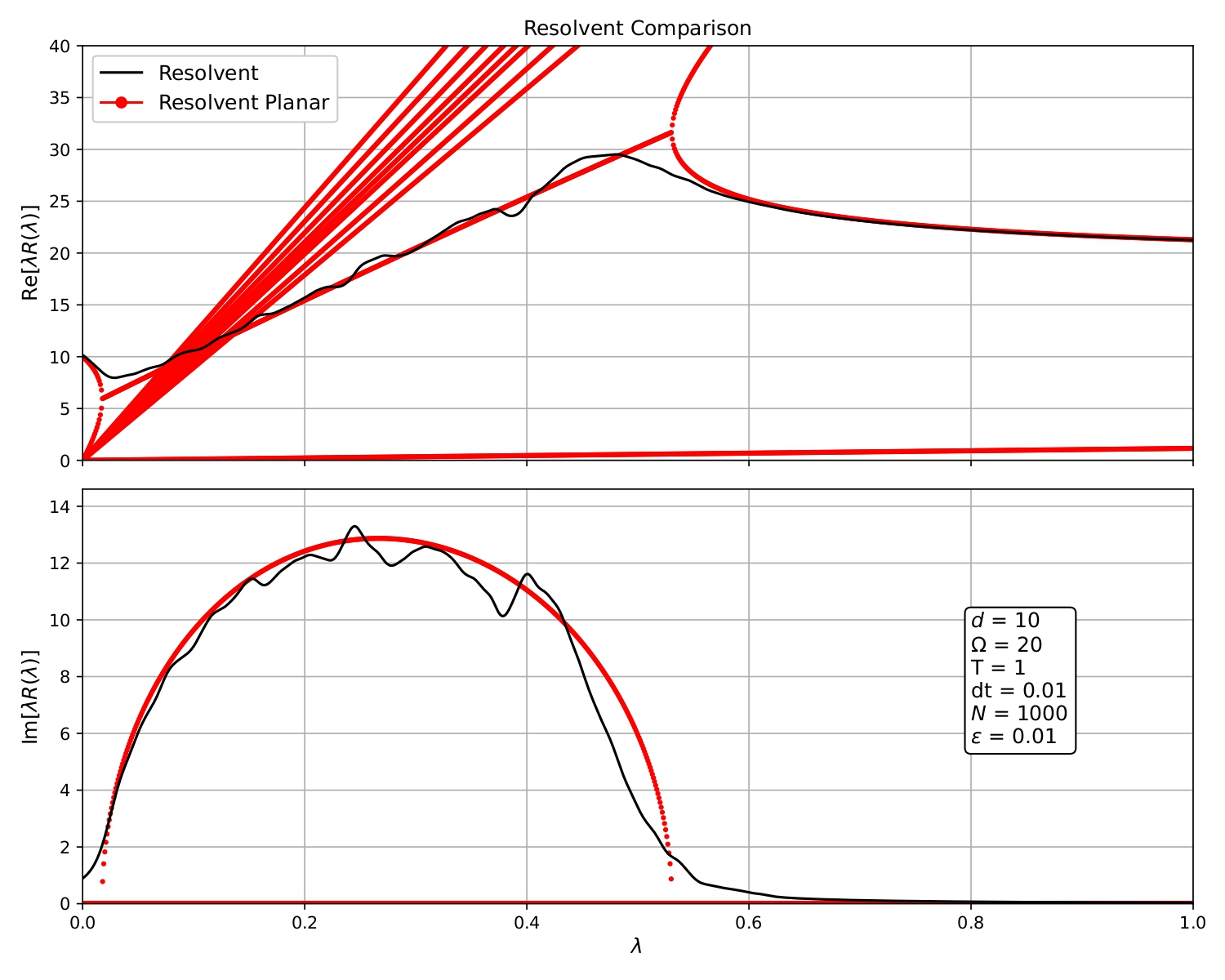}
    \caption{
    Plot of real and imaginary part of the mean resolvent $\lambda \overline{R}(\lambda)$, calculated from different methods for the Brownian GUE.
    The black solid line correspond to the resolvent calculated with the Gram matrix \eqref{eq:kerg} and the red point curve with \eqref{eq:resummedSD}, which only takes in account the planar diagrams.
    The parameters used in the calculation are $T=1$, $dt=0.01$, $d = 10$, $\Omega = 20$, $\epsilon = 0.01$, with averages taken over 1000 draws.
    }
    \label{fig:ResolventGUET1}
\end{figure}

\begin{figure}[H]
    \centering
    \includegraphics[width=0.6\linewidth]{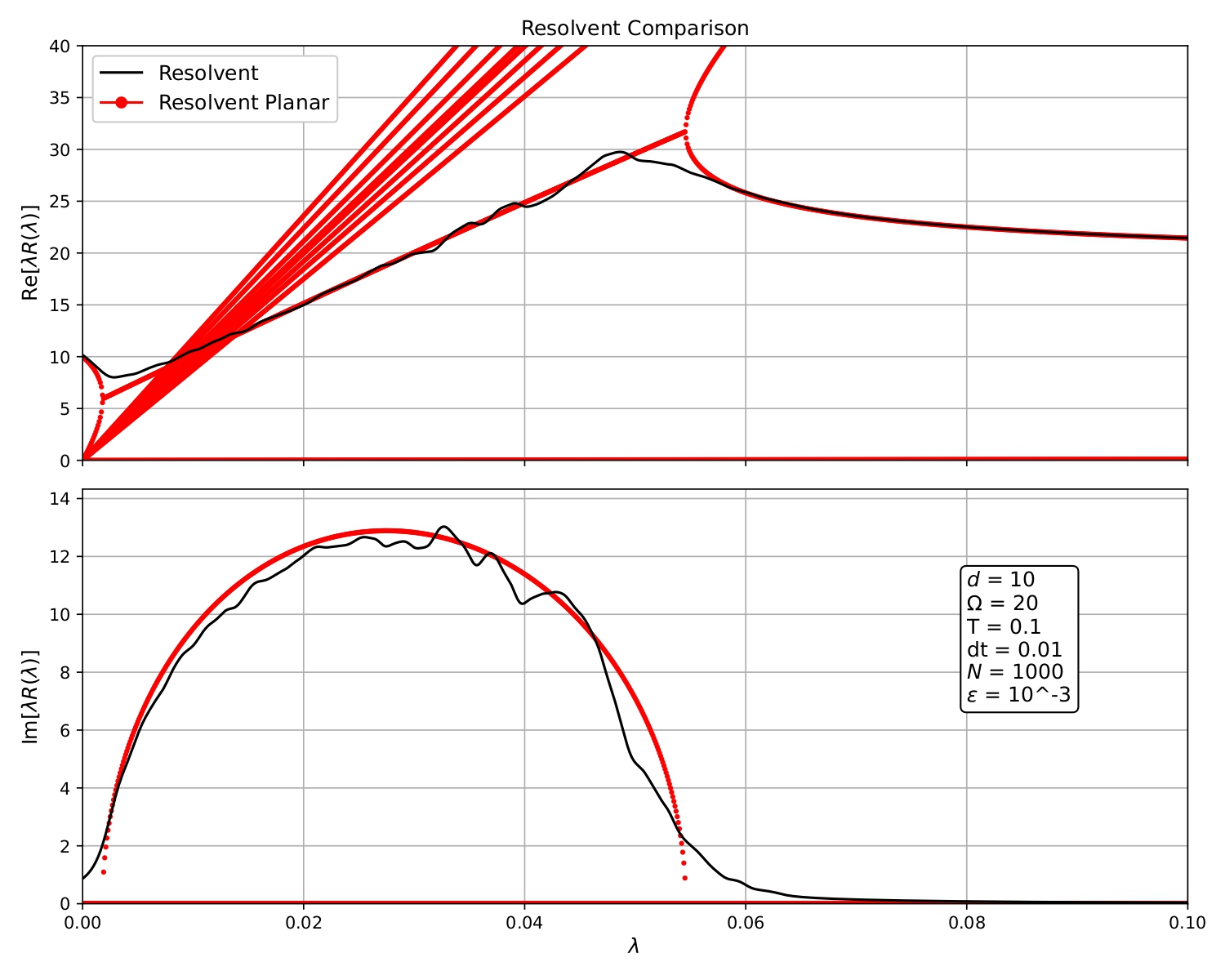}
    \caption{
    Plot of real and imaginary part of the mean resolvent $\lambda \overline{R}(\lambda)$, calculated from different methods for the Brownian GUE.
    The black solid line correspond to the resolvent calculated with the Gram matrix \eqref{eq:kerg} and the red point curve with \eqref{eq:resummedSD}, which only takes in account the planar diagrams.
    The parameters used in the calculation are $T=0.1$, $dt=0.01$, $d = 10$, $\Omega = 20$, $\epsilon = 0.001$, with averages taken over 1000 draws.
    }
    \label{fig:ResolventGUET01}
\end{figure}

What we need to do is to generalize the original SD equation \ref{eq:SchwingerDyson} from \cite{Penington:2019kki} to the case where $A_n$ above does not vanish. To find such generalization we assume the following factorization in the thermodynamic limit
\bea 
\overline{\braket{\Psi_{i}}{\Psi_{i_{1}}}  \cdots \braket{\Psi_{i_{n-1}}}{\Psi_{i}}\braket{\Psi_{i}}{\Psi_{i_{n}}}  \cdots \braket{\Psi_{i_{n+k-1}}}{\Psi_{i_{n+k}}}}\simeq \nonumber\\ \overline{\braket{\Psi_{i}}{\Psi_{i_{1}}}  \cdots \braket{\Psi_{i_{n-1}}}{\Psi_{i}}}\,\,\overline{\braket{\Psi_{i}}{\Psi_{i_{n}}}  \cdots \braket{\Psi_{i_{n+k-1}}}{\Psi_{i_{n+k}}}}\;,
\eea
where again $i\neq i_l$. This factorization is the same as the one that appears in free probability and the generalized Eigenstate Thermalization Hypothesis \cite{Pappalardi:2022aaz}. Here such assumption might be justified since the Gram matrix is a particular type of random matrix, and we remark that it is implicit in the previous analysis of the standard SD. In fact, when such inner products are computed in e.g. gravity in the semiclassical limit, it is immediate that corrections to such factorization are exponentially subleading. Assuming such factorization the generalization reads
\be\label{gensd}
\overline{R_{ij}(\lambda)} = \frac{1}{\lambda} \delta_{ij} + \frac{1}{\lambda} \sum_{n=1}^\infty \overline{R(\lambda)}^{n-1} \left (Z_n \overline{R_{ij}(\lambda)} +  A_n \frac{\overline{R(\lambda)}}{\Omega} (1-\delta_{ij}) \right )\;.
\ee
The idea is that these $A_n$ generalize $Z_n$ to the cases with different external indices. The internal indices are similar to the $Z_n$ ones, explaining that the factor $\overline{R(\lambda)}^{n-1}$ is the same for both. The extra factor $\frac{\overline{R(\lambda)}}{\Omega}$ provides some of the terms with $A_n$ times a typical term of the diagonal case, the rest being provided by the $Z_n \overline{R_{ij}(\lambda)}$ term.

Although this is more complicated than the original SD, the modification does not change the trace of the resolvent $\overline{R(\lambda)}$, i.e. the modification only affects the non-diagonal elements due to the factor $\left(1-\delta_{ij} \right)$. This explains why the numerical analysis of the previous section, which only looked at the trace of the resolvent, provided good results at all times.

Even if the trace, and therefore the computation of the dimension, remains unaffected by the fact that  $A_n$ does not vanish, it is interesting to test the validity of the new equation for the non-diagonal terms. This can be achieved since we have an explicit expression for the correction
\be\label{eq:resummedAn}
    \sum_{n=1}^\infty A_n \,\overline{R(\lambda)}^{n} =\overline{R(\lambda)} \,\, \overline{\bra{\Psi}} \left ( \mathbb{1} - \rho \overline{R(\lambda)} \right )^{-1} \overline{\ket{\Psi}}\;, 
\ee
where we remind that $\rho$ is the average state. Using eqs. \eqref{eq:resummedSD} and \eqref{eq:resummedAn} we can rewrite the matrix elements of the resolvent as
\begin{equation}
    \overline{R(\lambda)_{ij}} = \frac{\overline{R(\lambda)}}{\Omega} \delta_{ij} + \left( \frac{\overline{R(\lambda)}}{\Omega} \right)^{2} \overline{\bra{\Psi}} \left( \mathbb{1} - \rho \overline{R(\lambda)} \right)^{-1} \overline{\ket{\Psi}} \ (1 - \delta_{ij}) \ .
\label{eq:resolventij}
\end{equation}
Therefore, we have that the diagonal elements and the non-diagonal elements are all equal separately between them, as is expected on general grounds from the lack of structure of the matrix $G$.

We now test this expression for the case of the Brownian GUE ensemble defined in \eqref{eq:brownianGUEapp}, using similar methods for the simulation as the ones explained in \ref{sS:planarnumerics}. To compute the mean resolvent, we construct several instances of $G$, then we compute its resolvent for different values of $\lambda$ using \eqref{eq:resolapp}, and finally we take the average over the different draws. Because of the lack of structure of the resolvent, we also implement an average between the non-diagonal elements of the resulting mean resolvent, to get our final result. In this simulations, we evaluate at values of $\lambda$ in regions far from the locations of eigenvalues of $G$. The reason behind this is that in order to preserve the Hermitian structure of the resolvent, we have to put $\epsilon = 0$, where $\epsilon$ was defined in \ref{sS:planarnumerics}. The numerical average of the trace of the resolvent $\overline{R(\lambda)}$ and the density matrix $\rho$ is computed as in \ref{sS:planarnumerics}. The average vector $\overline{\vert \Psi \rangle}$ is computed by drawing several vectors from \eqref{eq:brownianGUEapp} and taking the average. 

In figures \ref{fig:ResolventNonDiagGUET01}, \ref{fig:ResolventNonDiagGUET1}, \ref{fig:ResolventNonDiagGUET10} we compare, for different times, the two computations of the non-diagonal element (ND) of the mean resolvent $\overline{R_{ij}(\lambda)}$, one doing the numerically simulation and the other using the formula \eqref{eq:resolventij} that follows from the generalized SD equation \eqref{gensd}. We find a perfect matching for the various times.

\begin{figure}[H]
    \centering
    \includegraphics[width=0.9\linewidth]{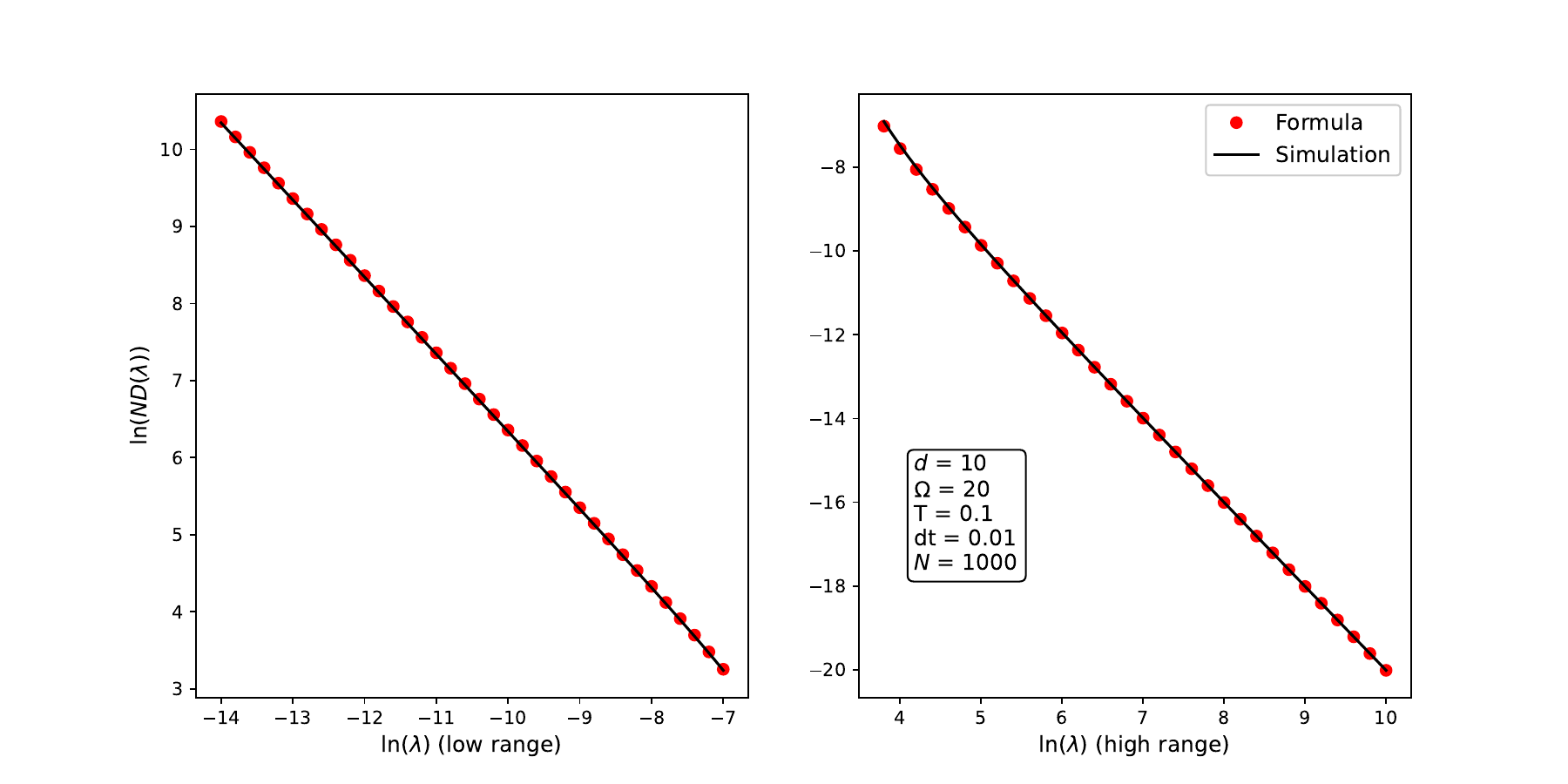}
    \caption{
    Plot of the non-diagonal elements of  $\overline{R(\lambda)_{ij}}$ (all are equal on average) calculated from different methods in the Brownian GUE for values of $\lambda$ far from the location of eigenvalues of the Gram matrix $G$.
    The black solid line correspond to one calculated numerically and the red point curve with the generalized SD equation \eqref{eq:resolventij}.
    The parameters used in the calculation are $T=0.1$, $dt=0.01$, $d = 10$, $\Omega = 20$, with averages taken over 1000 draws.
    }
    \label{fig:ResolventNonDiagGUET01}
\end{figure}

\begin{figure}[H]
    \centering
    \includegraphics[width=0.9\linewidth]{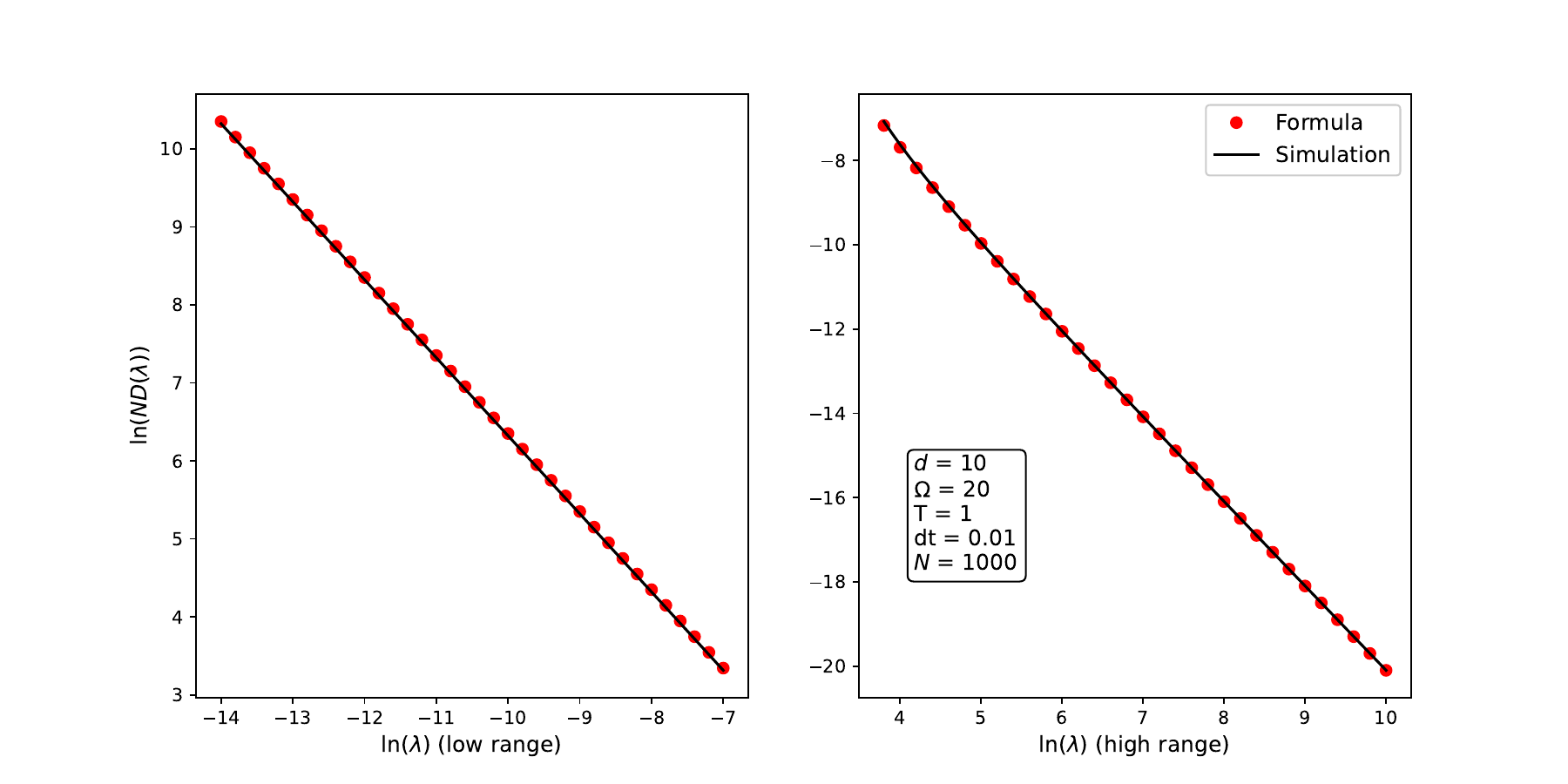}
    \caption{
    Plot of the non-diagonal elements of  $\overline{R(\lambda)_{ij}}$ (all are equal on average) calculated from different methods in the Brownian GUE for values of $\lambda$ far from the location of eigenvalues of the Gram matrix $G$.
    The black solid line correspond to one calculated numerically and the red point curve with the generalized SD equation \eqref{eq:resolventij}.
    The parameters used in the calculation are $T=1$, $dt=0.01$, $d = 10$, $\Omega = 20$, with averages taken over 1000 draws.
    }
    \label{fig:ResolventNonDiagGUET1}
\end{figure}

\begin{figure}[H]
    \centering
    \includegraphics[width=0.9\linewidth]{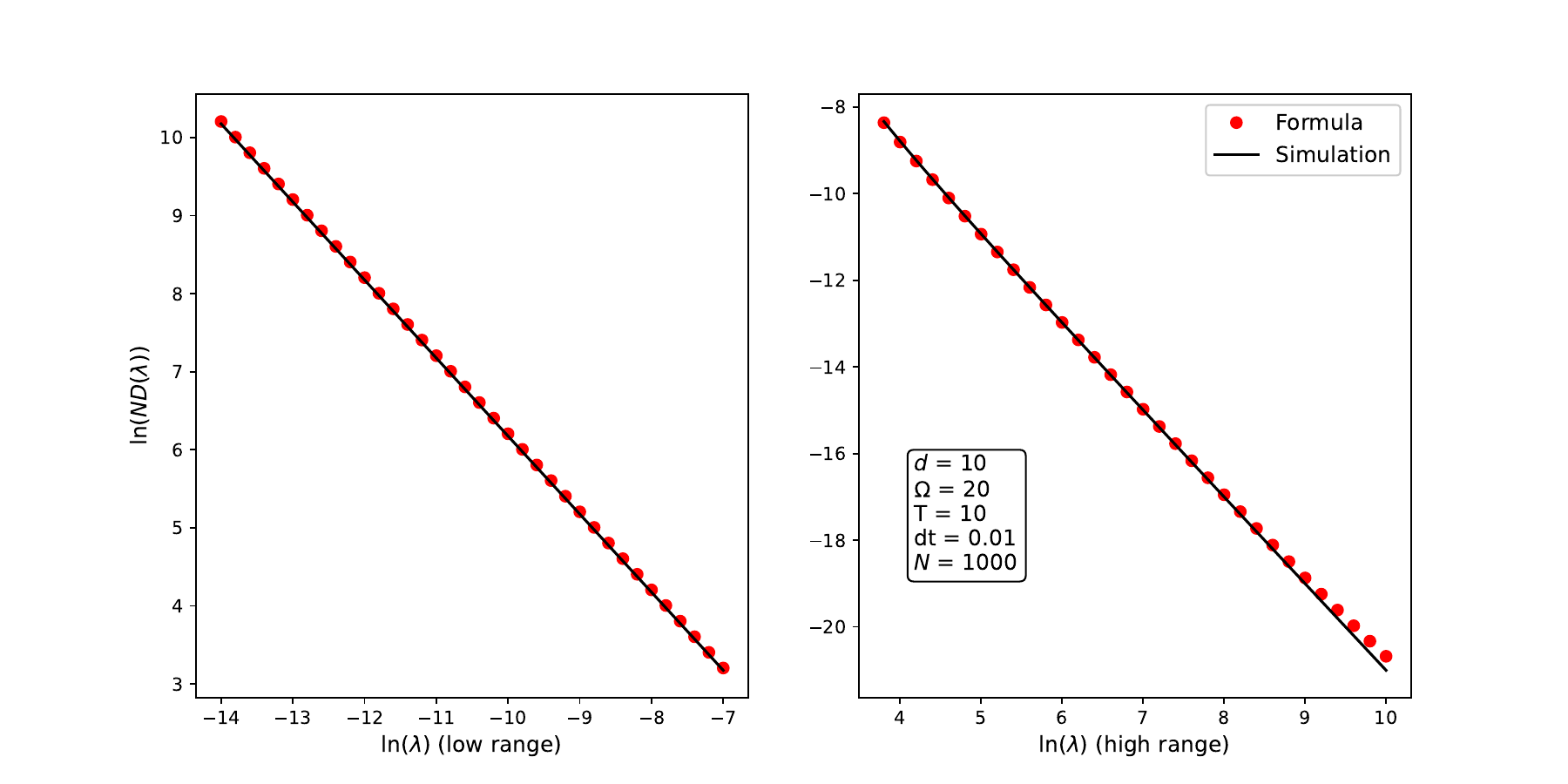}
    \caption{
    Plot of the non-diagonal elements of  $\overline{R(\lambda)_{ij}}$ (all are equal on average) calculated from different methods in the Brownian GUE for values of $\lambda$ far from the location of eigenvalues of the Gram matrix $G$.
    The black solid line correspond to one calculated numerically and the red point curve with the generalized SD equation \eqref{eq:resolventij}.
    The parameters used in the calculation are $T=10$, $dt=0.01$, $d = 10$, $\Omega = 20$, with averages taken over 1000 draws.
    }
    \label{fig:ResolventNonDiagGUET10}
\end{figure}

\section{$Z_n(t)$ for Brownian Spin Cluster and Brownian SYK}
\label{app:exactBrownian}

In this appendix, we supply detailed explanations for both the energies of the effective Hamiltonians and the replicated partition functions for the Brownian spin cluster and Brownian SYK, described in sections \ref{S:brownianspincluster} and \ref{sec:BrownianSYKexact}. Also, we provide numerical simulations of such ensembles and compute their partition functions, verifying the analytic expressions.

\subsection*{Brownian Spin Cluster}

We start with:

\begin{proof}[Proof of Lemma \ref{lemma:bspin1}]
Fix a string $P_\alpha$ of length $|\alpha|=s$ and choose a string $P_\gamma$ of length $|\gamma|=q$. Suppose they overlap on $m$ sites. There are $\binom{s}{m}$ ways to choose the overlap locations for a fixed $P_\alpha$. The remaining $q-m$ sites of $P_\gamma$ lie outside $\alpha$ and do not affect $k(\gamma,\alpha)$, producing a degeneracy $3^{q-m}\binom{N-s}{\,q-m\,}$ upon summing over $\gamma$ in \eqref{eq:specBspinHeff}. On the $m$ overlapping sites, let $l\le m$ be the number of positions where the Pauli matrices differ; each such position has two choices (the two Pauli matrices distinct from the fixed one on $P_\alpha$), giving $\binom{m}{l}2^l$. Only strings with odd $k(\gamma, \alpha)=l$ contribute (with a factor of $2J$) to \eqref{eq:specBspinHeff}, and $\sum_{l\ \mathrm{odd}}^{m}\binom{m}{l}2^l=\frac{1}{2}\!\left(3^m-(-1)^m\right)$. Putting everything together yields 
\be\label{eq:explicitEs}
    E(s) = J \sum_{m=0}^{q} \binom{N-s}{q-m} 3^{q-m} \binom{s}{m} \left (3^m- \left(-{1}\right)^m \right )\,,
\ee
where we adopt the convention ${s\choose m}=0$ for $m>s$ (relevant only when $s<q$). Now, the first term in \eqref{eq:explicitEs} can be resummed using Vandermonde's identity $\sum_{m=0}^{q} \binom{s}{m}\binom{N-s}{q-m}= {N\choose q}$. The second term is the Kravchuk polynomial $K_q(s;x,N)\;\equiv\;\sum_{m=0}^{q}(-1)^m\binom{s}{m}\binom{N-s}{q-m}\!\left(\frac{1-x}{x}\right)^{\!m}$ with $x=\frac{3}{4}$, which admits a representation in terms of the hypergeometric function $K_q(s;x,N) = {N \choose q} {}_2F_1\!\left(\!-q,\,-s;\,-N;\,\frac{1}{x}\right)$, and this yields \eqref{eq:specBspinHefffinal}. 
\end{proof}

As an observation, we can generalize the idea of $q$-locality and consider driving operators of size $s \leq q$. This will only change the energies of the effective Hamiltonian, but it will be diagonalized by the same eigenvectors, namely the Pauli strings. These energies are
\be
\label{eq:explicitEs2}
 E(s) = J \sum_{m=0}^q \binom{N-s}{q-m} 3^{q-m} \sum_{j=0}^{m} \binom{s}{j} \left (3^j - (-1)^j \right ) \, ,
\ee
where we see we recover the previous case just by considering the term $j=m$. Given that the energies still depend only on the size of the Pauli strings, the partition function will be exactly the same as the previous case, just changing (\ref{eq:explicitEs}) for (\ref{eq:explicitEs2}). 

\begin{proof}[Proof of Lemma \ref{lemma:bspin2}]
In \eqref{eq:ZntBrownianspin} each site $i = 1, \ldots, N$ can host $m_i = 0, \ldots, n$ Pauli matrices, since we have $n$ Pauli strings, and each string may either insert a Pauli matrix on a given site or leave it untouched. However, not all configurations contribute: due to the trace, the product of the $m_i$ Pauli matrices at each site must equal the identity (up to an overall phase that vanishes upon taking the absolute value squared). 

There are $3^{m_i}$ possible products of $m_i$ Pauli matrices on a given site $i$, and we are interested in counting how many of these yield the identity up to a phase. 
This number is given by
\be\label{eq:Npauli}
a_{m_i} = \frac{1}{4}\left(3^{m_i} + 3(-1)^{m_i}\right)\,,
\ee
which follows from the simple recurrence relation: if $a_{m_i}$ products yield the identity, then the remaining $3^{m_i} - a_{m_i}$ are proportional to Pauli matrices. For $m_i+1$, only those configurations that were nontrivial at step $m_i$ can produce the identity, leading to $a_{m_i+1} = 3^{m_i} - a_{m_i}$ with $a_0 = 1$, whose solution is precisely the expression above.

 Using this last result, we can write an exact formula for $Z_n(t)$ in terms of multinomial coefficients, starting from (\ref{eq:ZntBrownianspin}). The idea is explicitly putting all operators by summing over indices that indicate in which sites each operator acts non-trivially. By adding the $a_{m_j}$ (\ref{eq:Npauli}) factors ($j = 1, \ldots,N$), we ensure to keep only with the product of $n$ operators that have non-zero trace (i.e., are the identity up to a phase). This is given by
\be
Z_n(t) = \frac{1}{4^{N(n-1)}} \sum_{k_0 + \ldots +  k_N = n} \binom{n}{k_0, \ldots,k_N} \left (\prod_{j=1}^N a_{m_j} \right ) e^{-tE(0) k_0} \ldots e^{-tE(N) k_N} \, ,
\ee
where $k_i$ counts all the chains of size $i$. This means that in fact for each $k_i$ we have $\binom{N}{i}$ indices, each one counting the number of chains of size $i$ that act on $i$ particular sites of the spin chain. So, we have $2^N$ indices in total. The index $m_j$ is the sum of all the indices that act on the site $j$. For example for $N = 3$, we have $k_1 = n_1 + n_2 + n_3$, $k_2 = n_{12} + n_{13} + n_{23}$, $k_3 = n_{123}$, $k_0 = n - k_1 - k_2 - k_3$ and $m_1 = n_1 + n_{12} + n_{13} + n_{123}$, $m_2 = n_2 + n_{12} + n_{23} + n_{123}$, $m_3 = n_3 + n_{13} + n_{23} + n_{123}$, where the sub-indices indicate in which sites the operator acts non-trivially.

Then, by expanding the product of the $a_{m_j}$ we find terms of the form
\be
\frac{1}{4^N} \left (3^s \left (-1 \right )^{m_1 + \ldots +  m_{s}} + 3^{m_{s+1} + \ldots +  m_N} \right ) \, ,
\label{eq:termprodaN}
\ee
where $s = 0, \ldots,N$. Given that all sites are equivalent, the product of the $a_{m_j}$ is equal to the sum of these terms with a factor $\binom{N}{s}$ (because they are dummy indices). So, the whole problem comes down to understanding how many indices do $m_1, \ldots,m_s$ share. This question simplifies because we are interested in separating the indices according to their size. Thus, given $s$ sites, from all the indices of size $l$, $\binom{s}{j} \binom{N-s}{l-j}$ are shared by $j$ of those sites. For $s = N$, it is trivial that all indices of size $l$ are shared by $l$ sites. In (\ref{eq:termprodaN}) we observe we can express $m_{s+1} + \ldots +  m_N = \sum_j m_j - m_1 - \ldots,m_s$. Then, the negative terms are equal to the ones in the first term of (\ref{eq:termprodaN}). Taking all this into account, we can write
\be
Z_n(t) = \frac{1}{4^{Nn}} \sum_{s = 0}^N \binom{N}{s} 3^s \left ( \sum_{l=0}^N  3^l e^{-tE(l)} \sum_{j=0}^s \left (-3 \right )^{-j} \binom{s}{j} \binom{N-s}{l-j} \right )^n \, ,
\label{eq:Znqlocal}
\ee
where we notice that inside the parenthesis, the exponents are just the number of sites that share the indices of size $l$, $3^l$ coming from $\sum_j m_j$ and $(-1)^j ~ 3^{-j} = (-3)^{-j}$ coming from $m_1 + \ldots +  m_s$. Therefore, we obtain the eigenvalues (\ref{eq:ptsspin}) with their corresponding degeneracies, just comparing with (\ref{eq:lemmabrownianZn(t)}).
\end{proof}

As an observation, this replicated partition function satisfies a recurrence relation in $N$. This is a direct consequence of the fact that the energies depend only on the size of the Pauli strings. Thus, if we define $e^{-tE(s)} \equiv x_s$ and make explicit the dependence on $N$ of the partition function, $Z_n(t) \to Z_{n,N}(x_0, \ldots x_N)$, we have

\begin{align} \label{eq:ZnNrecspin}
\notag Z_{n,N+1}(x_0,\ldots, x_{N+1}) = \frac{1}{4^n} & \left [ Z_{n,N}(x_0+3x_1,x_1+3x_2,\ldots, x_N+3x_{N+1}) + \right . \\
& \left . + 3 Z_{n,N}(x_0 - x_1,x_1 - x_2,\ldots, x_N - x_{N+1}) \right ] \, ,
\end{align}
where we defined $Z_{n,0}(x_0) \equiv x_0^n$.

We now numerically simulate the Brownian spin cluster described in \ref{S:brownianspincluster} and verify equation \eqref{eq:Znqlocal}.
To calculate $Z_{n}(t)$ numerically we used that (see section \ref{sec:overlapgeneralexp}, lemma \ref{lemma:densitymatrix})
\begin{equation}
    Z_{n}(t) = \Tr(\rho(t)^{n}), \quad
    \rho(t) = \overline{\ketbra{\Psi(t)}} \ .
\label{eq:Znsimu1}
\end{equation}
Here the states $\ket{\Psi(t)}$ are
\begin{equation}
    \ket{\Psi(t)} = \prod_{l=1}^{M} (e^{-i H_{l} \delta t} \otimes I) \ket{\mathbf{1}}\;,
\end{equation}
where we discretize the time as $t = M \delta t$, and each $H_{l}$ is a particular realization of the Hamiltonian \eqref{eq:hamiltonianspincluster}. 

More concretely, for the simulation we take the number of spins $N=4$, $q=2$, $\delta t = 1$, $J= 1/10$ and to take the average in \eqref{eq:Znsimu1} we draw $10000$ vectors $\ket{\Psi(t)}$. In this case, we have that the gap time is $t_{\text{gap}} \propto 1/E_{\text{gap}} = 16.75$. In figure \ref{fig:znqlocalsimu} we plot the values of $Z_{n}(t)$ as a function of $n$ for times $t = \{ 10, 17, 30 \}$, finding perfect agreement with the analytic result. 

\begin{figure}[h]
    \centering
    \includegraphics[width=0.9\linewidth]{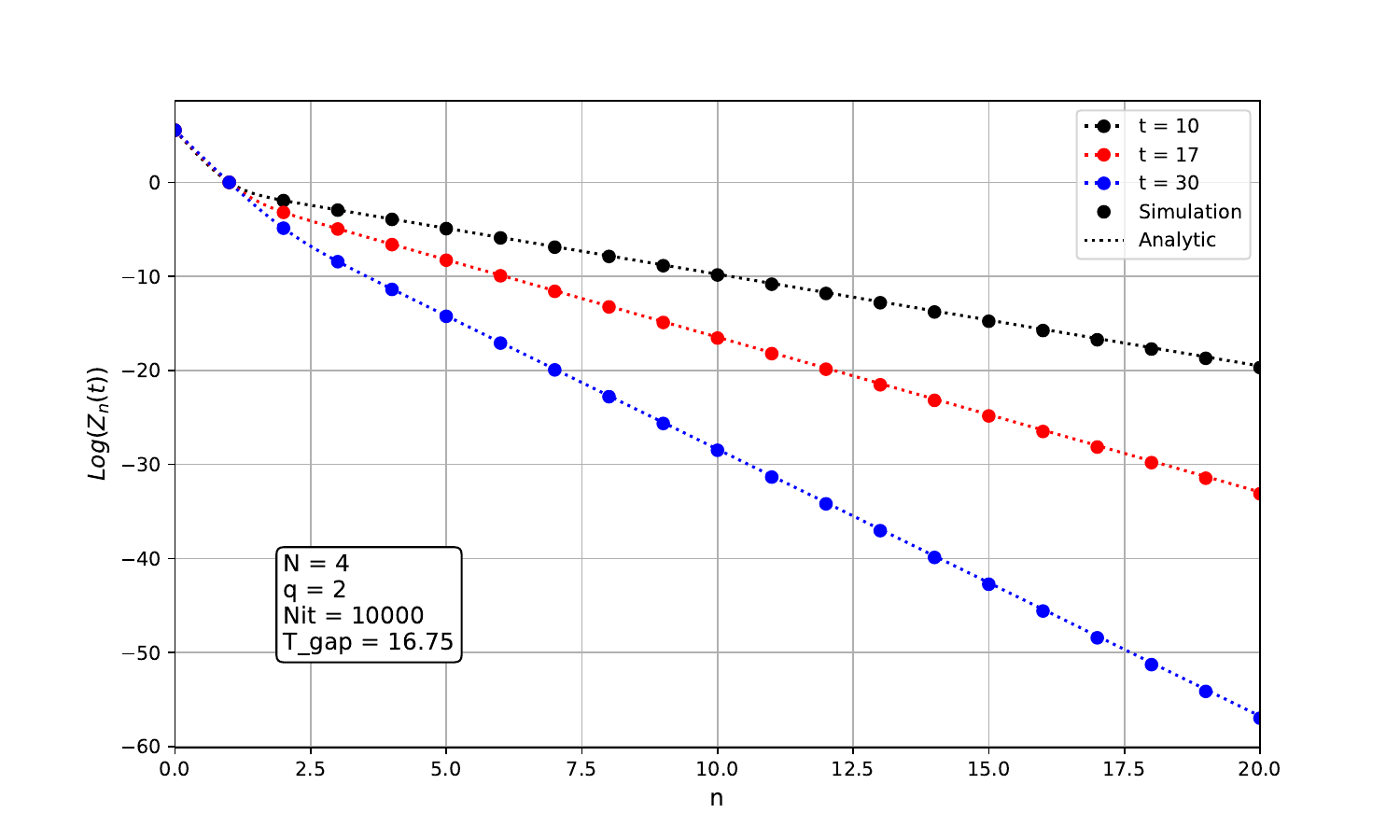}
    \caption{
        Plot of $\log(Z_{n}(t))$ as a function of $n$ for different values of $t$. The dotted lines are the result from the analytic expression \eqref{eq:Znqlocal}. The solid dots are the results of the simulation. The parameters of the simulation were $N=4$, $q=2$, $J=1/10$ and the average was taken over 10000 draws.
    }
    \label{fig:znqlocalsimu}
\end{figure}

\subsection*{Brownian SYK}

\begin{proof}[Proof of Lemma \ref{lemma:bSYK1}]
The state $\psi_\alpha \ket{\mathbf{1},\pm}$ is an eigenstate of \eqref{eq:HeffSYK} with energy
\be\label{eq:specSYKHeff}
E_\alpha = J \sum_{|\gamma| = q } \left (1- (-1)^{k(\gamma, \alpha)} \right)  \,,
\ee
since $\psi_\alpha\psi_\gamma = (-1)^{k(\gamma, \alpha)}\psi_\gamma\psi_\alpha$, and $k(\gamma, \alpha)$ is the number of common sites between both strings. To evaluate the sum, we consider a string $\psi_\alpha$ of length $|\alpha|=s$ and want to count the possible $\psi_\gamma$ strings of length $|\gamma|=q$ with fixed $k(\gamma, \alpha)$. There are $\binom{s}{m}\binom{N-s}{\,q-m\,}$ such strings.
Then the spectrum of \eqref{eq:HeffSYK} is
\be\label{eq:specSYKHeff2}
E(s) = J \sum_{m=0}^{q} \binom{N-s}{\,q-m\,} \binom{s}{m}\left (1- (-1)^{m} \right)  \,.
\ee
As for the spin cluster, the first term in \eqref{eq:specSYKHeff2} can be resummed using Vandermonde's identity. The second term is the Kravchuk polynomial $K_q(s;x,N)$ with $x=\frac{1}{2}$, which admits a representation in terms of the hypergeometric function $K_q(s;x,N) = {N \choose q} {}_2F_1\!\left(\!-q,\,-s;\,-N;\,\frac{1}{x}\right)$, and this yields \eqref{eq:specSYKHefffinal}. 
\end{proof}

\begin{proof}[Proof of Lemma \ref{lemma:syk2}]
In (\ref{eq:ZnSYKNeven}), we observe that $Z^\pm_n(t)$ can be computed analogously as in the spin case, just changing the number of relevant operators in each site (\ref{eq:Npauli}) for $a^\pm_m = \frac{1}{2} (1 \pm (-1)^m)$, given that in this case we have only one possible traceless operator acting on each site instead of three. Both partition functions end up being equal because of the energy symmetry $E(s) = E(N-s)$, so comparing with (\ref{eq:Znqlocal}), we arrive at

\be
Z_n(t) = \frac{1}{2^{Nn}} \sum_{s\text{ even}}^{N} \binom{N}{s}\left [ \sum_{l=0}^{N} e^{-tE(l)} \sum_{j=0}^{s} (-1)^j \binom{s}{j} \binom{N-s}{l-j} \right ]^n \, .
\label{eq:ZnSYKexact}
\ee
As mentioned before, although presented for even $N$, this expression is actually valid for all $N$. Let us then expand on the case of $N$ odd. The idea is that we can represent irreducibly $N$ Majorana fermions in a Hilbert space of dimension $d = 2^{[N/2]}$. For $N$ odd, the last Majorana fermion is the fermion parity operator of $N-1$, i.e., is proportional to the product of the other $N-1$ fermions (for $N=1$ we can trivially represent the Majorana fermion as $1$). Of course, now the product of the $N$ Majorana fermions will be proportional to the identity. Thus, the ground state of the effective Hamiltonian (\ref{eq:HeffSYK}) is non-degenerate and the Brownian circuit will generate the whole (duplicated) Hilbert space of dimension $d^2 = 2^{N-1}$. A basis of eigenstates of (\ref{eq:HeffSYK}) is $\vert \psi_\alpha \rangle = \frac{1}{\sqrt{d}} \psi_\alpha \vert \mathbf{1} \rangle$, with energy given by (\ref{eq:specSYKHefffinal}) for $s = \abs{\alpha}$. Just like the $N$ even case, we shall restrict to $\abs{\alpha} < N/2$, due to the fact that chains of longer size are in a one-to-one relation with shorter ones. Then, the replicated partition function is
\begin{align}
\notag Z_n(t) &= \frac{1}{d^{2n}} \sum_{\alpha_1, \ldots, \alpha_n}^{\abs{\alpha} < N/2} e^{-t(E_{\alpha_1} + \ldots +  E_{\alpha_n})} \abs{\text{Tr}(\psi_{\alpha_1} \ldots \psi_{\alpha_n})}^2 \\
&= \frac{1}{d^{2n}} \frac{1}{2^n} \sum_{\alpha_1, \ldots, \alpha_n}^{\abs{\alpha} \leq N} e^{-t(E_{\alpha_1} + \ldots +  E_{\alpha_n})} \abs{\text{Tr}(\psi_{\alpha_1} \ldots \psi_{\alpha_n})}^2 \, ,
\label{ZnSYKNodd}
\end{align}
where we extended the sum to all chains adding the factor $1/2^n$. This formula is analogous to (\ref{eq:ZntBrownianspin}), but now we have $2$ different cases that result in a non-vanishing trace: all $N$ sites empty or full. So, we shall sum two terms as we did in (\ref{eq:ZnSYKNeven}), using $a^\pm_m = \frac{1}{2} (1 \pm (-1)^m)$. As for $N$ even, these two end up being equal due to the energy symmetry. So it is direct to compute (\ref{ZnSYKNodd}) and find it coincides with (\ref{eq:ZnSYKexact}). 
\end{proof}

Also analogous to the spin case, we can obtain a recurrence relation for this partition function. Given that it is written in variables $x_s \equiv e^{-tE(s)}$, we may not consider the energy symmetry. Thus, for each partition function $Z_{n,N}^\pm(x_0, \ldots,x_N)$, we have
\begin{align}
\notag Z_{n,N+1}^\pm(x_0, \ldots, x_{N+1}) = \frac{1}{2^n} &\left [ Z_{n,N}^\pm(x_0 + x_1 , \ldots, x_N + x_{N+1}) \right .\\
& \left . \pm Z_{n,N}^\pm(x_0 - x_1 , \ldots, x_N - x_{N+1}) \right ] \, ,
\end{align}
with the initial condition $Z_{n,0}^\pm(x_0) \equiv x_0^n$, being then $Z_n(t) \equiv Z_{n,N} = (Z_{n,N}^+ + Z_{n,N}^-)/2$. Of course, if we use the energy symmetry, we recover $Z_{n,N} = Z_{n,N}^+ = Z_{n,N}^-$, so we only need to compute one of these partition functions.

Finally, we simulated the SYK model described in \ref{sec:BrownianSYKexact} to verify the analytic expression \eqref{eq:ZnSYKexact}.
For this, we used the Jordan–Wigner representation of the Majorana fermions
\begin{equation}
    \psi_{j} = \sigma_{3}^{\otimes \frac{j-1}{2}} \otimes \sigma_{1} \otimes I^{\otimes \frac{N - j - 1}{2}}, \quad \psi_{j+1} = \sigma_{3}^{\otimes \frac{j-1}{2}} \otimes \sigma_{2} \otimes I^{\otimes \frac{N - j - 1}{2}} \ ,
\end{equation}
where $N$ is the quantity of Majorana fermions and it is even. 
To calculate $Z_{n}(t)$ we used that (see section \ref{sec:overlapgeneralexp}, lemma \ref{lemma:densitymatrix})
\begin{equation}
    Z_{n}(t) = \Tr(\rho(t)^{n}), \quad
    \rho(t) = \overline{\ketbra{\Psi(t)}} \, .
\label{eq:Znsimu2}
\end{equation}
Here the states $\ket{\Psi(t)}$ are
\begin{equation}
    \ket{\Psi(t)} = \prod_{l=1}^{M} (e^{-i H_{l} \delta t} \otimes I) \ket{\mathbf{1}}\, ,
\end{equation}
where we discretize the time as $t = M \delta t$ and each $H_{l}$ is a particular realization of the Hamiltonian \eqref{eq:timedepHSYK}. 

\begin{figure}[h]
    \centering
    \includegraphics[width=0.9\linewidth]{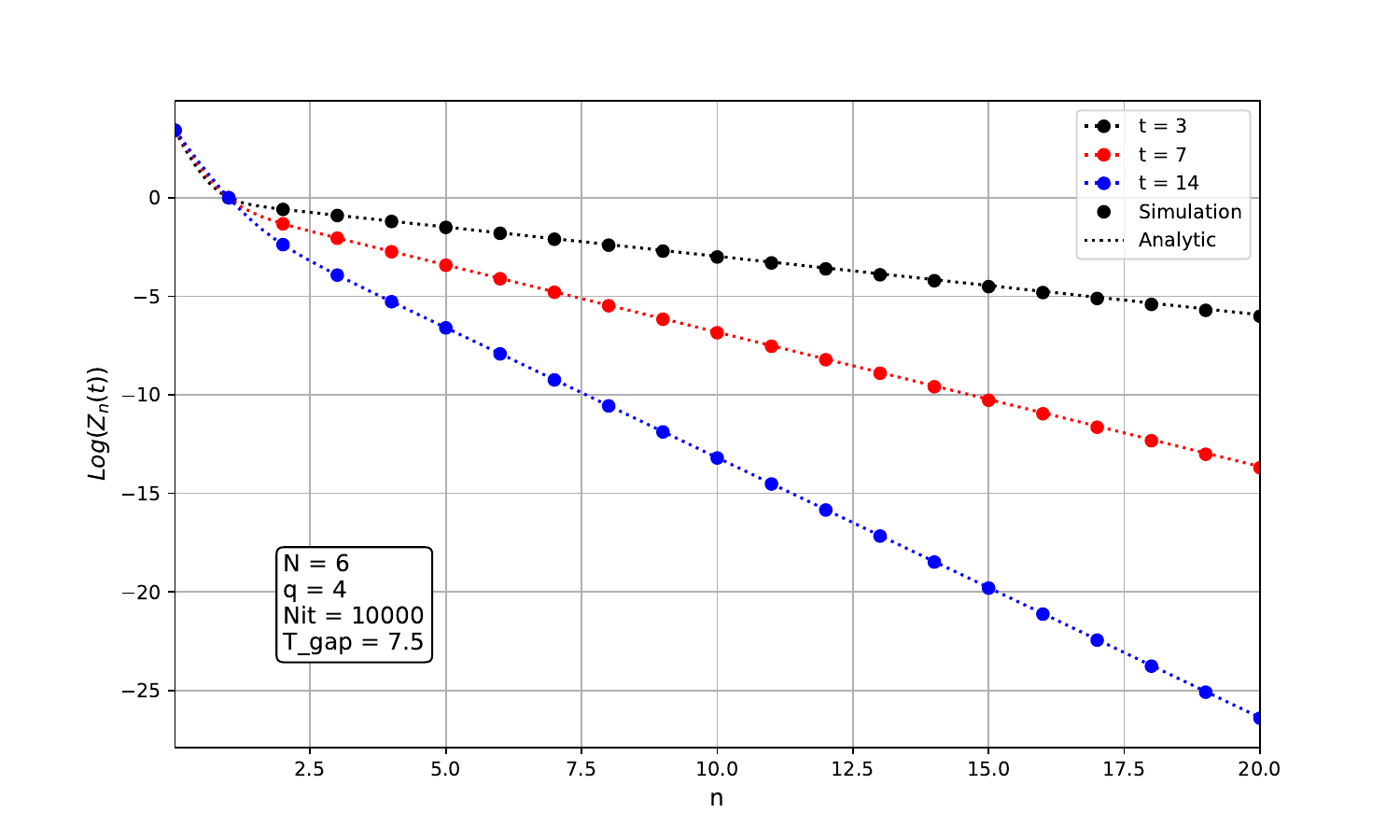}
    \caption{
        Plot of $\log(Z_{n}(t))$ as a function of $n$ for different values of $t$. The dotted lines are the result from the analytic expression \eqref{eq:ZnSYKexact}. The dots are the result calculated from the simulation. The parameters of the simulation were $N=6$, $q=4$, $J=1/10$ and the average was taken over 10000 draws.
    }
    \label{fig:znsylsimu}
\end{figure}

More concretely, for the simulation we take the number of fermions $N=6$, $q=4$, $\delta t = 1$, $J= 1/10$ and perform the average in \eqref{eq:Znsimu2} with $10000$ vectors $\ket{\Psi(t)}$. In this case, we have that the gap time is $t_{\text{gap}} \propto 1/E_{\text{gap}} = 7.5$. In figure \ref{fig:znsylsimu} we plot the values of $Z_{n}(t)$ as a function of $n$ for times $t = \{ 3, 7, 14 \}$, and find perfect agreement with the analytic expression.

\section{On the path integral for Brownian SYK}
\label{app:ferdet}

In this appendix we provide the details of the computation of the Brownian SYK path integral described in section \ref{subsec:largeNsyk}. To do this we start by proving a general useful result to compute fermionic path integrals:
\begin{lemma}
    Given a general operator $A(\tau) \in \mathbb{C}^{n \times n}$, the renormalized determinant
\be
D(A) = \abs{\partial_\tau - A(\tau)} \, ,
\ee
with boundary condition $\mathbf{v}(t) = B\mathbf{v}(0)$, $\mathbf{v} \in \mathbb{C}^n$ and $B \in \text{GL}(n,\mathbb{C})$ reads
\be\label{pxa}
D(A) =  \frac{P(1,A)}{\sqrt{P(0,A)}} \,\hspace{1cm} \textrm{with} \hspace{1cm} P(x,A)\equiv \abs{x \mathbb{1}_n - B^{-1} \exp{\int_0^t d\tau A(\tau)}}\;.
\ee
\label{GYB_lemma}
\end{lemma}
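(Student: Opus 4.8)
The plan is to prove \eqref{pxa} by computing the renormalized (zeta-function regularized) determinant directly from the spectrum of $\mathcal{D}\equiv\partial_\tau-A(\tau)$ subject to the twisted boundary condition $\mathbf v(t)=B\mathbf v(0)$, and then resumming the eigenvalue product in closed form. The organizing observation is that the entire spectrum is controlled by the monodromy matrix $M\equiv B^{-1}U(t)$, where $U(\tau)$ is the fundamental solution of $\partial_\tau U=A(\tau)U$ with $U(0)=\mathbb 1_n$. When $A$ commutes with itself at different times --- in particular for the constant saddle values $\Sigma_j$ used in Section~\ref{subsec:largeNsyk} --- one has $U(t)=\exp\!\int_0^t\!\mathrm d\tau\,A(\tau)$, which is exactly the object appearing in the definition of $P(x,A)$.

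First I would set up the eigenvalue problem. Solving $(\mathcal D-\lambda)\mathbf v=0$ and using that the scalar $\lambda$ commutes with $A(\tau)$, the solution is $\mathbf v(\tau)=e^{\lambda\tau}U(\tau)\mathbf v(0)$, so the boundary condition forces $M\mathbf v(0)=e^{-\lambda t}\mathbf v(0)$. Hence $e^{-\lambda t}$ must be an eigenvalue $\mu_j$ of $M$, and the spectrum is $\lambda_{jk}=-\tfrac1t\big(\log\mu_j+2\pi\mathrm i k\big)$ with $j=1,\dots,n$ and $k\in\mathbb Z$. Equivalently, the entire function $g(\lambda)\equiv\det(\mathbb 1_n-e^{\lambda t}M)=\prod_j(1-e^{\lambda t}\mu_j)$ has zeros exactly at the $\lambda_{jk}$, with multiplicities given by the algebraic multiplicity of each $\mu_j$.

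Next I would regularize the product $\det\mathcal D=\prod_{j,k}\lambda_{jk}$. Since the $j$-product is finite, it suffices to zeta-regularize the $k$-product at fixed $j$. Writing $c_j\equiv-\log\mu_j/t$ so that $\lambda_{jk}=c_j-2\pi\mathrm i k/t$, the standard $\zeta$-function evaluation gives $\prod_{k\in\mathbb Z}(c_j-2\pi\mathrm i k/t)=2\sinh(c_jt/2)=\mu_j^{-1/2}(1-\mu_j)$. Taking the finite product over $j$,
\be
\det\mathcal D=\prod_{j=1}^n\frac{1-\mu_j}{\sqrt{\mu_j}}=\frac{\det(\mathbb 1_n-M)}{\sqrt{\det M}}=\frac{P(1,A)}{\sqrt{P(0,A)}}\,,
\ee
where in the last step I used $P(1,A)=\det(\mathbb 1_n-M)$ and $P(0,A)=\det(-M)=(-1)^n\det M$, the factor $(-1)^{n/2}$ being absorbed into the branch of the square root fixed by the regularization. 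This is precisely \eqref{pxa}.

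The main obstacle is to make this rigorous when $M$ is not diagonalizable and $A(\tau)$ is genuinely time-dependent, so that the naive ``one scalar mode per eigenvalue'' decoupling is not literally available. This is handled by a Hadamard-factorization argument: the Fredholm determinant $F(\lambda)\equiv\det(\mathcal D-\lambda)$ and $g(\lambda)$ are both entire of order $\le 1$ with identical zero sets and multiplicities, so $F/g=e^{a+b\lambda}$, and the constants $a,b$ --- equivalently the overall normalization and the $j$-independent part of the regularization --- are pinned down by the large-$|\lambda|$ heat-kernel asymptotics, which are insensitive to $A$ and $B$ and reproduce exactly the $\sqrt{\det M}$ denominator. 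Alternatively, one may invoke Forman's theorem for first-order systems, which yields $\det\mathcal D\propto\det(\mathbb 1_n-M)$ directly, the proportionality constant then being the $\sqrt{P(0,A)}$ renormalization. The per-eigenvalue resummation of the previous paragraph is then justified either by this factorization or by continuity from the dense set of diagonalizable $M$.
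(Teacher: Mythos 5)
Your proposal is correct, and its core — solving the eigenvalue ODE, reducing the quantization condition to the monodromy matrix $M=B^{-1}U(t)$, and resumming the infinite $k$-towers of eigenvalues $\lambda_{jk}$ into hyperbolic/trigonometric factors — is the same strategy as the paper's proof in Appendix~\ref{app:ferdet}. Where you differ is in the renormalization step: the paper never zeta-regularizes an absolute product, but instead computes the \emph{convergent} ratio $D(A)/D(0)$ using the identity $\prod_{k\in\mathbb{Z}}\bigl(1-\tfrac{x}{a+2k\pi}\bigr)=\sin\!\bigl(\tfrac{a-x}{2}\bigr)/\sin\!\bigl(\tfrac{a}{2}\bigr)$, and then fixes the absolute normalization by the convention $D(0)=2^n$ for antiperiodic boundary conditions $B=-\mathbb{1}_n$; you instead zeta-regularize each tower directly, obtaining $2\sinh(c_j t/2)=\mu_j^{-1/2}(1-\mu_j)$ per eigenvalue $\mu_j$ of $M$, which yields the absolute determinant with no reference to the free case. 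Both routes land on $P(1,A)/\sqrt{P(0,A)}$ up to the same branch ambiguity of the square root, which each treatment fixes by fiat (the paper's ``minus inside the square root'' convention, your absorbed $(-1)^{n/2}$). Two things in your write-up go beyond the paper and are worth keeping: (i) the Hadamard-factorization/Forman argument, which covers non-diagonalizable $M$ and justifies the per-mode factorization that the paper implicitly assumes by working with a complete set of eigenvectors $x_j(A)$; and (ii) your explicit observation that the plain exponential in the definition of $P(x,A)$ coincides with the monodromy only when $[A(\tau),A(\tau')]=0$ — a hypothesis the paper leaves tacit, though it is satisfied by the block mass matrix $\mathbb{\Sigma}$ of Section~\ref{subsec:largeNsyk}; for genuinely non-self-commuting $A(\tau)$ the exponential in the lemma must be read as the path-ordered one, as your fundamental-solution formulation makes automatic.
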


\begin{proof} The associated eigenstate equation is
\be
\left ( \partial_\tau - A(\tau) \right ) \mathbf{v}(\tau) = \lambda \mathbf{v}(\tau) \, , \;\;\; \lambda \in \mathbb{C} \, ,
\ee
whose general solution is
\be
\mathbf{v}(\tau) = \exp{\lambda \tau + \int_0^\tau d\tau' A(\tau')} \mathbf{v}(0) \, .
\ee
By imposing the boundary condition, we find
\be
\mathbf{v}(t) = \exp{\lambda t + \int_0^t d\tau A(\tau)} \mathbf{v}(0) = B \mathbf{v}(0) \Rightarrow \abs{x \mathbb{1}_n - B^{-1} \exp{\int_0^t d\tau A(\tau)}} = P(x,A) = 0 \, ,
\label{defPA}
\ee
for $x \equiv e^{-\lambda t}$. This tells us that for each $x_j(A)$ ($j = 1, \ldots,n$) root of $P(x,A)$, we have an eigenvalue $\lambda_{j,k}(A) = -\frac{1}{t} \ln{(x_j(A))} + \iw 2 k \pi$.
In order to regularize the determinant, we should divide by the free case, so
\be
\frac{D(A)}{D(0)} = \prod_{j = 1}^n \prod_{k \in \mathbb{Z}} \frac{\lambda_{j,k}(A)}{\lambda_{j,k}(0)} = \prod_{j = 1}^n \prod_{k \in \mathbb{Z}} \left [1 + \frac{\iw \ln{\left (x_j(A)/x_j(0) \right )}}{\iw \ln{\left (x_j(0) \right )} + 2 k \pi} \right ] = \prod_{j = 1}^n \frac{\sin{\left (\iw \ln{\sqrt{x_j(A)}} \right )}}{\sin{ \left (\iw \ln{\sqrt{x_j(0)}}\right )}} \, ,
\ee
where we used that
\be
\prod_{k \in \mathbb{Z}} 1 - \frac{x}{a + 2 k \pi} = \frac{\sin{\left ((a-x)/2 \right )}}{\sin{(a/2)}} \, .
\ee
Thus, we obtain
\be
\frac{D(A)}{D(0)} = \prod_{j = 1}^n \sqrt{\frac{-x_j(0)}{-x_j(A)}} \left [\frac{1-x_j(A)}{1-x_j(0)} \right ] \, ,
\ee
from which we can extract the regularized determinant
\be
D(A) = \prod_{j=1}^n \frac{1 - x_j(A)}{\sqrt{-x_j(A)}} = \frac{P(1,A)}{\sqrt{P(0,A)}} \, ,
\label{GYB}
\ee
and where we note the minus inside the square root is a convention, in order to obtain $D(0) = 2^n$ for $B = -\mathbb{1}_n$.
\end{proof}
This lemma is the generalization to fermions of the well-known Gelfand-Yaglom formula in the bosonic scenario \cite{Gelfand:1959nq}.

Given this result, we only need to compute $P(x,\mathbb{\Sigma})$ as defined by (\ref{pxa}) to find the determinant $D_n(\mathbb{\Sigma})$ (\ref{Majoranadet}). The fermionic mass matrix that arises in section \ref{subsec:largeNsyk} is block diagonal and can be explicitly written in terms of Pauli matrices and the Lagrange multipliers $\Sigma_m(\tau)$ as
\be
\mathbb{\Sigma}_{ij}(\tau) = \delta_{i,k-1} \mathbb{\Sigma}'_{kl}(\tau) \delta_{l-1,j} \, , \;\;\; \mathbb{\Sigma}' = \bigoplus_{m=1}^n -\iw \sigma_y \frac{\Sigma_m(\tau)}{2} \, .
\label{massmatrix}
\ee
In turn, the boundary conditions can be written in the same basis as
\be
\Psi_a(t) = \left ( \bigoplus_{m=1}^n \iw \sigma_y \right ) \Psi_a(0) \, , 
\ee
where the vector of Majorana fermions is ordered as
\be 
\Psi_a^\dagger(\tau) = (\psi_a^{1,1}(\tau) \; \psi_a^{1,2}(\tau) \; \psi_a^{2,1}(\tau) \; \ldots \; \psi_a^{n,2}(\tau) )\;.
\ee
To exponentiate (\ref{massmatrix}), we work in the basis $\mathbf{v}(t)$ where the components of $\mathbf{v}(\tau)$ are cyclically rotated in relation to $\Psi_a(\tau)$, being $\mathbf{v}^\dagger(\tau) = (v^{n,2}(\tau) \; v^{1,1} (\tau) \; v^{1,2}(\tau) \; \ldots \; v^{n,1}(\tau))$, leading to
\be
\exp{{\int_0^t d\tau \mathbb{\Sigma}'(\tau)}} = \bigoplus_{m=1}^n \cos{\left ({\int_0^t d\tau \frac{\Sigma_m(\tau)}{2}} \right )} - \iw \sigma_y \sin{\left ({\int_0^t d\tau \frac{\Sigma_m(\tau)}{2}} \right )} \, .
\ee
In this basis, the matrix $B \in \text{GL}(2n,\mathbb{C})$ takes the form
\be
B = 
\begin{pmatrix}
0 & \ldots & \ldots & \ldots & \ldots & \ldots & \ldots & \ldots & -1 \\
0 & 0 & 1 & 0 & \ldots & \ldots & \ldots & \ldots & 0 \\
0 & -1 & 0 & 0 & 0 & \ldots & \ldots & \ldots & 0 \\
0 & 0 & 0 & 0 & 1 & 0 & \ldots & \ldots & 0 \\
0 & 0 & 0 & -1 & 0 & 0 & \ldots & \ldots & 0 \\
\vdots & & & & & \ddots & & & \vdots \\
0 & \ldots & \ldots & \ldots, & \ldots & \ldots & 0 & 1 & 0 \\
0 & \ldots & \ldots & \ldots & \ldots & \ldots & -1 & 0 & 0 \\
1 & 0 & \ldots & \ldots & \ldots & \ldots & \ldots & \ldots & 0 \\
\end{pmatrix}
\, ,
\ee
which has as inverse $B^{-1} = - B$. Therefore,
\be
P(x,\mathbb{\Sigma}) = \abs{x \mathbb{1}_{2n} + B \exp{{\int_0^t d\tau \mathbb{\Sigma}'(\tau)}}} \, .
\ee
This determinant has the form
\be
P(x,\mathbb{\Sigma}) = 
\begin{vmatrix}
x & c_2 & -s_2 & 0 & \ldots & \ldots & \ldots & \ldots & \ldots & 0 \\
-c_1 & x & 0 & \ldots & \ldots & \ldots & \ldots & \ldots & 0 & -s_1 \\
0 & 0 & x & c_3 & -s_3 & 0 & \ldots & \ldots & \ldots & 0 \\
\vdots & & & \ddots & & & & & &  \vdots \\
0 & \ldots & \ldots & 0 & x & c_k & -s_k & 0 &\ldots & 0 \\
0 & \ldots & 0 & -s_{k-1} & -c_{k-1} & x & 0 & \ldots & \ldots & 0 \\
\vdots & & & & & & \ddots & & & \vdots \\
\vdots & & & & & & & \ddots & & \vdots \\
-s_1 & 0 & \ldots & \ldots & \ldots & \ldots &  \ldots & 0 & x & c_1 \\
0 & \ldots & \ldots & \ldots & \ldots &\ldots & 0 & -s_n & -c_n & x
\end{vmatrix}
 \, ,
\ee
where $c_j \equiv \cos{ \left (\frac{\int_0^t d\tau \Sigma_j (\tau)}{2} \right )}$, $s_j \equiv \sin{ \left (\frac{\int_0^t d\tau \Sigma_j (\tau)}{2} \right )}$. Essentially, we have $x$ in the diagonal elements and $c_i \; -s_i$ to the right if the row is odd and $-s_p \; -c_p$ to the left if it is even. The subindices $i = 2, \ldots, n, 1$ and $p = 1, \ldots, n$ run over odd and even rows respectively. So we obtain $P(0,\mathbb{\Sigma}) = 1$ and $P(1,\mathbb{\Sigma}) = 2f_n(t)$, given by (\ref{poldetMajorana}).

\section{Sampling states from a time-independent Hamiltonian}
\label{app:H}

For a single drive operator ($K=1$), which we denote $\mathsf{h}\equiv \Op_1$, the time-evolution operator can be written as a time-independent evolution
\be\label{eq:unitarytimeindep}
U(t) = e^{-\iw t_g \mathsf{h}}\,,\qquad t_g = \int_0^t \text{d}t \,g(t)\,.
\ee
Sampling different realizations of the Brownian coupling $g(t)$ at fixed time $t$ amounts to considering the unitary \eqref{eq:unitarytimeindep}, and the associated state $\ket{e^{-\iw t_g \mathsf{h}}}$ for different values of the dimensionless ``Hamiltonian time'' $t_g$. As in a conventional Brownian motion we have $\overline{t_g}=0$ and $\overline{t_g^2}=Jt$. 

Assume that $\mathsf{h}$ has eigenvalues $\varepsilon_i$ and eigenstates $\ket{\varepsilon_i}$. The two-replica effective Hamiltonian can then be diagonalized as
\be\label{eq:effH2timeindep}
H_{\eff}^{a\bar{b}} = \frac{J}{2} \sum_{i,j=1}^{d^2} (\varepsilon_i - \varepsilon_j)^2 \ket{\varepsilon_i}\ket{\varepsilon_j} \bra{\varepsilon_i}\bra{\varepsilon_j}\,.
\ee
In this case, the moments take the form
\be\label{eq:momentZntimeindep}
Z_n(t) = \dfrac{1}{d^n}\sum_{i_1,\ldots ,i_n = 1}^d \exp(-t\frac{J}{2} \left[(\varepsilon_{i_1}-\varepsilon_{i_2})^2 +…+(\varepsilon_{i_n}-\varepsilon_{i_1})^2 \right]) = \dfrac{1}{d^n} \text{Tr} (K_t^n)\,,
\ee
where $K_t$ has matrix elements
\be
(K_t)_{ij}= e^{-t \frac{J}{2}(\varepsilon_i -\varepsilon_j)^2}\,,\qquad (i,j=1,…,d)\,.
\ee

To compute the dimension spanned by the family $\mathsf{F}_\Omega^t =\{\ket{U_1(t)},...,\ket{U_\Omega(t)}\}$ drawn from the ensemble at fixed Brownian time $t$, we use \eqref{eq:transitiondimension} and take the limit $n\rightarrow 0$ of \eqref{eq:momentZntimeindep}. The maximum dimension is set by the rank of this matrix,
\be
d_\Omega(t) = \min \lbrace \Omega, \text{rank}(K_t)\rbrace \,.
\ee
If there are no degeneracies, $\varepsilon_i \neq \varepsilon_j$ for $i\neq j$, as is expected for quantum chaotic Hamiltonians, this implies
\be
d_\Omega(t) = \begin{cases}
1\, & t=0\,,\\[2pt]
\min\{\Omega,d\}\, & t>0\,.
\end{cases}
\ee
Here we have used that $K_t$ has full rank for $t>0$ and unit rank at $t=0$. This follows from its positive-definiteness, since
\be
\bra{\psi} K_t \ket{\psi}=\sum_{i,j} \psi_i^* \psi_j e^{-t\frac{J}{2}(\varepsilon_i-\varepsilon_j)^2}
=\int_{-\infty}^\infty \dfrac{\text{d}\omega}{\sqrt{2\pi tJ}}\,e^{-{\omega^2}/{2tJ}}\,\Bigl|\sum_i \psi_i e^{\iw \omega \varepsilon_i}\Bigr|^2\,>0\,,
\ee
where in the second step we have introduced the ``Hubbard–Stratonovich'' variable $\omega$. If the spectrum is non-degenerate, the integrand is always positive for a non-zero vector.

This improves the related considerations in \cite{Magan:2024nkr,Banerjee:2024fmh}, where time evolution of TFDs was used to construct bases of black hole microstates. There the argument needed large times and large time separations. Here the Brownian couplings demonstrate the basis generation is immediate.

The reason the dimension is $d$ (and not $d^2$) is that the states
\be
|e^{-\iw t_g H}\rangle = \dfrac{1}{\sqrt{d}} \sum_{i}e^{-\iw t_g \varepsilon_i} \ket{\varepsilon_i}\ket{\varepsilon_i^*},
\ee
all lie within the diagonal subspace, $(\mathsf{h}^{a}-\mathsf{h}^{\bar{b}}\,^*)|e^{-\iw t_g \mathsf{h}}\rangle=0$. Moreover, if the spectrum of $\mathsf{h}$ had degeneracies, the states could be restricted to a proper subspace, which would be signaled by the rank of $K_t$.

Finally, we note that the gap of the effective Hamiltonian \eqref{eq:effH2timeindep} in this case is exponentially small, set by the minimum eigenvalue separation in the spectrum of $\mathsf{h}$. Consequently, one must wait for parametrically longer timescales than those in Section \ref{sec:overlapgeneralexp} for $Z_2(t)$ to decay sufficiently, if one wants the different draws of the family to be distinguishable.

\section{Counting in a toy model of the caterpillars}
\label{app:toymodel}

In this appendix we seek to perform the state counting of the toy models of black hole caterpillars described in Sec. \ref{sec:semiclgeom}.  To this end we start by constructing a toy model of the eternal traversable wormhole that naturally appears in the interior of caterpillars using thin shell operators. We allow a general scenario in which the local temperature inside the wormhole varies as $\beta(x)$, where $x$ parametrizes the position along the ER bridge. We shall start from a sequence of $n$ operator insertions
\be
S_n(\beta(x_i),m(x_i)) = \mathcal{O}_{m_1} e^{-\tilde{\beta}_1 H} \ldots \mathcal{O}_{m_{n-1}} e^{-\tilde{\beta}_{n-1} H} \mathcal{O}_{m_n} \; ,
\ee
along with periods of Euclidean evolution. We then take the limits $n \to \infty$ while $m_i \sim \frac{1}{\sqrt{n}} \to 0$. The idea is that for any $\beta(x)$, we can introduce a density of thin shell operators $m(x)$ that support such traversable wormhole. We will make this discussion in $d=3$ spacetime, where analytical expressions are easily obtained, following e.g. \cite{Balasubramanian:2022gmo}, but the construction is valid for any dimension. Given that the relation between the temperature and the mass of the black hole is
\be
M(x) = \frac{1}{2G} \frac{\pi^2}{\beta^2(x)} \, ,
\ee
the Israel junction conditions, see \cite{Climent:2024trz}, impose
\be
\frac{dm^2(x)}{dx} \geq \frac{\abs{M'(x)}}{2G} \;.
\ee
We can choose any shell mass density over this lower bound. Fixing $m(x)$, the consistency relation for the preparation temperature $\tilde{\beta}(x)$ is
\be
\frac{d\tilde{\beta}(x)}{dx} = \frac{1}{2M(x)} \frac{dm(x)}{dx} \sqrt{1-\frac{M'(x)}{2G \frac{dm^2(x)}{dx}}} \, ,
\ee
which, given $M(x)$ and $m(x)$, can be integrated to a particular function determining the preparation temperatures along the ER bridge.

The previous two simple relations show we can model a traversable wormhole of arbitrary interior local temperatures using thin shell operators. Nevertheless, if we want to construct the corresponding microstates (by adding the external geometries) and perform the counting to obtain the Bekenstein-Hawking entropy, it will be challenging to find the higher-moment classical solutions. Equivalently, there is a transition regime between the interior caterpillar and exterior geometry, both for microstates and for the wormholes computing the statistics of overlaps, which is typically involved. See for example the caterpillar in JT gravity described in \cite{Magan:2024aet}, whose transition regime is described by a bounce solution, while the interior caterpillar and exterior geometry asymptote to known simple solutions.

To circumvent this problem, we can insert heavy-shell operators $\mathcal{O}_m$ in both extremes of the ER bridge, and then add the external black holes of temperature $\beta$. We depict this in Fig. \ref{fig:dibujobeta_beta4}. The external ultra-massive shells allow a simple gluing with any exterior black hole geometry, since such high mass will always satisfy the constraints ($m^2>\textrm{constant}\,\Delta M$) coming from the Israel junction conditions. 

By doing so, we can easily compute the replicated partition function and obtain 
\be\label{repuniold}
Z_n = \frac{Z(n\beta)^2}{Z(\beta)^{2n}} \, ,
\ee
coinciding with \cite{Balasubramanian:2022gmo,Climent:2024trz}, from which black hole entropy can be derived. This expression arises because the heavy shell operator at the end of the ER bridge not only allow the gluing with the exterior geometry, but also conveniently factorizes out the action corresponding to the ER bridge and the exterior. When calculating the normalized replicated partition functions, the action corresponding to the ER bridge cancels out with the normalization of the states, see  \cite{Balasubramanian:2022gmo} for a detailed description, leaving only expression \ref{repuniold}.

\begin{figure}
    \centering
    \includegraphics[width=1\linewidth]{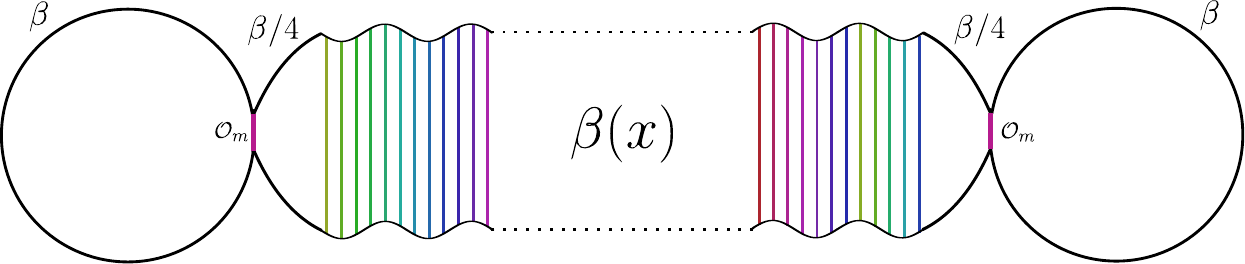}
    \caption{Toy model of black hole caterpillars valid in arbitrary dimensions. The interior is a long wormhole with an arbitrary position dependent local proper temperature, supported by shell insertions at the Euclidean boundaries. The ultra-heavy shells $\mathcal{O}_m$ at each extreme of the ER bridge allows a smooth gluing.}
    \label{fig:dibujobeta_beta4}
\end{figure}

\bibliographystyle{ourbst.bst}
\bibliography{bibliographydraft.bib}

\end{document}